\documentclass[12pt]{article}

\usepackage[margin=1.25in]{geometry}
\usepackage{setspace}
\usepackage[T1]{fontenc}
\usepackage{newtxtext}
\usepackage{mathpazo}
\usepackage{microtype}
\usepackage{enumitem}

\usepackage{amsmath,amssymb,amsthm,mathtools,bm,mathrsfs}
\usepackage{dsfont}
\usepackage{centernot}

\newcommand{\tightdisplay}[1]{%
{\setlength{\abovedisplayskip}{5pt}%
 \setlength{\belowdisplayskip}{5pt}%
 \setlength{\abovedisplayshortskip}{3pt}%
 \setlength{\belowdisplayshortskip}{3pt}%
\[
#1
\]}%
\par\noindent\ignorespaces
}

\newcommand{\tightalign}[1]{%
{\setlength{\abovedisplayskip}{5pt}%
 \setlength{\belowdisplayskip}{5pt}%
 \setlength{\abovedisplayshortskip}{3pt}%
 \setlength{\belowdisplayshortskip}{3pt}%
 \setlength{\jot}{2pt}%
\begin{align*}
#1
\end{align*}}%
\par\noindent\ignorespaces
}

\usepackage{graphicx}
\usepackage{booktabs,threeparttable}
\usepackage{caption}
\usepackage{subcaption}
\usepackage{algorithm}
\usepackage{algpseudocode}
\usepackage{float}

\usepackage{tikz}
\usetikzlibrary{trees, positioning}
\usepackage{forest}

\usepackage{xcolor}

\definecolor{treatgreen}{RGB}{200, 230, 201}
\definecolor{notreatred}{RGB}{255, 205, 210}
\definecolor{mycitecolor}{RGB}{0,102,204}
\definecolor{myurlcolor}{RGB}{0,102,102}
\definecolor{mylinkcolor}{RGB}{0,0,0}

\usepackage[round,authoryear]{natbib}

\usepackage[
    colorlinks = true,
    linkcolor  = mycitecolor,
    citecolor  = mycitecolor,
    urlcolor   = myurlcolor
]{hyperref}

\newcommand{\mycomment}[1]{}
\newcommand{\indep}{\perp\!\!\!\perp}

\newcommand{\R}{\mathbb{R}}

\newcommand{\F}{\mathcal{F}}

\newcommand{\D}{\mathcal{D}}
\newcommand{\G}{\mathcal{G}}
\newcommand{\Exp}{\mathbb{E}}
\newcommand{\Prob}{\mathbb{P}}
\newcommand{\Var}{\mathbb{V}}

\DeclareMathOperator{\sign}{sign}

\forestset{
  ewmtree/.style = {
    for tree = {
      draw,
      rounded corners = 4pt,
      align           = center,
      font            = \small,
      edge            = {-latex, thick},
      l sep           = 1.5cm,
      s sep           = 0.5cm,
      inner sep       = 5pt,
      minimum width   = 2.6cm,
      minimum height  = 0.7cm,
    }
  }
}

\newtheoremstyle{assumpstyle}
  {\topsep}
  {\topsep}
  {\normalfont}
  {}
  {\bfseries}
  {}
  {\newline}
  {}

\theoremstyle{plain}
\newtheorem{assumption}{Assumption}
\newtheorem{theorem}{Theorem}
\newtheorem{lemma}{Lemma}
\newtheorem{definition}{Definition}
\newtheorem{proposition}{Proposition}
\newtheorem{remark}{Remark}
\newtheorem{corollary}{Corollary}
\newtheorem{example}{Example}

\makeatletter
\renewenvironment{proof}[1][\proofname]{%
  \par\pushQED{\qed}%
  \normalfont
  \topsep6\p@\@plus6\p@\relax
  \trivlist
  \item[\hskip\labelsep{\normalfont\scshape #1\@addpunct{.}}]\ignorespaces
}{%
  \popQED\endtrivlist\@endpefalse
}
\makeatother

\title{\Large Nonparametric Bayesian Policy Learning 
\thanks{I am grateful to Chris Walker for his unwavering guidance and support throughout this project and for many insightful discussions. I also benefited from helpful comments from Matt Masten, Arnaud Maurel, Adam Rosen, and participants at the Duke Econometrics Seminar. I am especially grateful to my cat, Coconut, for his unwavering emotional support. All remaining errors are my own.}}
\author{%
  Haonan Ye\thanks{Department of Economics, Duke University. Email: \texttt{haonan.ye@duke.edu}}
}
\date{\today}

\begin{document}

\maketitle
\thispagestyle{empty}

\begin{abstract}
\noindent

I propose \textit{Nonparametric Bayesian Policy Learning} (NBPL) as a framework for uncertainty-aware treatment choice. 
The key observation is that welfare is fully determined by a reduced-form distribution, so uncertainty about optimal policies entirely reflects uncertainty about this distribution.
NBPL places a Dirichlet process prior on the reduced-form distribution and uses the resulting posterior for both traditional policy choice and inference on optimal welfare and optimal treatment assignments.
NBPL is computationally tractable: the default Bayesian-bootstrap implementation requires only exponential reweighting of the observations.
I establish two theoretical properties. First, posterior welfare regret converges at the minimax-optimal rate, providing a novel policy-relevant analogue of posterior contraction rates. Second, posterior model selection across
policy classes is consistent.
I relate NBPL to existing policy learning approaches and illustrate using two empirical applications: the JTPA experiment and the bednet subsidy experiment. In both applications, decision-tree rules tend to yield higher optimal welfare than linear rules.

\end{abstract}

\bigskip
\noindent\textbf{Keywords:} Bayesian nonparametrics, Dirichlet process, policy learning, posterior model selection, posterior regret contraction, statistical treatment rules. \\

\clearpage
\setcounter{page}{1}    


\section{Introduction}

\subsection{Overview}

In many economic settings, policymakers must allocate scarce interventions across heterogeneous populations.
This motivates an important literature on statistical treatment rules, or policy learning, pioneered by \citet{manski2004statistical}, which studies how to design treatment rules using observable individual characteristics to maximize social welfare;\footnote{Throughout, I use the terms ``treatment rule'', ``treatment assignments'', and ``policy'' interchangeably.} 
see, for example, \citet{hirano2009asymptotics}, \cite{stoye2009minimax}, \cite{chamberlain2011bayesian}, \cite{bhattacharya2012inferring}, \citet{kitagawa2018should}, and \citet{athey2021policy}, among many others.
A growing empirical literature shows that such individualized rules can deliver substantial welfare gains across a wide range of settings, including labor market and education programs \citep{athey2025machine, goller2025active}, anti-poverty interventions in developing countries \citep{haushofer2025targeting, higbee2025policy}, and health and precision medicine \citep{kosorok2019precision, inoue2023machine, athey2025targeted}.

This paper proposes the \textit{Nonparametric Bayesian Policy Learning} (NBPL) framework. The key idea is that welfare is fully determined by a reduced-form data distribution, so uncertainty over optimal policies entirely reflects uncertainty about this distribution. 
This blends seamlessly with a Bayesian approach: the researcher places a nonparametric prior on the reduced-form distribution and uses the resulting posterior for both traditional policy choice analysis and inference on optimal welfare and optimal treatment assignments. 
For traditional policy choice, my framework leads to a class of prior-regularized treatment rules, which I call 
\textit{Posterior Expected Welfare Maximization} (PEWM).
Simultaneously, because the posterior induces distributions over both optimal welfare and optimal treatment assignments, one can apply Bayesian statistical decision theory to obtain optimal point estimates or uncertainty quantification.
The core theoretical contribution is a minimax-optimal convergence rate for posterior welfare regret (Theorem~\ref{Theorem1}). It provides a novel policy-relevant analogue of posterior contraction rates, a notion that has played an important role in the theoretical statistics literature since the seminal work of \cite{ghosal2000convergence} and \cite{shen2001rates}; see also \citet{ghosal2017fundamentals} and the references therein.
As a further distinctive feature, NBPL provides a consistent posterior model-selection procedure for comparing competing, possibly non-nested policy classes (Theorem~\ref{Theorem2}).

NBPL has several appealing features.
First, it incorporates parameter uncertainty directly into the decision problem through the posterior; ignoring such uncertainty can lead to suboptimal decisions \citep[e.g.,][]{dehejia2005program, christensen2026optimal, moon2026optimal}.
Second, to the best of my knowledge, NBPL provides the first unified inferential framework for optimal welfare, optimal treatment assignments, and policy-class selection. 
Policy-class selection is central to policy learning because it determines the feasible set over which welfare is optimized, thereby shaping both optimal treatment assignments and the resulting policy recommendations
\citep[e.g.,][]{mbakop2021model}.
NBPL provides a consistent posterior diagnostic for comparing policy classes (Theorem~\ref{Theorem2}). As a byproduct, the same procedure can be used to rank fixed policies \citep[e.g.,][]{kasy2016partial,yang2022offline}.
Third, NBPL is flexible: the prior for the reduced-form data distribution is a Dirichlet process \citep{ferguson1973bayesian}, which offers a level of flexibility similar to that of frequentist nonparametric approaches to policy learning \citep[e.g., the important Empirical Welfare Maximization (EWM) framework of][]{kitagawa2018should}. Fourth, posterior computation is highly tractable. The default Bayesian-bootstrap implementation \citep{rubin1981bayesian} requires only exponential reweighting of the observations, can reuse off-the-shelf policy learning algorithms, and is naturally parallelizable.

I establish three theoretical properties of NBPL. First, posterior welfare regret converges to zero at the minimax-optimal rate (Theorem~\ref{Theorem1}), matching the EWM regret rate of \citet{kitagawa2018should}. Thus, incorporating parameter uncertainty does not slow the rate of learning. To the best of my knowledge, this provides the first policy-relevant analogue of posterior contraction rates, a central notion in the theoretical statistics literature over the past two decades; see \cite{ghosal2017fundamentals} and the references therein.
Theorems~\ref{Theorem:C1}--\ref{Theorem:C3} in Supplemental Appendix~\ref{supplement:extensions} extend Theorem~\ref{Theorem1} to settings with multi-valued treatments, demeaned outcomes, and capacity constraints, respectively.
Second, I prove pointwise consistency of the posterior model-selection procedure (Theorem~\ref{Theorem2}). This result justifies using posterior probabilities to compare competing, possibly non-nested policy classes.
Third, I relate NBPL to existing policy learning approaches 
\citep{manski2004statistical, chamberlain2011bayesian, kitagawa2018should}.
I show that EWM arises as the limiting case of PEWM as the prior mass vanishes (Proposition~\ref{Proposition1}). I then establish uniform minimax optimality for PEWM under the welfare-regret criterion of \citet{kitagawa2018should}, thereby recovering the uniform minimax optimality of EWM as a special case (Theorem~\ref{Theorem3}).
I further prove an asymptotic equivalence between NBPL and these approaches under discrete covariates and unconstrained policy classes (Proposition~\ref{Proposition2}).

I illustrate NBPL in two empirical applications: the Job Training Partnership Act (JTPA) experiment studied by \citet{kitagawa2018should} and the anti-malaria bednet subsidy experiment analyzed by
\citet{bhattacharya2012inferring}. 
Both applications yield two main findings. First, EWM welfare estimates tend to lie below the posterior medians
of optimal welfare under NBPL. Second, the
posterior model-selection procedure provides strong evidence that decision-tree rules attain higher optimal welfare than linear rules. 
A Monte Carlo simulation calibrated to the data from \citet{bhattacharya2012inferring} in Appendix~\ref{Appendix:simulation} shows that NBPL and EWM perform comparably under the welfare-regret criterion in
Section~\ref{theory:regret}, and that the model-selection procedure consistently favors the decision-tree class.

\subsection{Literature}

This paper connects to three literatures in economics. The first is the literature on statistical treatment rules, or policy learning. 
A leading approach is Empirical Welfare Maximization
\citep[EWM;][]{kitagawa2018should}, which selects treatment rules by plugging in the empirical distribution in place of the true population distribution. In contrast, NBPL explicitly incorporates parameter uncertainty by placing a Dirichlet process prior on the reduced-form data distribution. An important property of EWM is that it attains the uniformly minimax-optimal convergence rate for welfare regret. I show that posterior welfare regret under NBPL contracts at the same rate. I further establish uniform minimax optimality of PEWM under the welfare-regret criterion of \citet{kitagawa2018should}, thereby recovering the uniform minimax optimality of EWM as a special case.
Within this literature, NBPL is also closely related in spirit to the limited-information Bayesian framework of \citet{chamberlain2011bayesian}. Chamberlain places a structural prior on latent potential-outcome distributions in a setting restricted to discrete covariates and outcomes.
In contrast, NBPL takes a reduced-form route: it places the prior directly on the reduced-form distribution, accommodates continuous variables and constrained policy classes, and provides posterior regret guarantees.
NBPL is also related to the broader literature on uncertainty-aware decision-making \cite[e.g.,][]{dehejia2005program, chernozhukov2025policy, christensen2026optimal, moon2026optimal}.
The policy-learning literature is extensive; other contributions include
\citet{hirano2009asymptotics}, \citet{stoye2009minimax}, \cite{bhattacharya2012inferring}, \cite{kallus2018confounding, kallus2018policy}, \citet{athey2021policy}, \citet{kitagawa2021equality}, \citet{mbakop2021model}, \citet{zhou2023offline}, \citet{adjaho2025externally}, \citet{kitagawa2025leave}, \citet{viviano2025policy}, \citet{ida2026choosing}, and \citet{montiel2026decision}.

Another related literature develops Bayesian inference through reduced-form parameters.
The motivation is that many economic models admit a known mapping from an identifiable reduced-form parameter to a structural parameter of interest. 
One can therefore place a prior on the reduced-form parameter and use the induced posterior for inference on the structural parameter.
Given this insight, this literature then divides into two strands according to whether the structural parameter is point-identified. 
NBPL spans both cases: optimal welfare is point-identified, whereas optimal treatment assignments may be partially identified.
For point-identified structural parameters, related reduced-form Bayesian approaches include \citet{chamberlain2003nonparametric} and \citet{walker2026semiparametric}. NBPL is also connected to the loss-likelihood bootstrap of \citet{lyddon2019general}, with the loss given by negative welfare. 
For partially identified structural parameters, related approaches include \citet{kline2016bayesian} and \citet{florens2021revisiting}. I draw on this
perspective for inference on optimal treatment assignments.
\citet{florens2021revisiting} are particularly close in spirit because they also place a Dirichlet process prior on the population distribution, although they do not study treatment choice. 
Additional related contributions include \citet{norets2014semiparametric}, \citet{liao2019bayesian}, \citet{ditraglia2025bayesian}, \citet{andrews2026bayesian}, \citet{li2026empirical}, and \citet{tamer2026nonparametric}.

A final related literature studies inference for optimal welfare and optimal treatment assignments. Inference on optimal welfare is important because it quantifies the maximum gains from policy learning, but it is challenging because optimal welfare is an irregular parameter; see, for example, the impossibility results in \citet{hirano2012impossibility}. \citet{luedtke2016statistical}, \citet{ponomarev2025lower}, and \citet{whitehouse2025inference} develop confidence intervals, lower confidence bands, and smoothing-based inference procedures for this parameter, respectively.
Inference on optimal treatment assignments is also crucial because it quantifies the evidence in favor of treating units selected by an optimal rule, but it has received comparatively less attention; two frequentist examples are \citet{rai2018statistical} and \citet{armstrong2023inference}.
Relatedly, \citet{kitagawa2023stochastic} develop a quasi-Bayesian inference framework for linear treatment rules, but place the prior directly on the space of treatment rules rather than on the data-generating distribution.
The NBPL framework provides a Bayesian complement to this literature.

\subsection{Outline}

The paper is organized as follows. Section~\ref{sec:NBPL} introduces the NBPL framework and its implementation, Section~\ref{sec:empirical} presents two empirical applications, Section~\ref{sec:theory} develops the theoretical results, and Section~\ref{sec:conclusion} concludes. Appendix~\ref{Appendix:simulation} reports a Monte Carlo simulation calibrated to the data of \citet{bhattacharya2012inferring}, and Appendix~\ref{Appendix:proofs} contains the main proofs. 
Supplemental Appendices~\ref{supplement:extensions}--\ref{supplement:proofs of lemmas} contain extensions and proofs of auxiliary lemmas.


\section{Bayesian Inference Framework}
\label{sec:NBPL}

\subsection{Setup: Welfare Maximization}
\label{NBPL:setup}

A decision-maker (DM) seeks to choose a treatment rule in order to maximize expected welfare in a target population. Individual $i$ in the target population is characterized by the random vector $(Y_i(1), Y_i(0), X_i) \sim Q_0^{\star}$, where $Y_i(1), Y_i(0) \in \mathcal{Y} \subseteq \R$ denote the potential outcomes under treatment and control, respectively, and $X_i \in \mathcal{X} \subseteq \R^{d_x}$ is a vector of baseline covariates. 
The DM's problem is to choose a treatment rule $G \subseteq \mathcal{X}$ that maximizes utilitarian welfare,\footnote{Hereafter, I normalize welfare relative to the status quo of assigning no treatment, so that  $W(Q_0^{\star}; G) = \Exp_{Q_0^{\star}}[(Y(1)-Y(0)) \mathbf{1}\{X \in G \}]$. This normalization leaves the optimal treatment assignment unchanged, but welfare values should henceforth be interpreted as gains relative to the status quo.}
\tightdisplay{
    W(Q_0^{\star}; G) \coloneqq \Exp_{Q_0^{\star}}[Y(1) \mathbf{1}\{X \in G \} + Y(0) \mathbf{1}\{X \notin G \}].
}
The DM chooses a treatment rule from a feasible \emph{policy class} $\mathcal{G}\subseteq 2^{\mathcal X}$, so treatment assignment may depend only on observable individual characteristics $X$.

Suppose the DM observes a random sample $\{ \widetilde{D}_i\}_{i=1}^n$ drawn from an experimental population. 
Individual $i$ in this population is characterized by the random vector $(Y_i(1), Y_i(0), X_i) \sim P_0^{\star}$. 
The observed data are $\widetilde{D}_i \coloneqq (Y_i, T_i, X_i)$, where $T_i \in \{0,1\}$ denotes the realized treatment assignment and $Y_i = Y_i(1)T_i + Y_i(0)(1-T_i)$ denotes the realized outcome. 
Let $\mathbf P_0^{\star}$ denote the true distribution of $(Y(1), Y(0), T, X)$ and define the propensity score $e(x) \coloneqq \mathbf P_0^{\star}(T =1 \mid X = x)$.

\begin{assumption}
\label{Assumption:DGP}

\begin{enumerate}[label=(\alph*)]
    \item \emph{(External Validity)} The target and  experimental populations share the same distribution of potential outcomes and covariates: $Q_0^{\star} = P_0^{\star}$.
    
    \item \emph{(Unconfoundedness)} $(Y(1), Y(0)) \indep T \mid X$.

    \item \emph{(Strict Overlap)} There exists $\kappa \in (0, 1/2)$ such that the propensity score satisfies $e(x) \in [\kappa, 1-\kappa]$ for $\mathbf P_0^{\star}$-almost every $x \in \mathcal{X}$. The propensity score is known.
    
    \item \emph{(Outcome Moments)} The realized outcome has finite $(2+\delta)$-th moment under $\mathbf P_0^{\star}$:
    $\Exp_{\mathbf P_0^{\star}}|Y|^{2+\delta} < \infty$ for some $\delta > 0$.
    
\end{enumerate}

\end{assumption}

Assumption~\ref{Assumption:DGP} imposes standard conditions on
$\mathbf P_0^{\star}$.
Assumption~\ref{Assumption:DGP}(a) ensures that the experimental data are informative about the target population distribution; in particular, it holds when the sample is drawn directly from the target population.
Assumptions~\ref{Assumption:DGP}(b) and (c) are standard in causal inference.
The assumption of a known propensity score is a standard benchmark in policy learning \citep[e.g.,][]{kitagawa2018should, mbakop2021model, kitagawa2023stochastic} and is not restrictive in many applications: it holds by design in randomized controlled trials, including the two empirical applications considered in this paper \citep{bhattacharya2012inferring, kitagawa2018should}, and can sometimes be reasonable in quasi-experimental settings \citep[e.g.,][]{angrist1990lifetime, angrist1999using, katz2001moving, abdulkadirouglu2011accountability, abdulkadirouglu2022breaking}.
Assumption~\ref{Assumption:DGP}(d) is a technical condition used to control empirical-process terms in the proofs.\footnote{It relaxes the bounded-outcome condition in Assumption~2.1(BO) of \citet{kitagawa2018should} and Assumption~3 of \citet{kitagawa2023stochastic}.}

Under Assumption~\ref{Assumption:DGP}, welfare in the target population under any treatment rule $G$ is point-identified as \citep[see e.g.,][]{kitagawa2018should}:
\tightdisplay{
    W(P_0; G) = \mathbb{E}_{P_0} \left[ \left( \frac{YT}{e(X)} - \frac{Y(1-T)}{1-e(X)} \right) \mathbf{1}\{X \in G\}  \right], \tag{2.1} \label{main:eq1}
}
where $P_0$ is the \textit{reduced-form distribution} of
$D \coloneqq (Z_1,Z_0,X)$ induced by $\mathbf P_0^\star$, with $Z_1\coloneqq YT/e(X)$ and $Z_0\coloneqq Y(1-T)/[1-e(X)]$; see Remark~\ref{Remark1} for further discussion.
The key observation is that, for any given $G$, population welfare depends on $\mathbf P_0^\star$ only through $P_0$.
Hereafter, let $\mathcal D_n \coloneqq \{D_i\}_{i=1}^n$ denote the observed reduced-form sample.

Define the set of optimal treatment rules under $P_0$ by
\tightdisplay{
    \mathcal{G}^{\star}(P_0) \coloneqq \arg\max_{G \in \mathcal{G}} \, W(P_0; G).
    \tag{2.2} \label{main:eq2}
}
An optimal treatment rule $G^\star(P_0)$ is any element of
$\mathcal G^\star(P_0)$. Following \citet{kline2016bayesian}, I treat the set $\mathcal G^\star(P_0)$, rather than a particular selection $G^\star(P_0)$, as the target of inference.
The corresponding optimal welfare is
\tightdisplay{
    W_{\mathcal{G}}^{\star}(P_0) \coloneqq \sup_{G \in \mathcal{G}} \, W(P_0; G).
    \tag{2.3} \label{main:eq3}
}
Supplemental Appendix~\ref{supplement:extensions} discusses four extensions: alternative welfare criteria, multi-valued discrete treatments, demeaned outcomes, and capacity constraints.

\subsection{Nonparametric Bayesian Policy Learning}
\label{NBPL:framework}

\subsubsection{Inference Framework}
\label{NBPL:inference framework}

The key challenge in the welfare maximization problem~\eqref{main:eq2} is that the true reduced-form distribution $P_0$ is unknown to the DM. For any given policy class $\mathcal{G}$, both the optimal welfare and the set of optimal treatment rules, $(W_{\mathcal{G}}^{\star}(P_0), \mathcal G^{\star}(P_0))$, are \textit{deterministic} functions of $P_0$ only. Consequently, uncertainty about these welfare-relevant objects is driven entirely by uncertainty about $P_0$.

To account for this uncertainty, I assume the DM is Bayesian and places a nonparametric prior on the reduced-form distribution: $P \sim \Pi$. Specifically, $\Pi$ is taken to be a Dirichlet process \citep[DP;][]{ferguson1973bayesian}, a standard nonparametric prior on probability distributions.

\begin{definition}[Dirichlet Process] 
\label{def:DP}

A random probability measure $P$ on the measurable space $(\mathcal{D}, \mathscr{D})$\footnote{In the context of NBPL, $(\mathcal{D}, \mathscr{D})$ denotes the measurable space on which the reduced-form data are defined. In particular, $\mathcal{D} = \mathcal{Z}_1 \times \mathcal{Z}_0 \times \mathcal{X}$, where $\mathcal{Z}_1$ and $\mathcal{Z}_0$ are the supports of the reduced-form variables $Z_1$ and $Z_0$, respectively, and $\mathcal{X}$ is the support of the covariates $X$. The $\sigma$-algebra $\mathscr D$ is the corresponding product $\sigma$-algebra. See Remark~\ref{Remark1}.}
is said to follow a Dirichlet process with base measure $\alpha$, denoted $\mathrm{DP}(\alpha)$, if for every finite measurable partition $(A_1, \ldots, A_k)$ of $\mathcal{D}$, 
the random probability vector satisfies $(P(A_1), \ldots, P(A_k)) \sim \mathrm{Dir}(\alpha(A_1), \ldots, \alpha(A_k))$, where $\mathrm{Dir}(\cdot)$ denotes the Dirichlet distribution on the $k$-dimensional unit simplex $\Delta_k = \{x \in \R^k: x_i \geq 0, \sum_{i=1}^k x_i = 1\}$.

\end{definition}

I impose the following mild integrability condition on the base measure $\alpha$, which is required only to control the prior component of the posterior in the proofs.

\begin{assumption}
\label{Assumption:base measure}

The base measure $\alpha$ on $\mathcal D\coloneqq \mathcal Z_1\times\mathcal Z_0\times\mathcal X$ has finite total mass $|\alpha|\coloneqq \alpha(\mathcal D)<\infty$ and satisfies $\int_{\mathcal D}|z_j|\,d\alpha(z_1,z_0,x)<\infty$ for $j=0,1$.
\end{assumption}

The Dirichlet process has two appealing features. 
First, it is flexible: it imposes no parametric restrictions on the underlying distribution and has large support under the weak topology.\footnote{The weak topology is the topology of weak convergence of probability measures. Let $\mathcal P$ denote the space of Borel probability measures on $\mathcal D$, equipped with the weak topology, and let $\mathscr P$ be the corresponding Borel $\sigma$-algebra. The \textit{weak support} of $\mathrm{DP}(\alpha)$ is $\mathrm{supp}_w\big(\mathrm{DP}(\alpha)\big) \coloneqq \{ P \in \mathcal{P} : \mathrm{supp}(P) \subseteq \mathrm{supp}(\alpha) \}$. Thus, if $\alpha$ has full support on $\mathcal{D}$, then $\mathrm{DP}(\alpha)$ assigns positive prior mass to every weak neighborhood of every $P \in \mathcal{P}$.} 
In particular, if the base measure $\alpha$ has full support, then $\mathrm{DP}(\alpha)$ assigns positive prior mass to every weak neighborhood of any probability distribution on $\mathcal D$.
Second, the Dirichlet process is conjugate: if
$P\sim\mathrm{DP}(\alpha)$, then the posterior remains a Dirichlet process, $P \mid \mathcal{D}_n \sim \mathrm{DP}(\alpha + n \mathbb{P}_n)$, where $\mathbb{P}_n \coloneqq n^{-1} \sum_{i=1}^n \delta_{D_i}$ denotes the empirical measure.\footnote{$\delta_x$ denotes the Dirac measure at $x$, which places probability mass one at the point $x$.}
This conjugacy makes posterior computation highly tractable, mitigating the concern that inference with nonparametric priors may be computationally challenging.

The posterior for $P$ induces marginal posterior distributions for optimal welfare and optimal treatment assignments.
The posterior distribution of optimal welfare is a probability measure on $\mathbb R$. Under suitable measurability conditions, it is the pushforward of the posterior for $P$ under the map $P\mapsto W_{\mathcal G}^\star(P)$.\footnote{Given measurable spaces $(\mathcal P,\mathscr P)$ and $(\mathcal T,\mathscr T)$, a measurable map
$T:(\mathcal P,\mathscr P)\to(\mathcal T,\mathscr T)$, and a
probability measure $\mu$ on $(\mathcal P,\mathscr P)$, the
pushforward measure $\mu\circ T^{-1}$ on $(\mathcal T,\mathscr T)$ is defined by $(\mu\circ T^{-1})(A)=\mu(T^{-1}(A))$, $A\in\mathscr T$. Assume that the map
$W_{\mathcal G}^{\star}: (\mathcal P,\mathscr P)
\to(\mathbb R,\mathcal B(\mathbb R))$ is measurable. Then, for every $A\in\mathcal B(\mathbb R)$, $\Pi\!\left( W_{\mathcal G}^{\star}\in A \mid \D_n \right) = \Pi\!\left( P: W_{\mathcal G}^{\star}(P)\in A \mid \D_n \right)$.
}
The posterior distribution of optimal treatment assignments is more subtle.
Because $\mathcal G^{\star}(P)$ may contain multiple optimal treatment rules, the posterior is a distribution over sets of rules. Under suitable measurability conditions, this set-valued posterior can be characterized through its posterior capacity functional \citep[see, e.g.,][]{florens2021revisiting}.\footnote{Formally, let $\Pi(P \in \cdot\mid\D_n)$ denote the posterior law of $P$
on the measurable space $(\mathcal P, \mathscr P)$ of reduced-form distributions.
Suppose that the policy class $\mathcal G$ is a locally compact second countable Hausdorff space.
Let $\mathcal C$ denote the collection of closed subsets of $\mathcal G$, and let $\mathcal K$ denote the collection of compact subsets of $\mathcal G$. 
Suppose that the map $\mathcal G^{\star}: \mathcal{P} \to \mathcal C$ is a \emph{random closed set}, in the sense that, for every compact $K \in \mathcal K$, $\{P:\mathcal G^{\star}(P)\cap K\neq \emptyset \} \in \mathscr{P}$. 
Then the posterior law of $\mathcal G^\star(P)$ is uniquely determined by its \emph{posterior capacity functional} $C_{\mathcal G^\star\mid\D_n}: \mathcal K \to[0,1]$, defined by $C_{\mathcal G^\star\mid\D_n}(K) \coloneqq \Pi\!\left( \mathcal G^{\star}(P) \cap K \neq \emptyset \mid\D_n \right)$ for $K \in \mathcal{K}$.}
Taken together, these marginal posteriors provide a full theory of inference for optimal welfare and optimal treatment assignments.

\begin{remark}[Reduced-form Data]
\label{Remark1}

This paper adopts a limited-information Bayesian approach \citep{sims2006example}. Specifically, I assume that the DM places a nonparametric prior on the distribution of the reduced-form vector $D \coloneqq (Z_1, Z_0, X)$, where $Z_1 \coloneqq YT/e(X)$ and $Z_0 \coloneqq Y(1-T)/[1-e(X)]$.
This yields a well-defined Bayesian inference problem for the transformed data $D$.
Such an approach is motivated by difficulties that can arise when using structural priors under Bayesian ignorability; see \citet{li2023bayesian} for a review. 
The first is the difficulty of incorporating known propensity scores, which do not enter the outcome likelihood. Existing solutions include propensity-dependent priors \citep[e.g.,][]{robins2012robins, ray2020semiparametric, breunig2025double} and outcome models augmented with the propensity score
\citep[e.g.,][]{hahn2020bayesian, walker2026parametrization}.
The second is the difficulty of obtaining minimax regret guarantees such as Theorem~\ref{Theorem1}, given nontrivial semiparametric bias concerns \citep[e.g.,][]{ritov2014bayesian, ray2020semiparametric, breunig2025double}.
In contrast to NBPL, \citet{chamberlain2011bayesian} places a structural prior on the distribution of $(Y(1),Y(0),T,X)$ and discusses extensions with propensity-dependent priors. However, his prior specification is restricted to finite-dimensional models and the analysis does not establish welfare-regret guarantees.

\end{remark}

\subsubsection{Posterior Reporting}
\label{NBPL:posterior reporting}

I use the marginal posterior distributions defined in Section~\ref{NBPL:inference framework} to construct posterior summaries of optimal welfare and optimal treatment assignments and to conduct model selection across policy classes.

Bayesian inference on optimal welfare is standard because its marginal posterior is a distribution on $\R$. One may therefore report the posterior median as a point estimate and a $(1-\alpha)$ equal-tailed credible interval for uncertainty quantification, with endpoints given by the $\alpha/2$ and $(1-\alpha/2)$ posterior quantiles. 
This provides a Bayesian complement to existing frequentist approaches to inference on optimal welfare \citep[e.g.,][]{luedtke2016statistical, chen2025inference,
ponomarev2025lower, whitehouse2025inference}.

Bayesian inference on optimal treatment assignments is more subtle because its marginal posterior is a distribution over \textit{sets} of rules. 
I conduct such inference using the framework of \citet{kline2016bayesian}. In particular, I place a flexible Dirichlet process prior on the identifiable reduced-form parameter $P$ and map its posterior into a posterior for the identified set $\mathcal G^\star(P)$ of optimal treatment rules.\footnote{\citet{kline2016bayesian} places a prior on a finite-dimensional identifiable reduced-form parameter and maps its posterior to obtain a posterior for the identified set of a structural parameter. The same idea extends naturally to NBPL by treating $P$ as an infinite-dimensional identifiable reduced-form parameter. The posterior for $P$ then provides a full theory of inference for functionals of $P$, including $\mathcal G^\star(P)$.}
The marginal posterior for optimal treatment assignments can then be used to form posterior probability statements about the identified set, such as whether a candidate rule belongs to $\mathcal G^\star(P)$ or whether all optimal rules
satisfy a given property (e.g., a capacity constraint or fairness criterion).
Quantities induced by the set of optimal rules, such as the treatment share, may likewise be set-identified.

I illustrate NBPL inference on optimal treatment assignments using linear rules.

\begin{example}[Inference on optimal assignments]
\label{Example1}

Consider the class of linear rules
\tightdisplay{
    \mathcal G_{\mathrm{lin}}
    \coloneqq
    \left\{
    G_{\beta} = \{x\in\mathcal X: \beta_0 + x^\top \beta_1 > 0 \}:
    \beta = (\beta_0, \beta_1)\in\mathcal B\subseteq \mathbb R^{d_x + 1}
    \right\}.
}
For each $P$, let $\mathcal B^\star(P) \coloneqq \arg\max_{\beta\in\mathcal B} W(P;G_\beta)$ denote the set of coefficients indexing the optimal linear rules under $P$. 
The posterior for $P$ induces a posterior distribution over $\mathcal B^\star(P)$. Thus, posterior inference on optimal linear treatment rules reduces to inference on the corresponding set of optimal coefficients, $\mathcal B^\star(P)$.
For instance, a policymaker may report the following posterior summaries (under suitable measurability conditions):
\begin{enumerate}[label=(\alph*)]

    \item For a candidate linear rule $G_{\widetilde\beta}$, the posterior probability $\Pi(P: \widetilde\beta \in \mathcal B^\star(P)\mid \mathcal D_n)$ measures the posterior support for its optimality within $\mathcal G_{\mathrm{lin}}$. It therefore provides a Bayesian diagnostic for assessing a prespecified rule.

    \item Policy implementation is often subject to budget constraints \citep[e.g.,][]{bhattacharya2012inferring, sun2025empirical}. 
    A simple feasibility diagnostic asks whether every unconstrained optimal linear rule treats at most a fraction $c\in(0,1)$ of the population. The corresponding posterior probability is $\Pi\big(P: \sup_{\beta\in \mathcal B^\star(P)} P(X\in G_\beta) \leq c \mid \mathcal D_n \big)$.
    
\end{enumerate}

\end{example}

Beyond inference on $(\mathcal G^{\star}(P),W_{\G}^{\star}(P))$ for a fixed policy class $\G$, NBPL also provides a simple and consistent posterior diagnostic for practitioners comparing competing, possibly non-nested policy classes.
Consider two policy classes $\G_1$ and $\G_2$. 
For example, a policymaker may wish to compare linear rules \citep[e.g.,][]{kitagawa2018should}
with decision-tree rules \citep[e.g.,][]{athey2021policy, ida2026choosing}.
Uncertainty about such comparisons again entirely reflects uncertainty about $P_0$. In particular, the posterior for $P$ induces a joint posterior distribution for $(W_{\G_1}^{\star}(P),W_{\G_2}^{\star}(P))$. The posterior probability
$\Pi(P: W_{\G_1}^{\star}(P) > W_{\G_2}^{\star}(P) \mid \D_n)$
then provides a natural measure of evidence that the maximal attainable welfare under $\G_1$ exceeds that under $\G_2$.
Theorem~\ref{Theorem2} in Section~\ref{theory:model selection} formalizes the pointwise consistency of this posterior model selection procedure.

\subsection{Implementation}
\label{NBPL:implementation}

In this section, I develop a computationally tractable Bayesian-bootstrap algorithm \citep{rubin1981bayesian} for sampling from the posterior distribution of $(\mathcal G^\star(P),W_{\mathcal G}^\star(P))$.
Algorithm~\ref{Algorithm:NBPL} is generic and can be implemented using off-the-shelf policy-learning algorithms for the chosen policy class $\mathcal G$: one need only replace the uniform weights $n^{-1}$ with normalized independent $\mathrm{Exp}(1)$ weights.\footnote{\label{footnote:Bayesian bootstrap}Formally, a draw from the Bayesian-bootstrap posterior $\mathrm{DP}(n\mathbb P_n)$ admits the representation 
$P \stackrel{d}{=} \sum_{i=1}^n \widetilde{\omega}_i\,\delta_{D_i}$, where $\widetilde{\omega}_i = \omega_i/\sum_{j=1}^n\omega_j$ and $\omega_i\stackrel{\mathrm{i.i.d.}}{\sim}\mathrm{Exp}(1)$ \citep{rubin1981bayesian}. 
Thus, Bayesian-bootstrap draws can be viewed as randomly reweighted or smoothed versions of the empirical distribution.
}
The $S$ posterior draws can be computed in parallel.

\begin{algorithm}[H]
\caption{Nonparametric Bayesian Policy Learning via the Bayesian Bootstrap}
\label{Algorithm:NBPL}
\begin{algorithmic}[1]
\State \textbf{Input:} Data $\{(Y_i,T_i,X_i)\}_{i=1}^n$, policy class $\mathcal{G}$, number of posterior draws $S$.
\For{$s=1,\ldots,S$}
    \State \textbf{Step 1.} Draw $\{\omega_i^{[s]}\}_{i=1}^n \stackrel{\mathrm{i.i.d}}{\sim} \mathrm{Exp}(1)$ and let $\widetilde{\omega}_i^{[s]} \coloneqq \omega_i^{[s]} / \sum_{j=1}^n \omega_j^{[s]}$.
    
    \State \textbf{Step 2.} Compute the set of welfare-maximizing rules
    \tightdisplay{
    \mathcal G^{\star,[s]}
    \coloneqq
    \arg\max_{G\in\mathcal{G}} \,\,
    \sum_{i=1}^n
    \widetilde{\omega}_i^{[s]}
    \left(
    \frac{Y_iT_i}{e(X_i)}
    -
    \frac{Y_i(1-T_i)}{1-e(X_i)}
    \right)
    \mathbf{1}\{X_i\in G \}.
    }
    
    \State \textbf{Step 3.} Compute the optimal welfare
    \tightdisplay{
    W_{\G}^{\star,[s]}
    \coloneqq 
    \sup_{G \in \mathcal{G}} \,\,
    \sum_{i=1}^n
    \widetilde{\omega}_i^{[s]}
    \left(
    \frac{Y_iT_i}{e(X_i)}
    -
    \frac{Y_i(1-T_i)}{1-e(X_i)}
    \right)
    \mathbf{1}\{X_i\in G \}.
    }
\EndFor
\State \textbf{Output:} Posterior draws $\{\mathcal G^{\star,[s]}, W_{\G}^{\star,[s]}\}_{s=1}^S$.
\end{algorithmic}
\end{algorithm}

Given posterior draws $\{\mathcal G^{\star,[s]}, W_{\G}^{\star,[s]}\}_{s=1}^S$, posterior summaries can be computed from their empirical distribution. Take optimal welfare as an example. For notational simplicity, let $W^{[s]}\coloneqq W_{\G}^{\star,[s]}$, and denote the corresponding order statistics by $W^{(1)}\leq\cdots\leq W^{(S)}$. For $p\in(0,1)$, define the empirical posterior $p$-quantile as $\widehat q_p\coloneqq W^{(\lceil Sp\rceil)}$. One may report the posterior median $\widehat q_{1/2}$ as a point estimate and the $(1-\alpha)$ equal-tailed credible interval $\left[ \widehat q_{\alpha/2}, \widehat q_{1-\alpha/2} \right]$ for uncertainty quantification. 
To perform model selection between two policy classes $\G_1$ and $\G_2$, one simply reports the empirical fraction of the posterior draws for which $W_{\G_1}^{\star,[s]} > W_{\G_2}^{\star,[s]}$.

\begin{remark}[General Base Measure]
\label{Remark2}

The Bayesian-bootstrap posterior $\mathrm{DP}(n \Prob_n)$ arises as the weak limit of $\mathrm{DP}(\alpha + n\Prob_n)$ as the prior mass vanishes, $|\alpha| \to 0$. 
Depending on the application, researchers may wish to employ a DP prior with a nonzero base measure $\alpha$ to incorporate external evidence, expert judgment, or predictions from auxiliary models.
Such prior specifications connect NBPL to evidence aggregation and meta-analysis, which combine information from previous studies or related populations with the current data to inform decision-making \citep[e.g.,][]{ishihara2024evidence, athey2026surrogate}. 
Recent work also uses generative AI to construct base measures from synthetic data \citep[e.g.,][]{o2025ai, li2026empirical}. 
Algorithm~\ref{Algorithm:NBPL} can be extended to sample from
the resulting posterior $\mathrm{DP}(\alpha+n\mathbb P_n)$ using standard stick-breaking methods \citep[e.g.,][]{sethuraman1994constructive,ishwaran2001gibbs}.

\end{remark}


\section{Empirical Illustration}
\label{sec:empirical}

I illustrate the Nonparametric Bayesian Policy Learning (NBPL) framework in two empirical applications \citep{bhattacharya2012inferring, kitagawa2018should} and compare its performance with that of empirical welfare maximization \citep[EWM;][]{kitagawa2018should}.\footnote{
Following \citet[Remark~2.3]{kitagawa2018should}, I implement both EWM and NBPL using demeaned outcomes. This normalization is useful in applications because the resulting rules are invariant to positive affine transformations of the outcome, $Y\mapsto aY+b$ with $a>0$. Theorem~\ref{Theorem:C2} in Supplemental Appendix~\ref{supplement:extensions} shows that demeaning does not affact the posterior regret contraction rate in Theorem~\ref{Theorem1}.} 
In both applications, I implement NBPL using the Bayesian bootstrap and consider two policy classes: linear rules and decision-tree rules of depth at most two, denoted by
$\mathcal{G}_{\mathrm{lin}}$ and $\mathcal{G}_{\mathrm{tree},2}$, respectively.
I also illustrate posterior model selection across these
classes, a novel feature of NBPL.
Appendix~\ref{Appendix:simulation} presents a Monte Carlo simulation calibrated to the data from \cite{bhattacharya2012inferring}.

\subsection{Targeting Job Trainings}
\label{empirical:JTPA}

The first application uses data from the Job Training Partnership Act (JTPA) experiment, in which applicants were randomly assigned eligibility for job training and related services for an 18-month period. 
Let $T=1$ denote program eligibility and $T=0$ otherwise, and let $Y$ denote earnings measured 30 months after assignment. Following \citet{kitagawa2018should}, I consider two outcome specifications: earnings without treatment costs and earnings
net of a treatment cost of \$774 per participant.
The baseline covariate vector is $X = (\texttt{PreEarn},\texttt{Educ})^{\top}$, where \texttt{PreEarn} and \texttt{Educ} denote pre-treatment earnings and years of education, respectively. 
Random assignment implies unconditional unconfoundedness, 
$(Y(1),Y(0),X)\indep T$. 
The sample used by \citet{kitagawa2018should} contains $n=9{,}223$ observations, and the propensity score is known and constant: $e(X)=2/3$.

Figure~\ref{fig:JTPA} reports the posterior distributions of optimal welfare under NBPL for the JTPA application without treatment costs. The analysis extends \citet{kitagawa2018should} by considering decision-tree rules and conducting posterior model selection across policy classes.
The left panel plots the posterior empirical cumulative distribution functions (CDFs) of $W^\star_{\G_{\mathrm{lin}}}(P)$ and $W^\star_{\G_{\mathrm{tree},2}}(P)$ based on 1,000 Bayesian-bootstrap draws. The CDF for $\G_{\mathrm{tree},2}$ lies below that for $\G_{\mathrm{lin}}$ throughout, indicating that optimal welfare under the decision-tree class first order stochastically dominates that under the linear class.
The EWM welfare estimates lie at relatively low posterior quantiles: approximately the 31st percentile for $\G_{\mathrm{lin}}$ and the 23rd percentile for $\G_{\mathrm{tree},2}$. This suggests that plug-in estimates may understate optimal welfare relative to the posterior distribution.
The right panel plots the posterior distribution of the optimal-welfare difference $W^\star_{\G_{\mathrm{tree},2}}(P) - W^\star_{\G_{\mathrm{lin}}}(P)$. 
The posterior probability that the decision-tree class attains higher optimal welfare than the linear class is 
94.8\%, providing strong evidence in favor of the decision-tree class.

\begin{figure}[htbp]
\centering
\includegraphics[width=1\textwidth]{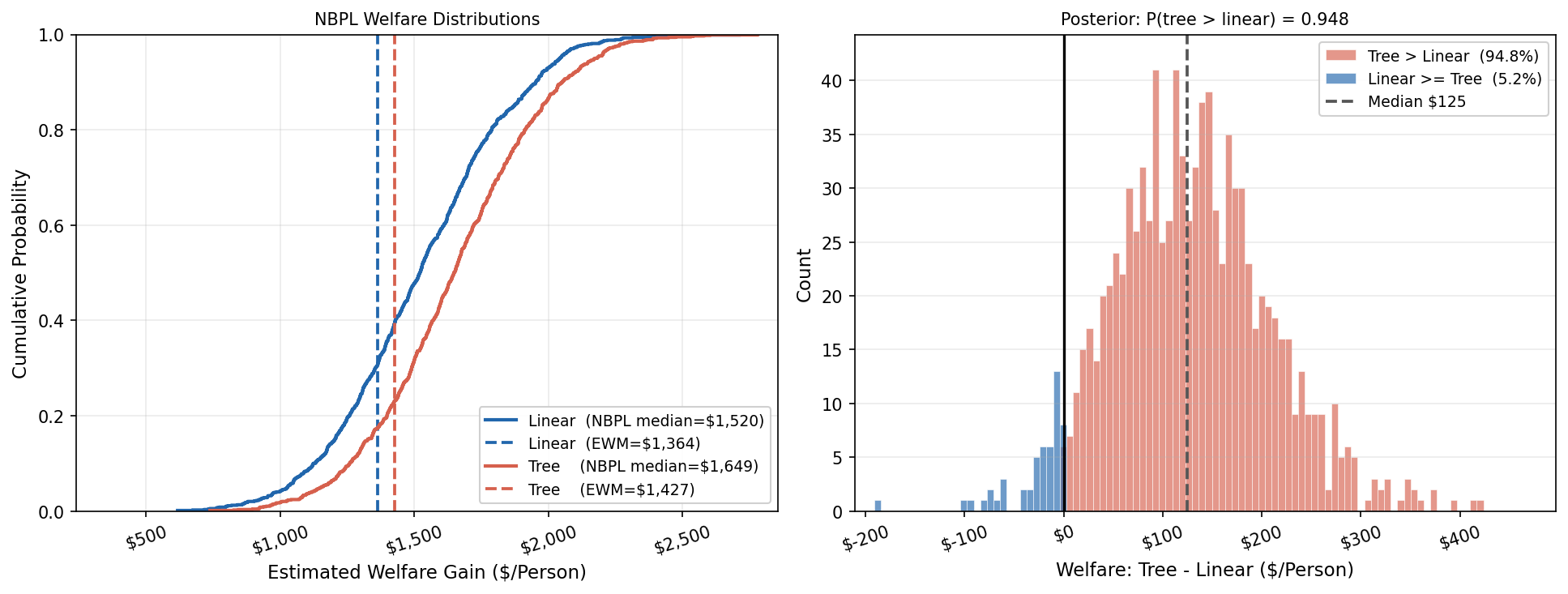}
\caption[Posterior Welfare Comparison, JTPA]{Posterior welfare comparison for the JTPA application without treatment costs. 
The left panel plots the posterior CDFs of optimal welfare under $\G_{\mathrm{lin}}$ and $\G_{\mathrm{tree},2}$; dashed vertical lines mark the corresponding EWM welfare estimates.
The right panel plots the posterior distribution of the optimal-welfare difference $W^\star_{\G_{\mathrm{tree},2}}(P)-W^\star_{\G_{\mathrm{lin}}}(P)$.
The posterior probability that the decision-tree class attains higher optimal welfare than the linear class is 
94.8\%.}
\label{fig:JTPA}
\end{figure}

\begin{figure}[htbp]
\centering
\includegraphics[width=1\textwidth]{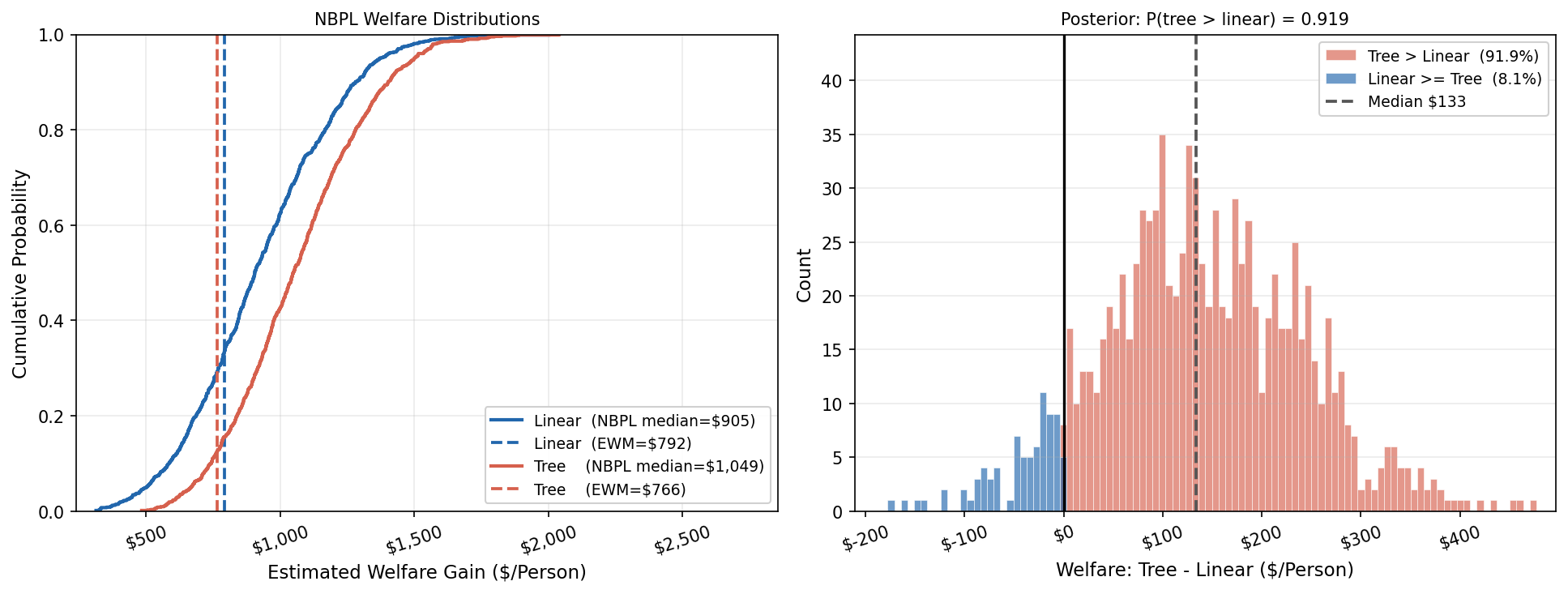}
\caption[Posterior Welfare Comparison with Treatment Costs, JTPA]{Posterior welfare comparison for the JTPA application with a \$774 treatment cost. 
The left panel plots the posterior CDFs of optimal welfare under $\G_{\mathrm{lin}}$ and $\G_{\mathrm{tree},2}$; dashed vertical lines mark the corresponding EWM welfare estimates.
The right panel plots the posterior distribution of the welfare difference $W^\star_{\G_{\mathrm{tree},2}}(P)-W^\star_{\G_{\mathrm{lin}}}(P)$. 
The posterior probability that the decision-tree class attains higher optimal welfare than the linear class is 91.9\%.}
\label{fig:JTPA cost}
\end{figure}

Figure~\ref{fig:JTPA cost} reports the posterior distributions of optimal welfare with a \$774 per-person treatment cost. As in Figure~\ref{fig:JTPA}, the CDF for $\mathcal G_{\mathrm{tree},2}$ lies below that for $\mathcal G_{\mathrm{lin}}$, indicating first-order stochastic dominance. The EWM welfare estimates lie at the 34th and 13th posterior percentiles under $\G_{\mathrm{lin}}$ and $\G_{\mathrm{tree},2}$, respectively. The posterior probability that $\mathcal G_{\mathrm{tree},2}$ attains higher optimal welfare than $\G_{\mathrm{lin}}$ is 91.9\%.

\subsection{Targeting Bednet Subsidies}
\label{empirical:dupas}

The second application uses data from the randomized subsidy experiment for insecticide-treated bednets (ITNs) analyzed by
\citet{bhattacharya2012inferring}. In the experiment, rural households in western Kenya were randomly offered vouchers that specified different subsidy levels for ITNs and could be redeemed at local retailers \citep{dupas2009what}.
Let $T=1$ if a household was offered a high ITN subsidy and $T=0$ otherwise.
 The outcome $Y \in \{0,1\}$ indicates bednet coverage and equals one if the household redeemed the voucher and the net was observed in use.
The baseline covariate vector is $X=(\texttt{YoungChild},\texttt{BankAccount},\texttt{Wealth})^{\top}$, where $\texttt{YoungChild}$ indicates the presence of a child under age ten, $\texttt{BankAccount}$ indicates bank account ownership, and $\texttt{Wealth}$ denotes log household wealth per capita. 
Random assignment implies unconditional unconfoundedness, $(Y(1),Y(0),X) \indep T$. 
The sample analyzed by \citet{bhattacharya2012inferring} contains $n=1{,}098$ observations. The propensity score is constant by design, and I approximate its value by the empirical treatment share, $e(X)=0.16$.\footnote{The experiment uses complete random assignment at the household level \citep{dupas2009what}, implying a constant propensity score. Because the assignment probability is not reported explicitly, I approximate its value using the empirical treatment share in the analysis sample.
}

Figure~\ref{fig:dupas} reports the posterior distributions of optimal welfare under NBPL for the bednet subsidy application. The left panel plots the empirical posterior CDFs of $W_{\G_{\mathrm{lin}}}^\star(P)$ and $W_{\G_{\mathrm{tree},2}}^\star(P)$ based on 1,000 Bayesian-bootstrap draws.
The two CDFs appear similar, although the distribution of $W_{\G_{\mathrm{tree},2}}^\star(P)$ is shifted slightly to the right, suggesting modest welfare gains from using decision-tree rather than linear rules.
The smaller difference relative to the JTPA application may reflect the simpler covariate structure in the data analyzed by \citet{bhattacharya2012inferring}: two of the three covariates are binary, potentially limiting differences in the treatment assignments generated by the two classes.
The EWM estimates are close to the posterior medians for both classes.
The right panels plot the posterior distribution of the optimal-welfare difference $W^\star_{\G_{\mathrm{tree},2}}(P) - W^\star_{\G_{\mathrm{lin}}}(P)$. The posterior probability that the decision-tree class attains higher optimal welfare than the linear class is  91.6\%.
Thus, the posterior provides substantial evidence in favor of the decision-tree class, although the magnitude of the welfare gain is small.

\begin{figure}[htbp]
\centering
\includegraphics[width=1\textwidth]{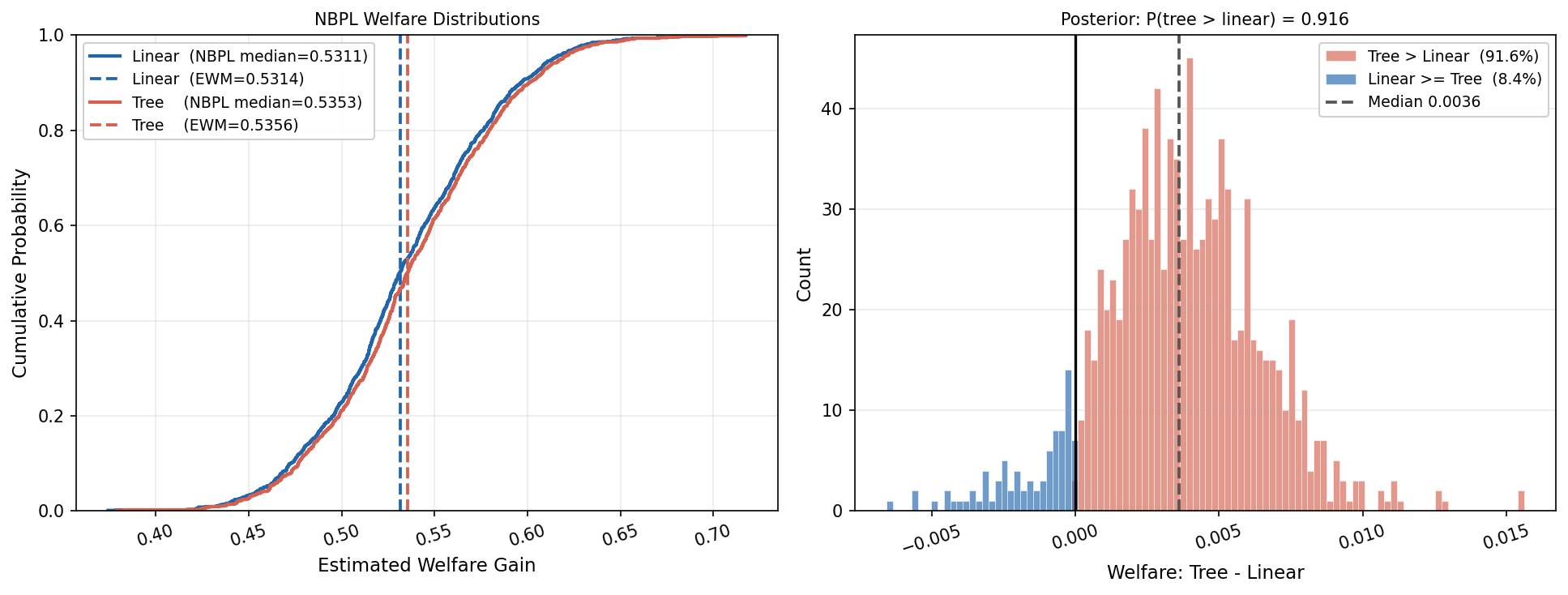}
\caption[Posterior Welfare Comparison, Bednet Application]{Posterior welfare comparison for the bednet subsidy application. The left panel plots the posterior CDFs of optimal welfare under $\G_{\mathrm{lin}}$ and $\G_{\mathrm{tree},2}$; dashed vertical lines mark the corresponding EWM welfare estimates.
Because the outcome is a binary indicator for bednet usage, welfare gains are measured in bednet-usage-rate units (e.g., a value of 0.5 represents a 50-percentage-point increase in bednet usage).
The right panel plots the posterior distribution of the optimal-welfare difference $W^\star_{\G_{\mathrm{tree},2}}(P)-W^\star_{\G_{\mathrm{lin}}}(P)$.
The posterior probability that the decision-tree class attains higher optimal welfare than the linear class is
91.6\%.}
\label{fig:dupas}
\end{figure}

\begin{remark}[Calibrated Simulation]
\label{Remark3}

Appendix~\ref{Appendix:simulation} presents a Monte Carlo study calibrated to the data analyzed by \citet{bhattacharya2012inferring}. Three main findings emerge.
First, NBPL and EWM perform similarly under the welfare-regret
criterion studied in Section~\ref{theory:regret}. 
This is not surprising. As noted in footnote~\ref{footnote:Bayesian bootstrap}, NBPL evaluates welfare under random reweightings of the empirical distribution $\Prob_n$, with the weight on each observation centered at its value under $\Prob_n$ and fluctuating at order $O_P(n^{-1})$.
Second, decision-tree rules attain slightly higher optimal welfare than linear rules. 
Third, the posterior model-selection procedure provides stable evidence in favor of the decision-tree class.
    
\end{remark}


\section{Theoretical Properties of NBPL}
\label{sec:theory}

This section presents two theoretical results for NBPL. First, I establish a minimax-optimal convergence rate for posterior welfare regret (Theorem~\ref{Theorem1}). 
Second, I propose a posterior model-selection procedure for comparing policy classes and prove its pointwise consistency
(Theorem~\ref{Theorem2}). 
I conclude by relating NBPL to existing policy learning approaches \citep{manski2004statistical, chamberlain2011bayesian, kitagawa2018should}.

\subsection{Assumption on the Policy Class}
\label{theory:VC}

The theoretical results require controlling the complexity of the policy class $\mathcal{G}$. Following \citet{kitagawa2018should}, I assume that $\mathcal{G}$ has a finite Vapnik--Chervonenkis (VC) dimension.\footnote{
The Vapnik--Chervonenkis (VC) dimension of a class of sets $\mathcal G\subseteq 2^{\mathcal X}$ is a combinatorial measure of its complexity. The class $\mathcal G$ is said to \emph{shatter} a finite set $\{x_1,\ldots,x_\ell\} \subseteq \mathcal X$ if 
$\{G\cap\{x_1,\ldots,x_\ell\}:G\in\mathcal G \}$ 
contains all $2^\ell$ subsets of $\{x_1,\ldots,x_\ell\}$. The VC dimension is the largest $\ell$ for which $\mathcal G$ shatters some set of $\ell$ points in $\mathcal{X}$ \citep[see, e.g.,][]{vaart2023weak}.
}

\begin{assumption}
\label{Assumption:VC}

The policy class $\G$ has a finite VC-dimension
$v < \infty$.

\end{assumption}

This assumption accommodates many tractable and interpretable policy classes, including linear rules and decision-tree rules.\footnote{It does not cover all policy classes; monotone rules are one example \citep{mbakop2021model}.} 
The linear class $\mathcal G_{\mathrm{lin}}$ in Example~\ref{Example1} has a VC-dimension
$\mathrm{VC}(\mathcal G_{\mathrm{lin}})\leq d_x + 1$
\citep[Problem 2.6.14]{vaart2023weak}.
Decision-tree classes are also covered. 
Let $\mathcal G_{\mathrm{tree},L}$ denote the class of treatment rules generated by decision trees of depth at most $L$.
A decision tree recursively partitions the covariate space into smaller regions. Each split divides one existing region into two according to whether $x_j\leq t$ or $x_j>t$ for some covariate $j \in \{1,\ldots,d_x \}$ and cutoff $t\in\mathbb R$. A region that is not split further is called a terminal leaf and is assigned treatment or control. The tree has depth at most $L$ if each terminal leaf results from at most $L$ successive splits. The treatment rule $G$ is the union of the terminal leaves assigned to treatment. 
The class has a VC-dimension of order $\mathrm{VC}(\mathcal G_{\mathrm{tree},L})=O(2^L\log d_x)$, up to polylogarithmic factors \citep{athey2021policy}.

\subsection{Posterior Regret Guarantee}
\label{theory:regret}

Following \citet{manski2004statistical}, I evaluate a treatment rule by its \textit{welfare regret}, defined as the welfare loss relative to the best feasible rule in the policy class $\G$.

\begin{definition}
\label{def:regret}
    The welfare regret of the rule $G^{\star}(P)$ is $R(P_0;P) \coloneqq W_{\mathcal{G}}^{\star}(P_0) - W(P_0; G^{\star}(P))$, where $G^\star(P)\in\arg\max_{G\in\mathcal G}W(P;G)$ is the rule selected under $P$.
\end{definition}

The regret $R(P_0;P)$ measures the welfare loss incurred under $P_0$ by acting as if $P$ were the true reduced-form distribution. It therefore defines a policy-relevant discrepancy between $P$ and $P_0$. Posterior-regret analysis examines whether the posterior concentrates on distributions $P$ that incur small welfare losses. This parallels posterior contraction rates, a central notion in the theoretical statistics literature \citep[e.g.,][]{ghosal2017fundamentals}, which studies the rate at which the posterior concentrates around the true data-generating parameter.
The key distinction is that NBPL measures contraction using welfare regret rather than conventional statistical distances such as $L_p$ or Hellinger distance.

Theorem~\ref{Theorem1} establishes that posterior regret contracts to zero at the minimax-optimal rate.\footnote{Although $R(P_0;P)$ is infeasible because $P_0$ is unknown, it has a well-defined posterior distribution induced by the posterior distribution of $P$. When $\mathcal G^\star(P)$ contains multiple maximizers, the choice of $G^\star(P)\in\mathcal G^\star(P)$ does not affect the posterior regret contraction rate.} 
To the best of my knowledge, this result provides the first policy-relevant analogue of posterior contraction rates. 
Theorems~\ref{Theorem:C1}--\ref{Theorem:C3} in Supplemental Appendix~\ref{supplement:extensions} extend Theorem~\ref{Theorem1} to settings with multi-valued discrete treatments, demeaned outcomes, and capacity constraints, respectively.

\begin{theorem}
\label{Theorem1}

Suppose Assumptions~\ref{Assumption:DGP}--\ref{Assumption:VC} hold. Let $\mathcal{D}_n \coloneqq \{D_i\}_{i=1}^n \stackrel{\mathrm{i.i.d.}}{\sim} P_0$ denote the observed sample, and let $\Pi(P \in \cdot \mid \D_n)$ denote the Dirichlet process posterior $P \mid \D_n \sim \mathrm{DP}(\alpha + \sum_{i=1}^n \delta_{D_i})$.
Then, for every sequence $C_n\to\infty$ (arbitrarily slowly), as $n \to \infty$,
\tightdisplay{
    \Pi \left.\left(P: R(P_0; P) \leq C_n \, \sqrt{\frac{v}{n}} \,\,\right|\,\, \mathcal{D}_n \right) 
    \stackrel{P_0}{\to} 1.
    \tag{4.1} \label{main:eq6}
}
\end{theorem}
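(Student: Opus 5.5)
The proof follows the standard EWM regret decomposition, with the posterior draw $P$ playing the role of the empirical measure. Let $f_G(d)\coloneqq (z_1-z_0)\mathbf 1\{x\in G\}$, so that $W(P;G)=Pf_G$. Since $G^\star(P)$ maximizes $W(P;\cdot)$ over $\mathcal G$, for any $G_0\in\mathcal G$ with $W(P_0;G_0)$ close to $W^\star_{\mathcal G}(P_0)$,
\[
R(P_0;P)\le W(P_0;G_0)-W(P;G_0)+W(P;G^\star(P))-W(P_0;G^\star(P))+o(n^{-1/2})\le 2\sup_{G\in\mathcal G}|(P-P_0)f_G|+o(n^{-1/2}).
\]
It therefore suffices to show that $\Pi\big(\sup_G|(P-P_0)f_G|>C_n\sqrt{v/n}/2\mid\mathcal D_n\big)\to 0$ in $P_0$-probability. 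I will split
\[
\sup_G|(P-P_0)f_G|\le \sup_G|(P-\mathbb P_n)f_G|+\sup_G|(\mathbb P_n-P_0)f_G| .
\]

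\textbf{Frequentist term.} The class $\{f_G\}$ is the product of the fixed envelope $F=|z_1-z_0|$ with indicators from a VC class of dimension $v$. By Assumption~\ref{Assumption:DGP}(c),(d), $F\in L_2(P_0)$. The standard maximal inequality for VC-subgraph classes (uniform entropy bound, e.g.\ van der Vaart--Wellner Theorem 2.14.1) gives
\[
\Exp\sup_G|(\mathbb P_n-P_0)f_G|\lesssim \sqrt{v/n}\,\|F\|_{P_0,2}.
\]
Markov's inequality then shows this term is $O_{P_0}(\sqrt{v/n})=o_{P_0}(C_n\sqrt{v/n})$.

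\textbf{Posterior term.} I will use the stick-breaking/gamma representation of $\mathrm{DP}(\alpha+n\mathbb P_n)$:
\[
P=V_n Q+(1-V_n)\sum_i\widetilde\omega_i\delta_{D_i},
\]
where $V_n\sim\mathrm{Beta}(|\alpha|,n)$, $Q\sim\mathrm{DP}(\alpha)$, and the $\widetilde\omega_i$ are normalized $\mathrm{Exp}(1)$ weights, all independent.

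The prior component is bounded by $V_n\big(Q F+\mathbb P_n F\big)$. Here $\Exp V_n=O(1/n)$, $\Exp QF=\int F\,d\alpha/|\alpha|<\infty$ by Assumption~\ref{Assumption:base measure}, and $\mathbb P_nF=O_{P_0}(1)$. Hence this component is $O_P(1/n)$ in posterior probability.

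For the Bayesian-bootstrap component, write
\[
\sum_i\widetilde\omega_i f_G(D_i)-\mathbb P_nf_G=\frac{n^{-1}\sum_i(\omega_i-1)f_G(D_i)}{\bar\omega},
\]
with $\bar\omega\to1$ almost surely. Conditionally on the data, $n^{-1}\sum_i(\omega_i-1)f_G(D_i)$ is a multiplier empirical process with centered, unit-variance multipliers. Because the class is VC, it is sub-Gaussian conditionally with respect to the $L_2(\mathbb P_n)$ metric. Dudley's entropy integral (or the multiplier inequality) then gives
\[
\Exp_\omega\sup_G\Big|n^{-1}\sum_i(\omega_i-1)f_G(D_i)\Big|\lesssim \sqrt{v/n}\,\|F\|_{\mathbb P_n,2},
\]
where the exponential multipliers have $\|\cdot\|_{2,1}<\infty$. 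Since $\|F\|_{\mathbb P_n,2}=O_{P_0}(1)$, conditional Markov's inequality shows that the posterior probability that this term exceeds $C_n\sqrt{v/n}/4$ is at most $O_{P_0}(1)/C_n\to0$.

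\textbf{Main obstacle.} I expect the hard part to be the multiplier step with an unbounded envelope. Only $2+\delta$ moments are available, so I cannot appeal to bounded-outcome arguments as in \citet{kitagawa2018should}. The exponential weights are sub-exponential rather than sub-Gaussian, so the chaining must be done carefully. If it becomes delicate, my first try is to truncate $F$ at level $M_n=o(\sqrt n)$, apply a Bernstein-type chaining bound to the truncated class, and control the remainder $\mathbb P_n F\mathbf 1\{F>M_n\}$ (and its weighted analogue) using the $(2+\delta)$ moment, which makes it $o_{P_0}(n^{-1/2})$.

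Throughout, the dependence on $v$ must be tracked explicitly, so that the bound is $C\sqrt{v/n}$ with $C$ free of $v$. Since $C_n\to\infty$, any universal constant can be absorbed. Finally, the three pieces are combined through the regret inequality and a union bound.
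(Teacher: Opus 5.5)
Your architecture matches the paper's. Both use the regret bound $R(P_0;P)\le 2\{\rho_{\mathcal G}(P_0,\mathbb P_n)+\rho_{\mathcal G}(\mathbb P_n,P)\}$ (Lemma~1), a VC maximal inequality plus Markov for the sampling term (Lemma~2), the Beta/DP/exponential representation of the posterior (Lemma~3), negligibility of the prior part, and conditional Markov on a multiplier process. You differ in two places.

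\textbf{Prior part.} Writing $P-\mathbb P_n=V_n(Q-\mathbb P_n)+(1-V_n)(\sum_i\widetilde\omega_i\delta_{D_i}-\mathbb P_n)$ gives the bound $V_n(QF+\mathbb P_nF)$. This avoids the paper's appeal to $\max_{i\le n}F(D_i)=o(\sqrt n)$ a.s.\ in Lemma~4.

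\textbf{Multiplier term.} You bound $\Exp_\omega\|n^{-1/2}\sum_i(\omega_i-1)f(D_i)\|_{\mathcal F}$ directly, for each fixed sample, by $C\sqrt v\,\|F\|_{\mathbb P_n,2}$. The paper instead goes through the permutation multiplier inequality (Lemma~5), Hoeffding's with/without-replacement comparison (Lemma~6), Poissonization with a reverse-martingale/Hewitt--Savage argument (Lemma~7), and Lemma~8. That route yields only a $P_0$-a.s.\ $\limsup$ bound. Your route is shorter and gives an explicit finite-$n$ bound. The paper's is the general bootstrap machinery of van der Vaart--Wellner, Ch.~3.7, which does not exploit uniform entropy.

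\textbf{The justification for your key bound is wrong as stated.} Centered $\mathrm{Exp}(1)$ multipliers are not sub-Gaussian, so the process is not sub-Gaussian for $L_2(\mathbb P_n)$, and Dudley cannot be invoked directly. The textbook multiplier inequality is also unconditional; applying it conditionally on the data is exactly what forces the paper's Lemmas~5--7. Your truncation/Bernstein fallback would work, but it is unnecessary. The short fix is as follows.
\begin{enumerate}
\item Symmetrize with an independent copy $\omega'$. By Jensen, $\Exp_\omega\|n^{-1/2}\sum_i(\omega_i-1)f(D_i)\|_{\mathcal F}\le\Exp\|n^{-1/2}\sum_i\varepsilon_i a_i f(D_i)\|_{\mathcal F}$, where $a_i\coloneqq|\omega_i-\omega_i'|$ and the $\varepsilon_i$ are independent Rademacher variables.
\item Conditionally on $a$, this is a Rademacher process. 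It is sub-Gaussian for $d_a(f,g)^2\coloneqq n^{-1}\sum_i a_i^2(f-g)^2(D_i)$, which is a constant multiple of $\|f-g\|^2_{L_2(\widetilde Q)}$ for the probability measure $\widetilde Q\propto\sum_i a_i^2\delta_{D_i}$.
\item The VC uniform-entropy bound therefore applies verbatim and gives $C\sqrt v\,(n^{-1}\sum_i a_i^2F(D_i)^2)^{1/2}$.
\item Jensen with $\Exp a_i^2=2$ then yields $C\sqrt{2v}\,\|F\|_{\mathbb P_n,2}$, which is your claimed bound.
\end{enumerate}

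\textbf{Centering slip.} The identity for the bootstrap component should read $\sum_i\widetilde\omega_i f(D_i)-\mathbb P_nf=\bar\omega^{-1}n^{-1}\sum_i(\omega_i-1)\{f(D_i)-\mathbb P_nf\}$. Your uncentered version is off by $\mathbb P_nf\,(\bar\omega^{-1}-1)$. That discrepancy is $O_P(n^{-1/2})$ uniformly over $\mathcal F$, so it is harmless because $C_n\to\infty$, but the displayed equality is false as written.
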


\begin{proof}
    See Appendix~\ref{Appendix:proof Theorem1}.
\end{proof}

Two features of Theorem~\ref{Theorem1} are worth emphasizing. 
First, posterior welfare regret under NBPL contracts at the same minimax-optimal rate achieved by empirical welfare maximization (EWM) in \cite{kitagawa2018should}, up to an arbitrarily slowly diverging factor $C_n$.\footnote{Such slowly diverging factors are standard in posterior contraction theory \citep[e.g.,][]{ghosal2000convergence, shen2001rates, vaart2008rates} and are typically regarded as inessential for rate comparisons.}
Thus, accounting for parameter uncertainty does not slow the rate of learning relative to nonparametric frequentist approaches.
Second, the posterior welfare-regret criterion differs from the regret criterion typically used in frequentist policy learning \citep[e.g.,][]{kitagawa2018should,athey2021policy}.
The frequentist criterion evaluates the expected regret of a selected rule over repeated samples drawn from $P_0$.
By contrast, Theorem~\ref{Theorem1} shows that, under repeated sampling from $P_0$, the posterior concentrates on distributions $P$ for which the regret under $P_0$ is at most $C_n\sqrt{v/n}$.
In Section~\ref{theory:comparison with literature}, Theorem~\ref{Theorem3} complements this posterior contraction result with a finite-sample bound under the regret criterion of \citet{kitagawa2018should}. The result establishes uniform minimax optimality for a class of prior-regularized treatment rules that embeds EWM as a special case.

To derive the rate in~\eqref{main:eq6}, I control the two terms in the upper bound
\tightdisplay{
    R(P_0;P) \le 2 \{ \rho_{\G}(P_0, \Prob_n) +\rho_{\G}(\Prob_n, P) \},
}
where $\rho_{\G}(Q_1,Q_2)\coloneqq \sup_{G\in\G}\lvert W(Q_1;G)-W(Q_2;G)\rvert$ defines a welfare-relevant semi-metric on the space of probability distributions.
The first term, $\rho_{\G}(P_0,\Prob_n)$, captures sampling error. Since $\G$ has a finite VC-dimension, the closeness of $\Prob_n$ to $P_0$ translates into $n^{-1/2}$ control of the maximal welfare differences uniformly over $\G$.
The second term, $\rho_{\G}(\Prob_n, P)$, captures posterior uncertainty and is specific to NBPL. Under
the posterior $\mathrm{DP}(\alpha+\sum_{i=1}^n\delta_{D_i})$, a draw admits the decomposition
\tightdisplay{
    P \stackrel{d}{=}
    V_n Q+(1-V_n)\sum_{i=1}^n \widetilde \omega_i\delta_{D_i},
    \qquad
    \widetilde \omega_i\coloneqq
    \frac{\omega_i}{\sum_{j=1}^n \omega_j}.
}
where $V_n\sim\mathrm{Beta}(|\alpha|,n)$, $Q\sim\mathrm{DP}(\alpha)$, and  $\omega_i\overset{\mathrm{i.i.d.}}{\sim}\mathrm{Exp}(1)$ are mutually independent. 
Since $V_n=O_P(n^{-1})$, the prior component $V_nQ$ is asymptotically negligible. The remaining component is a stochastic reweighting of the empirical distribution: it assigns observation $i$ a weight $\widetilde\omega_i$ that is centered at $n^{-1}$ and fluctuates at order $O_P(n^{-1})$.
Thus, $P$ remains asymptotically close to $\Prob_n$. The required uniform $n^{-1/2}$ control for $\rho_{\mathcal G}(\Prob_n,P)$ then follows from multiplier empirical-process bounds for VC classes.

\subsection{Posterior Model Selection}
\label{theory:model selection}

As discussed in Section~\ref{NBPL:posterior reporting}, the posterior for $P$ induces a joint posterior for
$(W_{\G_1}^{\star}(P),W_{\G_2}^{\star}(P))$.
This joint posterior provides a basis for comparing competing policy classes.
In particular, the posterior probability $\Pi(W_{\G_1}^{\star}>W_{\G_2}^{\star}\mid\D_n)$ provides a natural diagnostic of whether the maximal attainable welfare under $\G_1$ exceeds that under $\G_2$.
To the best of my knowledge, NBPL is the first policy-learning framework to provide an explicit model-selection criterion for comparing non-nested policy classes.

Theorem~\ref{Theorem2} establishes pointwise consistency of this model-selection procedure. If one policy class attains strictly higher welfare than the other under $P_0$, then the posterior probability of selecting the welfare-dominant class converges to one. 
If the two classes attain the same optimal welfare under $P_0$, the posterior distribution of the optimal-welfare difference concentrates at zero and assigns vanishing probability to any fixed positive gap.

\begin{theorem}
\label{Theorem2}

Let $\mathcal G_1, \mathcal G_2$ be two policy classes. 
Suppose Assumptions~\ref{Assumption:DGP}--\ref{Assumption:base measure} hold and that each class $\mathcal G_i$ satisfies Assumption~\ref{Assumption:VC}. 
Let $\mathcal{D}_n \coloneqq \{D_i\}_{i=1}^n \stackrel{\mathrm{i.i.d.}}{\sim} P_0$ denote the observed sample, and let $\Pi(P \in \cdot \mid \D_n)$ denote the Dirichlet process posterior $P \mid \D_n \sim \mathrm{DP}(\alpha + \sum_{i=1}^n \delta_{D_i})$.

\begin{enumerate}[label=(\roman*)]
    \item If $W_{\G_1}^{\star}(P_0) \neq W_{\G_2}^{\star}(P_0)$, then, as $n\to\infty$,
    \tightdisplay{
        \Pi\left(
        P:\,
        \sign\,\bigl(W_{\G_1}^{\star}(P)-W_{\G_2}^{\star}(P)\bigl)
        \,=
        \sign\,\bigl(W_{\G_1}^{\star}(P_0)-W_{\G_2}^{\star}(P_0)\bigr)
        \,\middle|\, \D_n
        \right)
        \stackrel{P_0}{\to} 1.
        \tag{4.3}\label{main:eq8}
    }

    \item If $W_{\G_1}^{\star}(P_0) = W_{\G_2}^{\star}(P_0)$, then for any $\varepsilon>0$, as $n\to\infty$,
    \tightdisplay{
        \Pi\left(
        P:\,
        \bigl|W_{\G_1}^{\star}(P)-W_{\G_2}^{\star}(P)\bigr| < \varepsilon
        \,\middle|\, \mathcal D_n
        \right)
        \stackrel{P_0}{\to} 1.
        \tag{4.4} \label{main:eq9}
    }
\end{enumerate}

\end{theorem}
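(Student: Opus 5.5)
The plan is to reduce both parts to a single posterior consistency statement for the optimal-welfare functionals: for each class $\G_k$ ($k=1,2$) and every fixed $\varepsilon>0$,
\tightdisplay{
    \Pi\bigl(P: |W_{\G_k}^\star(P)-W_{\G_k}^\star(P_0)|<\varepsilon \,\big|\, \D_n\bigr)\stackrel{P_0}{\to}1.
}
The key inequality is $|W_{\G}^\star(P)-W_{\G}^\star(P_0)|\le \rho_{\G}(P,P_0)$, since the difference of two suprema is bounded by the supremum of the differences. The triangle inequality then gives $\rho_{\G}(P,P_0)\le \rho_{\G}(P_0,\Prob_n)+\rho_{\G}(\Prob_n,P)$. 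So it suffices to show two things for each $\G_k$: $\rho_{\G_k}(P_0,\Prob_n)=o_{P_0}(1)$, and $\Pi(\rho_{\G_k}(\Prob_n,P)>\varepsilon\mid\D_n)\stackrel{P_0}{\to}0$. These are exactly the two ingredients in the proof of Theorem~\ref{Theorem1} sketched after its statement, where both terms are even $O(\sqrt{v/n})$. I would therefore reuse those bounds: the VC/Glivenko--Cantelli bound for the sampling term, and for the posterior term the decomposition $P\stackrel{d}{=}V_nQ+(1-V_n)\sum_i\widetilde\omega_i\delta_{D_i}$ combined with multiplier empirical-process bounds.

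Concretely, I would apply this at $\varepsilon_n=C_n\sqrt{v_k/n}$, which tends to $0$, so the conditional probability that $\rho_{\G_k}(P,P_0)\le 2C_n\sqrt{v_k/n}$ tends to one in $P_0$-probability. The main obstacle is that Theorem~\ref{Theorem1} is stated for regret, not for $\rho_{\G}$. If its proof does not explicitly deliver posterior control of $\rho_{\G}(\Prob_n,P)$, I would reprove it directly, splitting into three terms:
\begin{itemize}
\item \emph{The $V_n$ term.} $\sup_G|V_n(W(Q;G)-W(\Prob_n;G))|\le V_n\bigl(\int(|z_1|+|z_0|)\,dQ+\Prob_n(|Z_1|+|Z_0|)\bigr)$. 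Here $E[\int|z_j|\,dQ]=\int|z_j|\,d\alpha/|\alpha|<\infty$ by Assumption~\ref{Assumption:base measure}, $\Prob_n|Z_j|=O_{P_0}(1)$ by Assumption~\ref{Assumption:DGP}(c),(d), and $V_n=O_P(1/n)$.
\item \emph{The reweighting term.} $\sup_G|\sum_i(\widetilde\omega_i-1/n)f_G(D_i)|$, with $f_G(d)=(z_1-z_0)\mathbf 1\{x\in G\}$. Writing $\widetilde\omega_i-1/n=(\omega_i-\bar\omega)/(n\bar\omega)$ with $\bar\omega\to1$, this becomes a centered multiplier process over a VC-subgraph class with envelope $|Z_1|+|Z_0|$. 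That envelope is square integrable, so conditional on the data it is $O_P(n^{-1/2})$ by a conditional multiplier inequality applied via Markov.
\end{itemize}

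With the lemma in hand, part (i) goes as follows. Set $\Delta\coloneqq W_{\G_1}^\star(P_0)-W_{\G_2}^\star(P_0)\neq0$ and take $\varepsilon=|\Delta|/3$. On the intersection of the two events $\{|W_{\G_k}^\star(P)-W_{\G_k}^\star(P_0)|<\varepsilon\}$, we have $|(W_{\G_1}^\star(P)-W_{\G_2}^\star(P))-\Delta|<2|\Delta|/3$, so the sign is preserved. A union bound over $k=1,2$ shows this intersection has posterior probability tending to one in $P_0$-probability.

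Part (ii) is analogous. When $\Delta=0$, use $\varepsilon/2$ in the lemma for each class, so that on the intersection $|W_{\G_1}^\star(P)-W_{\G_2}^\star(P)|<\varepsilon$. No joint structure between the classes is needed beyond the union bound, and measurability of the suprema can be handled via outer probability as in Theorem~\ref{Theorem1}.
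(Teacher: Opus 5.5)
Your proposal is correct and follows essentially the same route as the paper. The paper bounds $|W^\star_{\G_i}(P)-W^\star_{\G_i}(P_0)|\le\rho_{\G_i}(P_0,P)$ and invokes the posterior concentration of $\rho_{\G_i}(P_0,P)$ from the proof of Theorem~\ref{Theorem1} (Lemma~\ref{Lemma2} plus Steps~2--4, with $C_n\sqrt{v/n}\to0$). It then concludes both parts with a union bound over the two classes; your threshold $|\Delta|/3$ in place of the paper's $|\Delta_0|/2$ is immaterial.
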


\begin{proof}
    See Appendix~\ref{Appendix:proof Theorem2}.
\end{proof}

The proof of Theorem~\ref{Theorem2} uses the following posterior concentration property: for every policy class $\mathcal G$ satisfying Assumption~\ref{Assumption:VC} and every $\varepsilon > 0$,
\tightdisplay{
    \Pi\left( P:\rho_{\mathcal G}(P_0,P)>\varepsilon \mid \mathcal D_n \right) \stackrel{P_0}{\to} 0.
}
Suppose first that $W_{\mathcal G_1}^\star(P_0)\neq W_{\mathcal G_2}^\star(P_0)$. Then an incorrect posterior ranking can occur only if, for at least one class $\mathcal G_i$, the posterior optimal welfare $W_{\mathcal G_i}^\star(P)$ deviates sufficiently from $W_{\mathcal G_i}^\star(P_0)$. Since $|W_{\G_i}^{\star}(P_0) - W_{\G_i}^{\star}(P)| \le \rho_{\G_i}(P_0,P)$, such a mistake requires at least one of the discrepancy terms $\rho_{\mathcal G_i}(P_0,P)$ to be non-negligible. The concentration result above then implies that the posterior probability of an incorrect ranking converges to zero. The equality case follows from the same argument.

As a corollary of Theorem~\ref{Theorem2}, the posterior model-selection procedure is also pointwise consistent for comparing any two fixed treatment rules. Specifically, if $W(P_0;G_1)\neq W(P_0;G_2)$, the posterior probability of correctly ranking the two rules converges to one in $P_0$-probability.
This covers policy-ranking problems involving prespecified rules \citep[e.g.,][]{kasy2016partial,yang2022offline}, such as comparisons between rules proposed by different analysts or between a candidate targeting rule and the no-treatment status quo.
The result follows by applying Theorem~\ref{Theorem2} to the singleton classes $\G_1=\{G_1\}$ and $\G_2=\{G_2\}$.

\begin{corollary}
\label{Corollary1}

Let $G_1, G_2$ be two treatment rules.
Suppose Assumptions~\ref{Assumption:DGP}--\ref{Assumption:base measure} hold.
Let $\mathcal{D}_n \coloneqq \{D_i\}_{i=1}^n \stackrel{\mathrm{i.i.d.}}{\sim} P_0$ denote the observed sample, and let $\Pi(P \in \cdot \mid \D_n)$ denote the Dirichlet process posterior $P \mid \D_n \sim \mathrm{DP}(\alpha + \sum_{i=1}^n \delta_{D_i})$.

\begin{enumerate}[label=(\roman*)]
    \item If $W(P_0; G_1) \neq W(P_0; G_2)$, then, as $n\to\infty$,
    \tightdisplay{
        \Pi\left(
        P:\,
        \sign\,\bigl(W(P; G_1) - W(P; G_2)\bigl)
        \,=
        \sign\,\bigl(W(P_0; G_1) - W(P_0; G_2)\bigr)
        \,\middle|\, \D_n
        \right)
        \stackrel{P_0}{\to} 1.
    }

    \item If $W(P_0; G_1) = W(P_0; G_2)$, then for any $\varepsilon>0$, as $n\to\infty$,
    \tightdisplay{
        \Pi\left(
        P:\,
        \bigl| W(P; G_1) - W(P; G_2) \bigr| < \varepsilon
        \,\middle|\, \D_n
        \right)
        \stackrel{P_0}{\to} 1.
    }
\end{enumerate}

\end{corollary}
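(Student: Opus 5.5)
The plan is to deduce Corollary~\ref{Corollary1} directly from Theorem~\ref{Theorem2} by taking the singleton classes $\G_1=\{G_1\}$ and $\G_2=\{G_2\}$. For a singleton class $\{G\}$ we have $W_{\{G\}}^\star(P)=\sup_{G'\in\{G\}}W(P;G')=W(P;G)$ for every $P$. This holds in particular for $P_0$ and for every posterior draw $P$. Consequently the events appearing in~\eqref{main:eq8} and~\eqref{main:eq9} coincide exactly with the events in parts (i) and (ii) of the corollary, and the hypotheses $W(P_0;G_1)\neq W(P_0;G_2)$ and $W(P_0;G_1)=W(P_0;G_2)$ translate verbatim.

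It therefore remains to check the hypotheses of Theorem~\ref{Theorem2}. Assumptions~\ref{Assumption:DGP}--\ref{Assumption:base measure} are assumed directly in the corollary, and the posterior is the same Dirichlet process posterior. The only additional requirement is Assumption~\ref{Assumption:VC} for each singleton class. A single set $G$ cannot shatter any nonempty finite set of points: shattering $\{x_1\}$ needs both $\emptyset$ and $\{x_1\}$ as traces, and a single set produces only one trace. Hence $\mathrm{VC}(\{G\})=0<\infty$.

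One point needs care. Some conventions require $v\geq1$, and the Theorem~\ref{Theorem1} rate $\sqrt{v/n}$ would then degenerate. Theorem~\ref{Theorem2}, however, only uses the consistency statement $\Pi(\rho_{\G}(P_0,P)>\varepsilon\mid\D_n)\to0$ in $P_0$-probability, which is unaffected. If needed, one can instead embed $\{G\}$ in a class of VC dimension at most one, such as $\{\emptyset,G\}$. This class has VC dimension at most $1$, since only the point sets inside $G$ can be shattered and only singletons at that. The optimum over this larger class, however, is $\max\{0,W(P;G)\}$, which differs from $W(P;G)$, so I prefer the direct singleton argument. The bound I actually need is the concentration of $|W(P;G)-W(P_0;G)|$ for a single fixed set. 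That follows because $\rho_{\{G\}}(P_0,P)=|W(P_0;G)-W(P;G)|$ is a single linear functional, and it holds under the paper's concentration property, or even by a direct mean/variance computation under the Dirichlet posterior using Assumption~\ref{Assumption:DGP}(d) and Assumption~\ref{Assumption:base measure}.

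I expect no real obstacle beyond this bookkeeping. The VC-dimension-zero edge case is the only subtlety, and the fallback argument sketched above handles it.
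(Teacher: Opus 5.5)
Your proposal is correct and is exactly the paper's argument: apply Theorem~\ref{Theorem2} to the singleton classes $\{G_1\}$ and $\{G_2\}$, for which $W^\star_{\{G\}}(P)=W(P;G)$, so the events and hypotheses translate verbatim. Your observation that $\mathrm{VC}(\{G\})=0$ makes the $\sqrt{v/n}$ thresholds degenerate, so that only the fixed-$\varepsilon$ concentration of $\rho_{\{G\}}(P_0,P)$ is really used, is a fair refinement the paper glosses over; note, however, that your fallback via a direct posterior mean/variance (Chebyshev) computation would need $\int z_j^2\,d\alpha<\infty$, which Assumption~\ref{Assumption:base measure} does not provide, so the prior component should instead be handled through the $V_nQ$ term of Lemma~\ref{Lemma3}.
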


\subsection{NBPL in Relation to Existing Approaches}
\label{theory:comparison with literature}

This section relates NBPL to three policy learning approaches: empirical welfare maximization \citep[EWM;][]{kitagawa2018should}, the limited-information Bayes (LIB) rule of \citet{chamberlain2011bayesian}, and the conditional empirical success (CES) rule of \citet{manski2004statistical}.

\subsubsection{Comparison with EWM}

I begin with EWM. EWM selects a treatment rule by plugging in the empirical distribution $\Prob_n$ in place of the unknown population distribution $P_0$:
\tightdisplay{
    \widehat G_{\mathrm{EWM}}
    \in
    \arg\max_{G\in\mathcal G} W(\mathbb P_n;G).
}
In contrast,  NBPL accommodates traditional policy-choice analysis through a class of \textit{Posterior Expected Welfare Maximization} (PEWM) rules defined by\footnote{Equivalently, PEWM minimizes posterior expected welfare loss, where $L(P;G)\coloneqq W_{\mathcal G}^{\star}(P)-W(P;G)$.}
\tightdisplay{
    \widehat{G}_{\mathrm{PEWM}}(\alpha) :\in \arg\max_{G \in \mathcal{G}} \int W(P;G) \, d\Pi(P \mid \mathcal{D}_n),
}
where the notation makes the dependence on the base measure $\alpha$ explicit.
When $|\alpha|>0$, let
$\widetilde\alpha\coloneqq\alpha/|\alpha|$. Proposition~\ref{Proposition1} shows that PEWM is equivalent to maximizing welfare by plugging in the prior-regularized empirical distribution $P_{n,\alpha}$ in place of $P_0$:
\tightdisplay{
    \widehat{G}_{\mathrm{PEWM}}(\alpha) \in
    \arg\max_{G \in \mathcal{G}} W(P_{n,\alpha};G), \quad \text{where } P_{n,\alpha} \coloneqq \frac{|\alpha|}{|\alpha|+n}\widetilde\alpha
    +
    \frac{n}{|\alpha|+n}\mathbb P_n.
}
Thus, PEWM defines a class of prior-regularized treatment rules that embeds EWM when $|\alpha| = 0$.

\begin{proposition}
\label{Proposition1}

Suppose $P\mid\mathcal D_n \sim \mathrm{DP}(\alpha+n\mathbb P_n)$. For $|\alpha|>0$, let $\widetilde\alpha\coloneqq\alpha/|\alpha|$.
Then any PEWM rule satisfies
\tightdisplay{
    \widehat G_{\mathrm{PEWM}}(\alpha)
    \in
    \arg\max_{G\in\mathcal G}
    \left\{
        \frac{|\alpha|}{|\alpha|+n}W(\widetilde\alpha;G)
        +
        \frac{n}{|\alpha|+n}W(\mathbb P_n;G)
    \right\}. 
    \tag{4.5} \label{main:eq10}
}
As the prior mass vanishes, PEWM reduces to EWM: $\widehat G_{\mathrm{PEWM}}(0) = \widehat G_{\mathrm{EWM}}$.

\end{proposition}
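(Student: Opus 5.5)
The plan is to show that the posterior expected welfare equals the welfare under the posterior mean measure, and that this mean measure is exactly $P_{n,\alpha}$. For a fixed $G$, welfare is linear in the distribution: $W(P;G)=\int f_G\,dP$ with
\[
f_G(z_1,z_0,x)\coloneqq (z_1-z_0)\mathbf 1\{x\in G\}.
\]
The first step is therefore to justify the Fubini-type interchange
\[
\int W(P;G)\,d\Pi(P\mid\mathcal D_n)=\int f_G\,d\bar P_n,
\qquad
\bar P_n(A)\coloneqq \Exp[P(A)\mid\mathcal D_n].
\]
This requires $f_G$ to be integrable under $\bar P_n$.

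The second step is to compute $\bar P_n$. Apply Definition~\ref{def:DP} to the two-set partition $(A,A^c)$ under the posterior $\mathrm{DP}(\alpha+n\mathbb P_n)$. Then $P(A)\sim \mathrm{Beta}\big(\alpha(A)+n\mathbb P_n(A),\,\alpha(A^c)+n\mathbb P_n(A^c)\big)$, so
\[
\Exp[P(A)\mid\mathcal D_n]=\frac{\alpha(A)+n\mathbb P_n(A)}{|\alpha|+n}.
\]
This is exactly $P_{n,\alpha}(A)$ when $|\alpha|>0$, and $\mathbb P_n(A)$ when $|\alpha|=0$. When $\alpha(A)=0$ the Beta parameter degenerates; I will treat that case as a point mass, which still gives the same mean formula.

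The third step extends the identity $\Exp[\int g\,dP\mid\mathcal D_n]=\int g\,d\bar P_n$ from indicators to $g=f_G$. It holds for simple functions by linearity. It extends to nonnegative measurable $g$ by monotone convergence, applied inside the conditional expectation, together with $\int g\,dP=\sup_k\int g_k\,dP$ for simple $g_k\uparrow g$. For signed $g$, I split $f_G=f_G^+-f_G^-$. Both parts have finite $\bar P_n$-integral, because $|f_G|\le |z_1|+|z_0|$ and $\int |z_j|\,d\bar P_n$ is finite: Assumption~\ref{Assumption:base measure} controls the $\alpha$ part, and the empirical part is a finite sum. Hence $W(P;G)$ is posterior-a.s. well defined and
\[
\int W(P;G)\,d\Pi(P\mid\mathcal D_n)=\frac{|\alpha|}{|\alpha|+n}W(\widetilde\alpha;G)+\frac{n}{|\alpha|+n}W(\mathbb P_n;G).
\]
Here I use $\int f_G\,d\alpha=|\alpha|\,W(\widetilde\alpha;G)$.

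Since the PEWM objective coincides with the objective in \eqref{main:eq10} as a function of $G$, the two argmax sets are identical, which gives \eqref{main:eq10}. For the limiting statement, take $|\alpha|=0$. The posterior is then the Bayesian bootstrap $\mathrm{DP}(n\mathbb P_n)$, and the same computation (or the Dirichlet weights representation with $\Exp\widetilde\omega_i=1/n$) shows that the objective equals $W(\mathbb P_n;G)$. So the argmax sets coincide and $\widehat G_{\mathrm{PEWM}}(0)=\widehat G_{\mathrm{EWM}}$, in the sense of equal solution sets.

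The main obstacle is the measure-theoretic justification of the interchange for unbounded, signed $f_G$: measurability of $P\mapsto\int f_G\,dP$ and integrability under the DP posterior. The positive/negative split with monotone convergence, combined with the moment condition in Assumption~\ref{Assumption:base measure}, should handle it.
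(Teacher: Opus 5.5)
Your proposal is correct and follows the paper's route: the paper obtains $\Exp[\int f_G\,dP\mid\mathcal D_n]=\int f_G\,d\bar\nu$ with $\bar\nu=(\alpha+n\mathbb P_n)/(|\alpha|+n)$ in one line from \citet[Theorem~3]{ferguson1973bayesian}, taking $\nu=n\mathbb P_n$ when $|\alpha|=0$. You instead re-derive that fact from the Beta marginals plus the simple-function and monotone-convergence extension, which usefully makes explicit the integrability of $f_G$ under $\alpha$ (Assumption~\ref{Assumption:base measure}) that the paper uses only implicitly; one small correction is that the Beta law degenerates when $\alpha(A)+n\mathbb P_n(A)=0$ (or the complementary parameter vanishes), not merely when $\alpha(A)=0$, though the mean formula is unaffected.
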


\begin{proof}
    See Appendix~\ref{Appendix:proof Proposition1}.
\end{proof}

Equation~\eqref{main:eq10} shows that PEWM rules solve a weighted welfare maximization problem that combines prior information with the data and coincides with EWM when $|\alpha| = 0$. Consequently, it is natural to wonder how $\widehat G_{\mathrm{PEWM}}(\alpha)$ behaves under the welfare-regret  criterion of \cite{kitagawa2018should}. To this end, define its welfare regret by
\tightdisplay{
    \mathcal{R}_n(\alpha)
    \coloneqq
    \Exp_{P_0}\!\left[
        W_{\mathcal G}^{\star}(P_0)
        -
        W\!\left(P_0;\widehat G_{\mathrm{PEWM}}(\alpha)\right)
    \right].
}
Let $\mathcal P(M)$ denote the bounded-outcome subclass of
Assumption~\ref{Assumption:DGP}, consisting of distributions
$\mathbf P_0^\star$ satisfying Assumption~\ref{Assumption:DGP}(a)--(c) and $|Y|\leq M$ almost surely. This class corresponds to the distributions satisfying Assumption~2.1 of \citet{kitagawa2018should}; the bounded-outcome restriction is imposed only to obtain uniformity.
Theorem~\ref{Theorem3} establishes minimax-optimal welfare regret uniformly over $\mathcal P(M)$ for the class of PEWM rules. Because this class includes EWM as the special case $|\alpha|=0$, the theorem also recovers the uniform minimax optimality of EWM \citep{kitagawa2018should}.

\begin{theorem}
\label{Theorem3}

Suppose Assumptions~\ref{Assumption:base measure}--\ref{Assumption:VC} hold. 
Fix $M<\infty$, and let $\mathcal P(M)$ be as defined above. 
Then there exists a constant $C > 0$ such that
\tightdisplay{
    \sup_{\mathbf{P}_0^{\star} \in \mathcal{P}(M)} \mathcal{R}_n(\alpha) \leq C \sqrt{\frac{v}{n}}.
    \tag{4.6} \label{main:eq11}
}
    
\end{theorem}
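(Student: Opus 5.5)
By Proposition~\ref{Proposition1}, $\widehat G_{\mathrm{PEWM}}(\alpha)$ maximizes $W(P_{n,\alpha};G)$ over $\mathcal G$. Write $\lambda_n\coloneqq |\alpha|/(|\alpha|+n)$. The standard argmax argument then gives, for any $G^\star\in\mathcal G^\star(P_0)$ (or a near-maximizer if the supremum is not attained),
\[
W_{\mathcal G}^\star(P_0)-W(P_0;\widehat G_{\mathrm{PEWM}}(\alpha))\le 2\rho_{\mathcal G}(P_0,P_{n,\alpha}).
\]
I then split the semi-metric using $P_{n,\alpha}=\lambda_n\widetilde\alpha+(1-\lambda_n)\mathbb P_n$:
\[
\rho_{\mathcal G}(P_0,P_{n,\alpha})\le (1-\lambda_n)\rho_{\mathcal G}(P_0,\mathbb P_n)+\lambda_n\rho_{\mathcal G}(P_0,\widetilde\alpha).
\]
This holds because $W(\cdot;G)$ is linear in the measure, so $W(P_0;G)-W(P_{n,\alpha};G)=(1-\lambda_n)[W(P_0;G)-W(\mathbb P_n;G)]+\lambda_n[W(P_0;G)-W(\widetilde\alpha;G)]$. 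Taking expectations reduces the theorem to bounding these two terms uniformly over $\mathcal P(M)$.

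For the sampling term I follow \citet{kitagawa2018should}. On $\mathcal P(M)$, strict overlap gives $|Z_1-Z_0|\le M/\kappa$, so the class $\{(z_1,z_0,x)\mapsto (z_1-z_0)\mathbf 1\{x\in G\}:G\in\mathcal G\}$ has envelope $M/\kappa$. It is the product of a fixed bounded function and a VC class of indicators of dimension $v$, so its uniform entropy is bounded as for VC classes. Symmetrization and Dudley's entropy integral (or the maximal inequality for VC-subgraph classes, e.g.\ \citealt{vaart2023weak}) then give
\[
\mathbb E_{P_0}\rho_{\mathcal G}(P_0,\mathbb P_n)\le C_1 (M/\kappa)\sqrt{v/n},
\]
with $C_1$ universal. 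Hence the constant depends only on $(M,\kappa)$ and not on the particular $\mathbf P_0^\star$, which is exactly what uniformity over $\mathcal P(M)$ requires. The only point needing care is getting $\sqrt{v/n}$ rather than $\sqrt{v\log n/n}$. For this I would use the chaining bound with the Haussler packing estimate $N(\epsilon)\le (K/\epsilon)^{2v}$, whose entropy integral is $O(\sqrt v)$.

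For the prior term, $\rho_{\mathcal G}(P_0,\widetilde\alpha)\le \sup_G|W(P_0;G)|+\sup_G|W(\widetilde\alpha;G)|$. The first part is at most $M/\kappa$ on $\mathcal P(M)$. The second is at most $|\alpha|^{-1}\int(|z_1|+|z_0|)\,d\alpha<\infty$ by Assumption~\ref{Assumption:base measure}, a constant independent of $P_0$. Since $\lambda_n\le |\alpha|/n$, the prior term is $O(1/n)$ uniformly, and $1/n\le\sqrt{v/n}$ for $v\ge1$. When $|\alpha|=0$ this term vanishes and the argument reduces to EWM.

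Combining the two bounds gives
\[
\sup_{\mathcal P(M)}\mathcal R_n(\alpha)\le 2C_1(M/\kappa)\sqrt{v/n}+2\bigl(M/\kappa+c_\alpha\bigr)|\alpha|/n\le C\sqrt{v/n}.
\]
I expect the main obstacle to be the log-free maximal inequality for the sampling term. I would cite it directly from \citet{kitagawa2018should} (their Lemma A.4), which gives the bound uniformly over bounded distributions, rather than rederive it.
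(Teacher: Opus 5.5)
Your proposal is correct and follows essentially the same route as the paper. You use the argmax bound $2\rho_{\mathcal G}(P_0,P_{n,\alpha})$, split it by linearity into a prior term of order $|\alpha|/n$ and a sampling term, and control the sampling term uniformly over $\mathcal P(M)$ with the Kitagawa--Tetenov VC maximal inequality. The differences are cosmetic: you cite the underlying maximal inequality rather than their Theorem~2.1, you fold the case $|\alpha|=0$ into the same argument, and you make explicit that uniformity relies on a fixed overlap constant $\kappa$, which the paper's argument also uses implicitly.
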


\begin{proof}

See Appendix~\ref{Appendix:proof Theorem3}.
\end{proof}

\subsubsection{Comparison with Chamberlain (2011) and CES}

I now compare NBPL with the LIB rule of \citet{chamberlain2011bayesian} and, by extension, the CES rule of \citet{manski2004statistical}.
To match their analyses, I restrict attention to finite covariate support, $\mathcal X=\{1,\ldots,K\}$, and an unconstrained policy class, $\mathcal G=2^{\mathcal X}$. I also extend \citet{chamberlain2011bayesian}'s setup by allowing outcomes to be continuous, rather than finitely supported.

Fix any $\tau\in\mathcal X$. The LIB rule assigns treatment to individuals with $X=\tau$ if 
\tightdisplay{
    \frac{\int y d\beta_{1,\tau}(y) + \sum_{i=1}^n Y_i T_i \mathbf{1}\{X_i=\tau\}}
    {|\beta_{1,\tau}| + n(1,\tau)}
    >
    \frac{\int y d\beta_{0,\tau}(y) + \sum_{i=1}^n Y_i (1-T_i) \mathbf{1}\{X_i=\tau\}}
    {|\beta_{0,\tau}| + n(0,\tau)},
    \tag{4.7} \label{main:eq12}
}
where $n(t,\tau) \coloneqq \sum_{i=1}^n \mathbf{1}\{T_i=t, X_i = \tau\}$, $n(\tau) \coloneqq \sum_{i=1}^n \mathbf{1}\{X_i=\tau\}$, and $\{\beta_{t,\tau}: t\in\{0,1\},\tau\in\mathcal X\}$ are the base measures
of the cellwise independent Dirichlet process priors.
Analogously to the relationship between PEWM and EWM, LIB defines a class of prior-regularized rules that embeds CES when $|\beta_{t,\tau}| = 0$. To see this, recall that the CES rule assigns treatment to $X=\tau$ if
\tightdisplay{
    \frac{1}{n}\sum_{i=1}^n
    \frac{Y_iT_i}{n(1,\tau)/n(\tau)}\mathbf 1\{X_i=\tau\}
    >
    \frac{1}{n}\sum_{i=1}^n
    \frac{Y_i(1-T_i)}{n(0,\tau)/n(\tau)}\mathbf 1\{X_i=\tau\}.
    \tag{4.8} \label{main:eq13}
}

Under the same setup, the PEWM rule assigns treatment to $X=\tau$ if
\tightalign{
    &\frac{|\alpha|}{|\alpha|+n}
    \Exp_{\widetilde{\alpha}}
    \left[ \frac{YT}{e(X)} \mathbf 1\{X=\tau\}\right]
    +
    \frac{1}{|\alpha|+n}
    \sum_{i=1}^n
    \frac{Y_iT_i}{e(X_i)}
    \mathbf 1\{X_i=\tau\}
    \\
    > \,\,
    &\frac{|\alpha|}{|\alpha|+n}
    \Exp_{\widetilde{\alpha}}
    \left[\frac{Y(1-T)}{1-e(X)} \mathbf 1\{X=\tau\}\right]
    +
    \frac{1}{|\alpha|+n}
    \sum_{i=1}^n
    \frac{Y_i(1-T_i)}{1-e(X_i)}
    \mathbf 1\{X_i=\tau\}.
    \tag{4.9} \label{main:eq14}
}
Letting the prior mass vanish yields the EWM rule: it assigns treatment to $X=\tau$ if
\tightdisplay{
    \frac{1}{n}
    \sum_{i=1}^n
    \frac{Y_iT_i}{e(X_i)}
    \mathbf 1\{X_i=\tau\}
    >
    \frac{1}{n}
    \sum_{i=1}^n
    \frac{Y_i(1-T_i)}{1-e(X_i)}
    \mathbf 1\{X_i=\tau\}.
    \tag{4.10} \label{main:eq15}
}
This is a concrete instance of how EWM arises as the limiting case of PEWM as $|\alpha| \to 0$. Moreover, comparing \eqref{main:eq13} with \eqref{main:eq15} shows that, in this setup, CES coincides with EWM when EWM uses the empirical propensity score.

The LIB rule of \citet{chamberlain2011bayesian} differs from PEWM because they are induced by different priors. Chamberlain starts from a prior on the joint distribution of $(Y(1),Y(0),T,X)$,\footnote{This joint distribution can be decomposed into the conditional distribution of potential outcomes given $(T,X)$, the treatment assignment mechanism $T\mid X$, and the marginal distribution of $X$. Chamberlain specifies hierarchical priors on these components.}
and treats individuals with $X=\tau$ when the posterior predictive mean of $Y(1)$ exceeds that of $Y(0)$ conditional on $X=\tau$.
Its limited-information feature is that the predictive distribution of $Y(t)$ is formed after discarding outcomes from observations with $T_i=1-t$, while retaining their treatment and covariate values. 
The key distinction is therefore the prior specification: Chamberlain places a structural prior on latent potential outcomes, whereas NBPL places a reduced-form prior directly on the welfare-relevant distribution.
Moreover, NBPL accommodates continuous covariates in a fully nonparametric way and incorporates the known propensity score directly through the reduced-form prior. Establishing comparable regret guarantees for structural priors, as in \citet{chamberlain2011bayesian}, are likely more challenging due to nontrivial semiparametric bias concerns under Bayesian ignorability.\footnote{See, for example, \citet{ritov2014bayesian}, \citet{ray2020semiparametric}, and \citet{breunig2025double}.}

Despite these finite-sample discrepancies, it turns out that all four rules are asymptotically equivalent to the oracle cellwise rule, which treats a covariate cell if and only if its conditional average treatment effect is positive. Proposition~\ref{Proposition2} formalizes this equivalence.

\begin{proposition}
\label{Proposition2}

Suppose Assumptions~\ref{Assumption:DGP}--\ref{Assumption:base measure} hold. 
Let $\mathcal X=\{1,\ldots,K\}$ and 
$\mathcal G=2^{\mathcal X}$. 
Define $\mathrm{CATE}(\tau) $ $\coloneqq \Exp_{\mathbf{P}_0^\star}[Y(1)-Y(0)\mid X=\tau]$ and assume the no-tie condition: $\mathrm{CATE}(\tau) \neq 0$ for all $\tau \in \mathcal{X}$.
Suppose each base measure $\beta_{t,\tau}$ on $\mathbb R$, for $t\in\{0,1\}$ and $\tau\in\mathcal X$, has finite total mass and finite first absolute moment.
Then, for every $\tau\in\mathcal X$, as $n\to\infty$, the PEWM rule~\eqref{main:eq14}, the EWM rule~\eqref{main:eq15}, the LIB rule~\eqref{main:eq12}, and the CES rule~\eqref{main:eq13} all converge almost surely to $\mathbf{1}\{\mathrm{CATE}(\tau) > 0\}$.

\end{proposition}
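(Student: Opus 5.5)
The plan is to prove the claim separately for each rule by applying the strong law of large numbers (SLLN) to cellwise sample averages, then using the no-tie condition to pass from convergence of the criterion to convergence of the indicator. Fix $\tau\in\mathcal X$ and let $p_\tau\coloneqq \mathbf P_0^\star(X=\tau)$. Without loss of generality $p_\tau>0$: a cell with zero probability carries no welfare and is almost surely never observed, and I would handle it by convention. By Assumption~\ref{Assumption:DGP}(d), $\Exp|Y|<\infty$. By strict overlap and unconfoundedness,
\[
\Exp_{\mathbf P_0^\star}\!\left[\frac{YT}{e(X)}\mathbf 1\{X=\tau\}\right]=p_\tau\,\Exp[Y(1)\mid X=\tau],
\]
with the analogous identity for the control arm. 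The SLLN therefore gives, almost surely,
\[
\frac1n\sum_{i=1}^n \frac{Y_iT_i}{e(X_i)}\mathbf 1\{X_i=\tau\}\to p_\tau\Exp[Y(1)\mid X=\tau].
\]
It also gives $n(t,\tau)/n\to p_\tau\,\mathbf P(T=t\mid X=\tau)$, a limit bounded below by $\kappa p_\tau>0$, and $n(\tau)/n\to p_\tau$.

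\textbf{EWM and PEWM.} The difference of the two sides of \eqref{main:eq15} converges almost surely to $p_\tau\,\mathrm{CATE}(\tau)$. For PEWM, multiply both sides of \eqref{main:eq14} by $(|\alpha|+n)/n$. The prior terms $\frac{|\alpha|}{n}\Exp_{\widetilde\alpha}[\cdot]$ are deterministic and, by Assumption~\ref{Assumption:base measure}, equal $n^{-1}$ times a finite integral of $|z_j|$ against $\alpha$, so they vanish. Hence the PEWM difference has the same almost-sure limit $p_\tau\mathrm{CATE}(\tau)$.

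\textbf{CES and LIB.} Rewrite each side of \eqref{main:eq13} as $\frac{\sum_i Y_iT_i\mathbf 1\{X_i=\tau\}/n}{n(1,\tau)/n}\cdot\frac{n(\tau)}{n}$. By the SLLN and the continuous mapping theorem (denominators are bounded away from zero almost surely eventually), this converges to $\Exp[Y(1)\mid X=\tau]\,p_\tau$. The control side is handled the same way, so the CES difference again converges to $p_\tau\mathrm{CATE}(\tau)$. For LIB, divide numerator and denominator of each side of \eqref{main:eq12} by $n$. The prior contributions $\int y\,d\beta_{t,\tau}/n$ and $|\beta_{t,\tau}|/n$ vanish because these quantities are finite by hypothesis. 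Each side therefore converges to $\Exp[Y(t)\mid X=\tau]$, and the difference converges to $\mathrm{CATE}(\tau)$.

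\textbf{Conclusion.} In all four cases the rule equals $\mathbf 1\{\Delta_n>0\}$, where $\Delta_n\to c_\tau\,\mathrm{CATE}(\tau)$ almost surely for some $c_\tau>0$. Because $\mathrm{CATE}(\tau)\neq0$, on the probability-one event where the convergence holds, $\Delta_n$ eventually has the same sign as $\mathrm{CATE}(\tau)$. Hence $\mathbf 1\{\Delta_n>0\}$ is eventually constant at $\mathbf 1\{\mathrm{CATE}(\tau)>0\}$, which is almost sure convergence. Since $\mathcal X$ is finite, the statement also holds simultaneously over all $\tau$.

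The main obstacle is bookkeeping rather than depth. The identification step requires unconfoundedness and strict overlap to turn inverse-propensity-weighted moments into conditional means, and LIB and CES require bounding denominators away from zero; both are routine given the assumptions. I would also check how the reduced-form base measure $\alpha$ on $(z_1,z_0,x)$ is used in \eqref{main:eq14}: its prior terms must be read as $\Exp_{\widetilde\alpha}[Z_1\mathbf 1\{X=\tau\}]$ and $\Exp_{\widetilde\alpha}[Z_0\mathbf 1\{X=\tau\}]$, which are finite by Assumption~\ref{Assumption:base measure}.
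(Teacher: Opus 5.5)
Your proposal is correct and follows essentially the same route as the paper: the cellwise strong law of large numbers for each contrast, the finite mass and first-moment conditions on the base measures to make the prior terms in PEWM and LIB vanish, and the no-tie condition to turn almost-sure convergence of the contrast into eventual constancy of the indicator. The differences are cosmetic. You rescale the PEWM and CES contrasts by positive factors before taking limits. You also flag the requirement $p_\tau>0$ explicitly, which the paper simply asserts; note, though, that null cells cannot really be handled ``by convention,'' since $\mathrm{CATE}(\tau)$ is then undefined and the PEWM/LIB limits are driven by the prior, so it is cleaner to read $\mathcal X$ as the support of $X$.
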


\begin{proof}

See Appendix~\ref{Appendix:proof Proposition2}.
\end{proof}


\section{Conclusion}
\label{sec:conclusion}

This paper proposes the \textit{Nonparametric Bayesian Policy Learning} (NBPL) framework. The key observation is that welfare is fully determined by a reduced-form data distribution. Consequently, uncertainty about optimal policies entirely reflects uncertainty about this distribution. 
This naturally motivates a Bayesian approach in which the researcher places a nonparametric Dirichlet process prior on the reduced-form distribution. 
The resulting posterior provides a unified framework for traditional policy choice analysis, inference on optimal welfare and optimal treatment assignments, and model selection across competing policy classes.
NBPL is also computationally tractable: its default Bayesian bootstrap implementation requires only exponential reweighting of the observations.
The paper makes two main theoretical contributions. First, it establishes a minimax-optimal convergence rate for posterior welfare regret. This result provides a novel policy-relevant analogue of posterior contraction rates, a central notion in the theoretical statistics literature over the past two decades. Second, it establishes consistency of the posterior model-selection procedure.
Several extensions are left for future work. These include allowing for unknown propensity scores, accommodating partially identified welfare criteria, and studying whether minimax-optimal regret gurantees can be obtained under fully structural priors.

\pagebreak
\bibliographystyle{ecta}
\bibliography{references}  


\appendix
\renewcommand{\thesection}{\Alph{section}}
\renewcommand{\thesubsection}{\Alph{section}.\arabic{subsection}}


\pagebreak

\section{Monte Carlo Simulation}
\label{Appendix:simulation}

\subsection{Simulation Design}

To construct a data-generating process calibrated to the dataset analyzed by \citet{bhattacharya2012inferring}, I first estimate conditional choice probabilities from the original sample $\mathcal D_{N}=\{(Y_i,T_i,X_i)\}_{i=1}^{N}$. I then generate $S=1{,}000$ independent simulated samples for each sample size $n \in\{100,250,500,$ $1000,2000,5000\}$.
For each $n$ and $s=1,\ldots,S$, let $ \widetilde{\mathcal D}_n^{[s]} = \{(\widetilde Y_i^{[s]},\widetilde T_i^{[s]}, \widetilde X_i^{[s]})\}_{i=1}^n$ denote the simulated sample.

\textbf{Step 1.} Fit random-forest classifiers of the binary outcome $Y$ on $X$ separately within each treatment arm $t\in\{0,1\}$ in the original sample \citep{breiman2001random}. 
Let $\widehat m_{N}(t,x) \coloneqq \widehat{\Pr}_{N}(Y=1\mid T=t,X=x)$ denote the predicted probability of bednet usage, where $\widehat m_{N}(t,x)\in[0,1]$ by construction.

\textbf{Steps 2--4.} For each $n\in\{100,250,500,1000,2000,5000\}$ and $s=1,\ldots,S$:

\begin{itemize}
    \item \textbf{Step 2.} Generate covariates
    $\{\widetilde X_i^{[s]}\}_{i=1}^n$ by resampling with replacement from $\widehat F_X \coloneqq N^{-1} \sum_{i=1}^N\delta_{X_i}$.

    \item \textbf{Step 3.} Generate treatment assignments $\{\widetilde T_i^{[s]}\}_{i=1}^n \overset{\mathrm{i.i.d.}}{\sim} \mathrm{Bernoulli}(e)$,
    independently of $\widetilde X_i^{[s]}$, where $e = 0.16$ is the known propensity score in the original experiment.

    \item \textbf{Step 4.} Generate potential outcomes independently
    conditional on $\widetilde X_i^{[s]}$:
    \tightdisplay{
        \widetilde Y_i^{[s]}(t)
        \mid \widetilde X_i^{[s]}
        \overset{\mathrm{ind}}{\sim}
        \mathrm{Bernoulli}\!\left(
        \widehat m_N(t,\widetilde X_i^{[s]})
        \right),
        \qquad t\in\{0,1\}.
    }
    The observed outcome is then $\widetilde Y_i^{[s]} = \widetilde Y_i^{[s]}(1)\widetilde T_i^{[s]} + \widetilde Y_i^{[s]}(0) (1 - \widetilde T_i^{[s]})$.
    
\end{itemize}

\subsection{Analysis Procedure}

For each simulated dataset $\widetilde{\mathcal D}_n^{[s]}$, I compare a linear class $\mathcal G_{\mathrm{lin}}$ with a depth-$2$ decision-tree class $\mathcal G_{\mathrm{tree},2}$. Hereafter, let $\mathcal G$ denote either class.

\paragraph{Solving EWM and NBPL.}
Both EWM and NBPL are computed as weighted welfare maximization problems. Let $q=(q_1,\ldots,q_n)$ be any weight vector with $q_i\geq0$ and $\sum_{i=1}^n q_i=1$. 
Following \citet[Remark 2.3]{kitagawa2018should}, define
\tightdisplay{
    g_i^{[s]}(q)
    =
    q_i
    \big(\widetilde Y_i^{[s]}-\bar Y_q^{[s]}\big)
    \left(
    \frac{\widetilde T_i^{[s]}}{e}
    -
    \frac{1-\widetilde T_i^{[s]}}{1-e}
    \right), 
    \qquad 
    \bar Y_q^{[s]} \coloneqq \sum_{i=1}^n q_i \widetilde Y_i^{[s]}.
}
For a given $q$, the selected rule is
\tightdisplay{
    G^{[s]}(q) \in \arg\max_{G\in\mathcal G} \,\, \sum_{i=1}^n g_i^{[s]}(q)\mathbf 1\{\widetilde X_i^{[s]}\in G\}.
}
EWM and NBPL differ in the choice of $q$:
\begin{itemize}
    \item EWM uses the uniform weights $q_i^{\mathrm{EWM}} =1/n$. The resulting rule is $\widehat G_{\mathrm{EWM}}^{[s]} \coloneqq G^{[s]}(q^{\mathrm{EWM}})$.

    \item NBPL uses Bayesian-bootstrap weights. For each simulation draw $s$ and posterior draw $b=1,\ldots,B$, with $B=1{,}000$, draw $\xi_i^{[s,b]}\overset{\mathrm{i.i.d.}}{\sim}\mathrm{Exp}(1)$ ($i=1,\ldots,n$) and set $\omega_i^{[s,b]} = \xi_i^{[s,b]} / \sum_{j=1}^n \xi_j^{[s,b]}$.
    Writing $\omega^{[s,b]} \coloneqq (\omega_1^{[s,b]}, \ldots, \omega_n^{[s,b]})$, the $b$th posterior draw of the NBPL optimal rule is $G^{[s,b]} \coloneqq G^{[s]}(\omega^{[s,b]})$
    and the corresponding posterior draw of optimal welfare is
    $W_{\mathcal G}^{\star,[s,b]} = \sum_{i=1}^n g_i^{[s]}(\omega^{[s,b]}) \mathbf 1\{\widetilde X_i^{[s]}\in G^{[s,b]}\}$.
\end{itemize}

\paragraph{Oracle welfare and regret.}
Under the simulated DGP, $\mathbb E[\widetilde Y(t)\mid \widetilde X=x] = \widehat m_{N}(t,x)$ for $t\in\{0,1\}$.
Therefore, the simulated CATE is $\tau(x) = \widehat m_{N}(1,x)-\widehat m_{N}(0,x)$, and the superpopulation welfare under any rule $G$ is given by
\tightdisplay{
    W(G) \coloneqq \frac{1}{N} \sum_{i=1}^{N} \tau(X_i)\mathbf 1\{X_i\in G\}.
}
For each policy class $\mathcal G$, let $G^\star\in\arg\max_{G\in\mathcal G}W(G)$ denote an oracle rule and define the corresponding oracle welfare by $W_{\mathcal G}^\star\coloneqq\sup_{G\in\mathcal G}W(G)$.

I conduct two regret exercises. First, I compare EWM regret with the posterior mean of NBPL regret across simulated sample sizes. The EWM regret in Monte Carlo sample $s$ is $R_{\mathrm{EWM}}^{[s]} \coloneqq W_{\mathcal G}^{\star} -
 W(\widehat G_{\mathrm{EWM}}^{[s]})$.
For NBPL, posterior draw $b$ has regret $R_{\mathrm{NBPL}}^{[s,b]} \coloneqq W_{\mathcal G}^{\star} - W(G^{[s,b]})$, which I summarize using the empirical posterior mean $\bar R_{\mathrm{NBPL}}^{[s]} \coloneqq B^{-1} \sum_{b=1}^B R_{\mathrm{NBPL}}^{[s,b]}$.
The posterior mean regret is a useful summary because its convergence to zero is sufficient for posterior regret to concentrate at zero.\footnote{By Markov's inequality, for any $C>0$, $\Pi ( R(P_0;P)>C \mid \widetilde{\mathcal D}_n^{[s]} ) \leq \mathbb{E}[R(P_0;P) \mid \widetilde{\mathcal D}_n^{[s]}] / C$.} 
I therefore examine whether posterior mean regret shrinks toward zero as the sample size increases.

Second, I summarize NBPL posterior regret using empirical posterior CDFs averaged across Monte Carlo samples. 
For each Monte Carlo sample $s$ and regret threshold $r$, define 
\tightdisplay{
    \widehat F_n^{[s]}(r) \coloneqq \frac{1}{B} \sum_{b=1}^B \mathbf{1} \{ R_{\mathrm{NBPL}}^{[s,b]}\leq r \}.
}
I report the Monte Carlo average $\bar{F}_n(r) \coloneqq S^{-1} \sum_{s=1}^S \widehat F_n^{[s]}(r)$
over a grid of thresholds $r$.

\paragraph{Model selection.}
Under the simulated DGP, the oracle welfare gap is
$W_{\G_{\mathrm{tree},2}}^{\star} -W_{\G_{\mathrm{lin}}}^{\star}=0.0004$.
Thus, the decision-tree class yields a small increase in optimal welfare, corresponding to approximately $0.04$ percentage points in bednet usage. The simulation therefore falls under case~(i) of Theorem~\ref{Theorem2}, although the welfare separation is small. For Monte Carlo sample $s$, define $\pi_n^{[s]} \coloneqq \Pi ( W_{\G_{\mathrm{tree},2}}^{\star}(P) >  W_{\G_{\mathrm{lin}}}^{\star}(P) \mid \widetilde{\mathcal D}_n^{[s]} )$.
Theorem~\ref{Theorem2} predicts that the Monte Carlo distribution of $\{\pi_n^{[s]}\}_{s=1}^S$ should concentrate toward one as $n$ increases. I examine this prediction across simulated sample sizes.


\subsection{Simulation Results}

Figures~\ref{fig:sim ewm vs nbpl regret linear} and
\ref{fig:sim ewm vs nbpl regret tree} summarize EWM regret and posterior mean NBPL regret across simulations, for the linear class and the depth-$2$ decision-tree class, respectively. 
Overall, EWM and NBPL perform similarly under the welfare-regret criterion, with only small differences between the two policy classes. In smaller samples, posterior mean NBPL regret is less dispersed than EWM regret, especially in the upper tail, but tends to be slightly higher near the center of the distribution. This pattern resembles the bias--variance tradeoff for shrinkage estimators. As the sample size increases, both EWM regret and posterior mean NBPL regret shrink toward zero and become more concentrated across simulations, consistent with the regret convergence theory for both methods.

Figure~\ref{fig:sim nbpl regret distribution} plots the Monte Carlo-averaged empirical posterior CDFs of NBPL welfare regret at each simulated sample size for the linear and depth-$2$ decision-tree classes. The regret distributions are similar across policy classes, indicating comparable learning performance relative to their respective oracle rules. This similarity may reflect the simple covariate structure of the application. As the sample size increases, the regret CDFs concentrate increasingly near zero, consistent with the NBPL regret contraction theory. At $n=5{,}000$, posterior regret is concentrated below $0.03$ for both classes, about $6\%$ of the maximum attainable welfare within each class.

Figure~\ref{fig:sim nbpl model selection} summarizes, across Monte Carlo samples, the posterior probability that the depth-$2$ decision-tree class attains higher optimal welfare than the linear class.
Under the simulated DGP, the decision-tree class has higher oracle welfare, although the gap is only $0.0004$. 
Even at $n=100$, the 25th percentile of the posterior probability exceeds $0.5$, and the median exceeds $0.9$. 
As the sample size increases, the distribution becomes more concentrated near one, consistent with case (i) of Theorem~\ref{Theorem2}. 
The gradual concentration over the range of sample sizes considered may reflect the small oracle welfare gap.

\begin{figure}[htbp]
\centering
\includegraphics[width=0.8\textwidth]{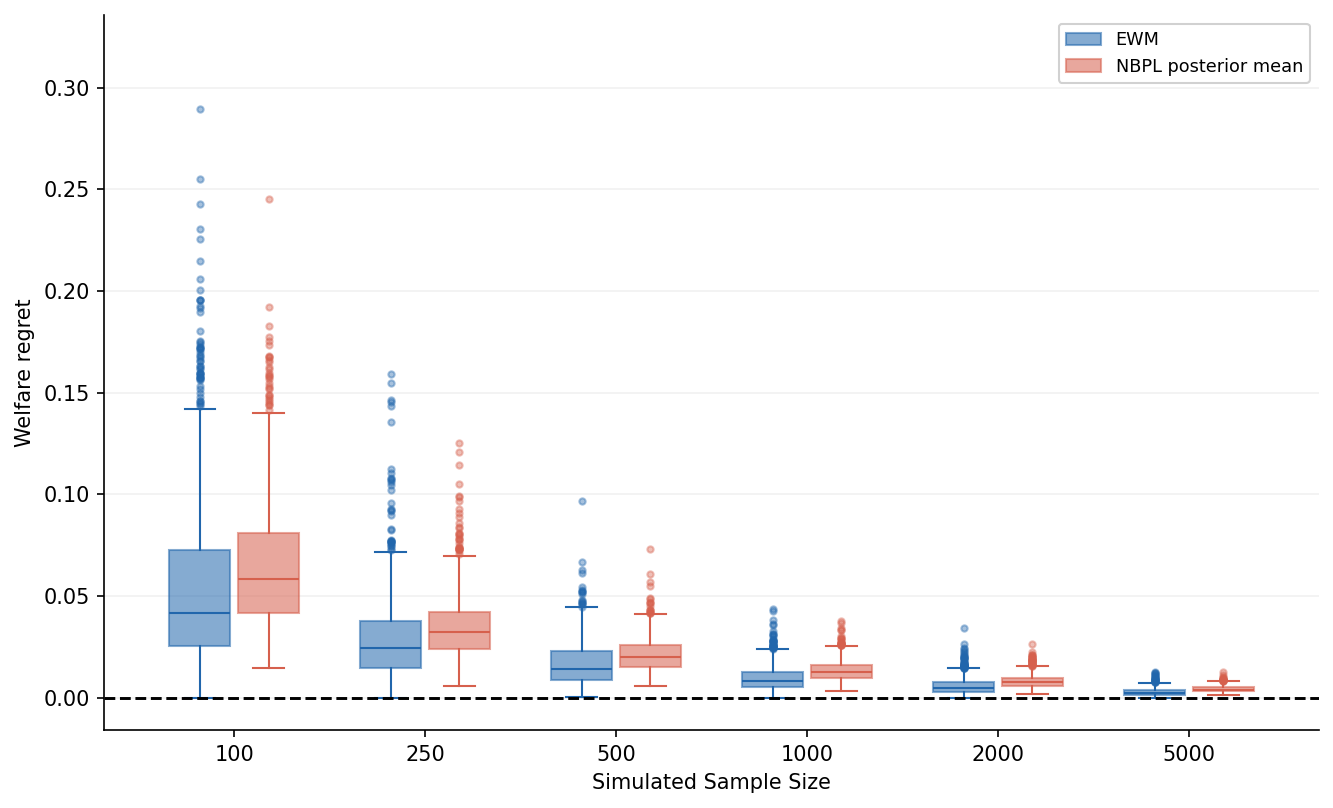}
\caption[NBPL versus EWM Welfare Regret, Linear Class]{NBPL versus EWM welfare regret for the linear class. For each simulated sample size $n$, the blue boxplot summarizes EWM regret $R_{\mathrm{EWM}}^{[s]}$ across the $S=1{,}000$ Monte Carlo samples. The red boxplot summarizes posterior mean NBPL regret $\bar R_{\mathrm{NBPL}}^{[s]}  = B^{-1} \sum_{b=1}^B R_{\mathrm{NBPL}}^{[s,b]}$ across the same samples, with each posterior mean computed from $B=1{,}000$ draws.}
\label{fig:sim ewm vs nbpl regret linear}
\end{figure}

\begin{figure}[htbp]
\centering
\includegraphics[width=0.8\textwidth]{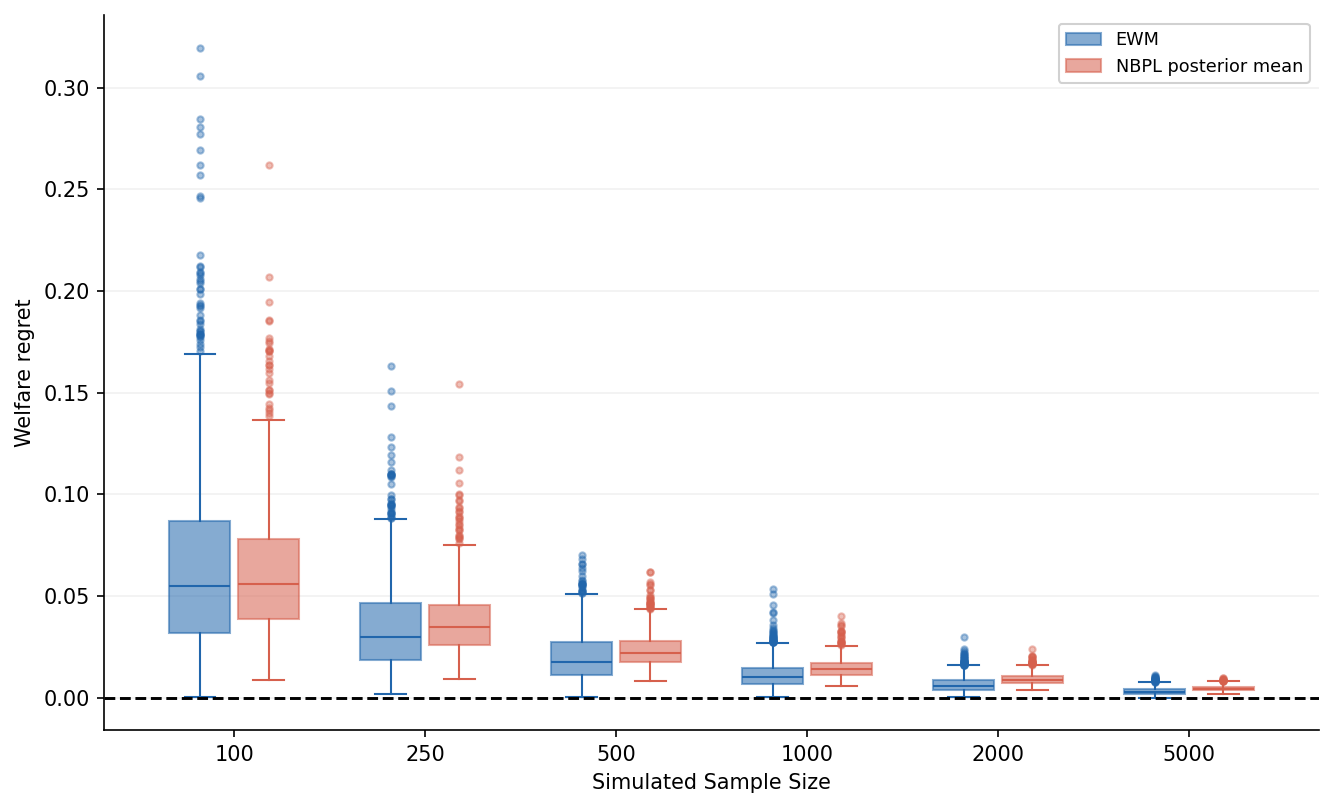}
\caption[NBPL versus EWM Welfare Regret, Tree Class]{NBPL versus EWM welfare regret for the depth-$2$ decision-tree class. For each simulated sample size $n$, the blue boxplot summarizes EWM regret $R_{\mathrm{EWM}}^{[s]}$ across the $S=1{,}000$ Monte Carlo samples. The red boxplot summarizes posterior mean NBPL regret $\bar R_{\mathrm{NBPL}}^{[s]}  = B^{-1} \sum_{b=1}^B R_{\mathrm{NBPL}}^{[s,b]}$ across the same samples, with each posterior mean computed from $B=1{,}000$ draws.}
\label{fig:sim ewm vs nbpl regret tree}
\end{figure}

\begin{figure}[htbp]
\centering
\includegraphics[width=1\textwidth]{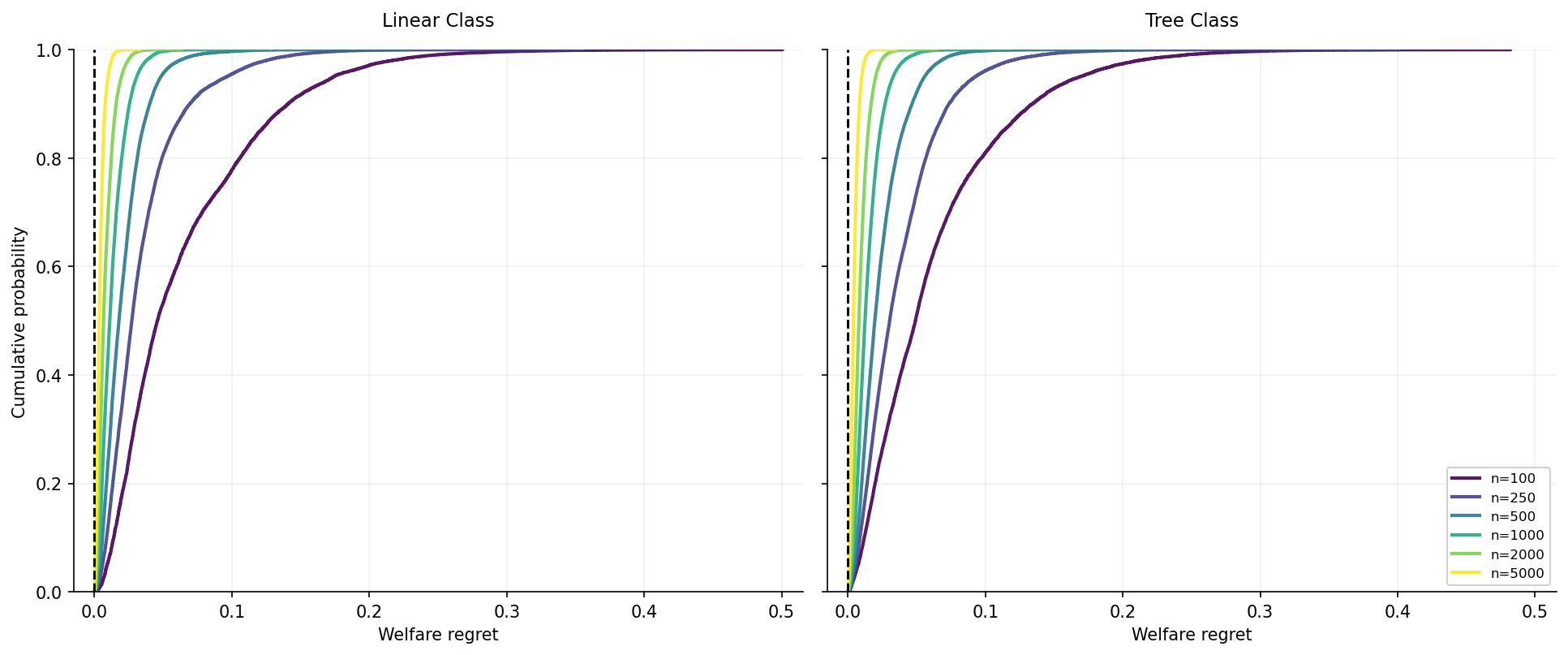}
\caption[NBPL Welfare Regret Distribution]{NBPL welfare regret distribution.
The left and right panels report results for $\G_{\mathrm{lin}}$ and $\G_{\mathrm{tree},2}$, respectively. For each simulated sample size $n$, the curves plot the Monte Carlo-averaged posterior CDF $\bar{F}_n(r) \coloneqq S^{-1} \sum_{s=1}^S \widehat F_n^{[s]}(r)$ over a grid of regret thresholds $r$, where $\widehat F_n^{[s]}(r) \coloneqq B^{-1} \sum_{b=1}^B \mathbf{1} \{ R_{\mathrm{NBPL}}^{[s,b]}\leq r \}$
is the empirical posterior CDF in Monte Carlo sample $s$. 
The averages are based on $S=1{,}000$ Monte Carlo samples.}
\label{fig:sim nbpl regret distribution}
\end{figure}

\begin{figure}[htbp]
\centering
\includegraphics[width=0.7\textwidth]{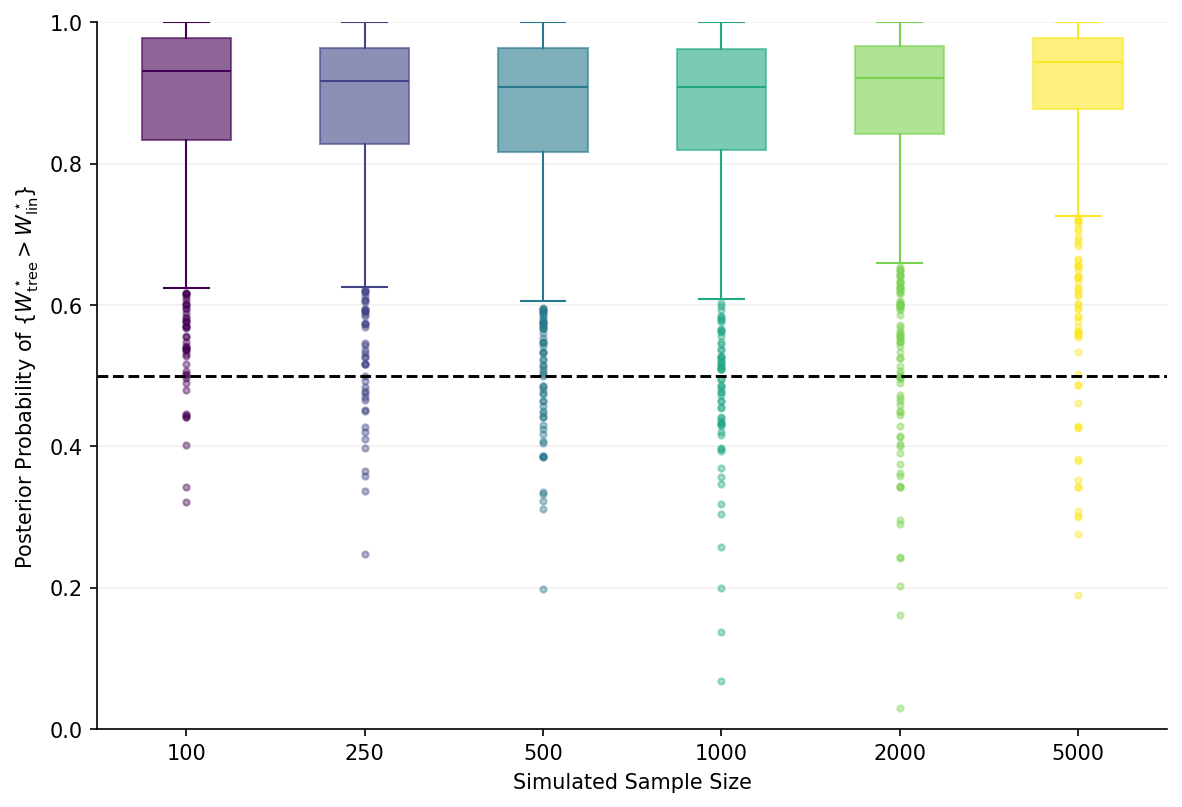}
\caption[Posterior Model Selection in Simulation]{Posterior model selection in the simulation. For each simulated sample size $n$, the boxplot summarizes the posterior probabilities
$\pi_n^{[s]} \coloneqq \Pi( W^\star_{\G_{\mathrm{tree},2}}(P) > W^\star_{\G_{\mathrm{lin}}}(P) \mid \widetilde{\mathcal D}_n^{[s]})$ across the $S=1{,}000$ Monte Carlo samples.
The dashed horizontal line marks probability $0.5$. 
Since the oracle welfare under $\G_{\mathrm{tree},2}$ is
slightly higher than under $\G_{\mathrm{lin}}$, upward movement of the boxplots indicates increasing posterior support for the better policy class.}
\label{fig:sim nbpl model selection}
\end{figure}


\pagebreak
\section{Proofs of Main Results}
\label{Appendix:proofs}

\subsection{Notations}
\label{Appendix:notation}

I use the following empirical-process notation. Let
\tightdisplay{
    \mathcal F
    =
    \{f_G(D):G\in\G\},
    \qquad
    f_G(D)
    =
    \left\{
    \frac{YT}{e(X)}
    -
    \frac{Y(1-T)}{1-e(X)}
    \right\}
    \mathbf 1\{X\in G\}.
}
By \citet[Lemma~A.1]{kitagawa2018supp}, $\mathcal F$ is a VC-subgraph class with VC-dimension at most $v\coloneqq \mathrm{VC}(\G)$. In what follows, I suppress the
dependence of $f$ on $G$.

Let $F$ be an envelope for $\mathcal F$, i.e., $|f(d)|\le F(d)$ for all $f\in\mathcal F$ and $d\in\mathcal D$. 
In the present setting, one may take $F(D) = |YT/e(X)| + |Y(1-T)/(1-e(X))|$. Under Assumptions~\ref{Assumption:DGP}--\ref{Assumption:base measure}, $\Exp_{P_0}[F^2] < \infty$ and $\int F d\alpha < \infty$.
Let $\{D_i\}_{i\geq1}$ be an i.i.d. sequence with common law $P_0$. Write $P_0^\infty\coloneqq\bigotimes_{i=1}^\infty P_0$ for its product law, and write $P_0^{\otimes n}$ for the law of $(D_1,\ldots,D_n)$, viewed as the $n$-dimensional marginal of $P_0^\infty$. When no confusion arises, I use $P_0$ and $\Exp_{P_0}$ for probabilities and expectations under either
$P_0^\infty$ or $P_0^{\otimes n}$.
I use $\stackrel{d}{=}$ to denote equality in law.
For any probability measure $P$ and $f\in\mathcal F$, write $Pf=\Exp_P[f(D)]$ and $\Prob_n f=n^{-1}\sum_{i=1}^n f(D_i)$. Define $\|P\|_{\mathcal F}\coloneqq \sup_{f\in\mathcal F}|Pf|$ and $\|Z\|_{P,r}\coloneqq (\Exp_P|Z|^r)^{1/r}$ for $r>0$. I ignore
measurability subtleties and write $\Exp[\cdot]$ in place of outer expectation. Finally, let $\mathbb G_n=\sqrt n(\Prob_n-P)$ denote the empirical process indexed by $\mathcal F$.

\subsection{Technical Lemmas}
\label{Appendix:lemmas}

This section presents the technical lemmas used in the proofs of the main results. Proofs of Lemmas~\ref{Lemma1}--\ref{Lemma8} are collected in Supplemental Appendix~\ref{supplement:proofs of lemmas}.


\begin{lemma}
\label{Lemma1}

The welfare regret $R(P_0;P)$ in Definition~\ref{def:regret} satisfies
\tightdisplay{
    R(P_0;P) \leq 2 \{ \rho_{\G}(P_0,\Prob_n) + \rho_{\G}(\Prob_n, P) \},
    \tag{B.1} \label{appendix:eq1}
}
where $\rho_{\G}(Q_1,Q_2) \coloneqq \sup_{G \in \G} |W(Q_1;G) - W(Q_2; G)|$ for $Q_1, Q_2$ probability measures.

\end{lemma}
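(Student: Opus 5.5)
The plan is the standard argument for plug-in maximizers, with the triangle inequality routing everything through $\mathbb P_n$. Fix $P$, let $G^\star(P)\in\arg\max_{G\in\mathcal G}W(P;G)$ be the selected rule, and take any $G\in\mathcal G$. The case where the supremum defining $W^\star_{\mathcal G}(P_0)$ is not attained is handled by an $\varepsilon$-argument at the end. I first bound $W(P_0;G)-W(P_0;G^\star(P))$ by inserting $\pm W(P;G)$ and $\pm W(P;G^\star(P))$:
\[
W(P_0;G)-W(P_0;G^\star(P)) = \{W(P_0;G)-W(P;G)\} + \{W(P;G)-W(P;G^\star(P))\} + \{W(P;G^\star(P))-W(P_0;G^\star(P))\}.
\]
The middle term is $\le 0$ because $G^\star(P)$ maximizes $W(P;\cdot)$ over $\mathcal G$. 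Each of the two outer terms is at most $\rho_{\mathcal G}(P_0,P)$, since both $G$ and $G^\star(P)$ lie in $\mathcal G$. Hence $W(P_0;G)-W(P_0;G^\star(P))\le 2\rho_{\mathcal G}(P_0,P)$.

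Next I take the supremum over $G\in\mathcal G$. Because the bound is uniform in $G$, this gives $R(P_0;P)=W^\star_{\mathcal G}(P_0)-W(P_0;G^\star(P))\le 2\rho_{\mathcal G}(P_0,P)$. Attainment of the supremum is therefore not needed: I take $G_k\in\mathcal G$ with $W(P_0;G_k)\uparrow W^\star_{\mathcal G}(P_0)$ and pass to the limit.

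Finally, $\rho_{\mathcal G}$ is a supremum of absolute differences of the linear functionals $Q\mapsto W(Q;G)$, so it satisfies the triangle inequality. For each $G$,
\[
|W(P_0;G)-W(P;G)|\le |W(P_0;G)-W(\mathbb P_n;G)|+|W(\mathbb P_n;G)-W(P;G)| \le \rho_{\mathcal G}(P_0,\mathbb P_n)+\rho_{\mathcal G}(\mathbb P_n,P).
\]
Taking the supremum over $G$ yields $\rho_{\mathcal G}(P_0,P)\le \rho_{\mathcal G}(P_0,\mathbb P_n)+\rho_{\mathcal G}(\mathbb P_n,P)$. Combining this with the previous step gives (B.1).

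The only delicate point is well-definedness. The quantities $W(Q;G)$ must be finite for $Q\in\{P_0,\mathbb P_n,P\}$ so that the differences make sense. This holds because $\mathbb E_{P_0}F<\infty$ under Assumption~\ref{Assumption:DGP}, $\mathbb P_n$ is finitely supported, and DP posterior draws integrate $F$ almost surely by Assumption~\ref{Assumption:base measure}. Otherwise the inequality holds trivially whenever the right-hand side is $+\infty$. No probabilistic argument is needed, since the bound is deterministic and holds for every $P$.
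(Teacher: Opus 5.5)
Your proof is correct and follows essentially the same route as the paper. Both insert $W(P;\cdot)$ to get a three-term decomposition whose middle term is nonpositive by optimality of $G^\star(P)$, which yields $R(P_0;P)\le 2\rho_{\G}(P_0,P)$, and then apply the triangle inequality for $\rho_{\G}$ through $\Prob_n$; your use of an arbitrary $G\in\G$ in place of $G^\star(P_0)$ is a slight refinement, since it avoids assuming that the supremum under $P_0$ is attained.
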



\begin{lemma}
\label{Lemma2}

Suppose Assumptions~\ref{Assumption:DGP}--\ref{Assumption:VC} hold. 
Let $\{D_i\}_{i \geq 1} \stackrel{\mathrm{i.i.d.}}{\sim} P_0$ and $\Prob_n\coloneqq n^{-1}\sum_{i=1}^n\delta_{D_i}$. Let $C_n \to \infty$ (arbitrarily slowly). 
Define $\mathcal{A}_n \coloneqq \left\{ \{D_i\}_{i=1}^n: \rho_{\G}(P_0,\Prob_n) \leq (C_n/4) \sqrt{v/n} \right\}$, where $\rho_{\G}(\cdot,\cdot)$ is defined as in Lemma~\ref{Lemma1}.
Then $P_0^{\infty}(\mathcal{A}_n) \to 1$ as $n \to \infty$.

\end{lemma}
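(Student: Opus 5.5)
The plan is to reduce the claim to a bound on the expected supremum of the empirical process and then apply Markov's inequality. First, note that $\rho_{\G}(P_0,\Prob_n)=\sup_{f\in\mathcal F}|(\Prob_n-P_0)f|=n^{-1/2}\|\mathbb G_n\|_{\mathcal F}$. This holds because $W(Q;G)=Qf_G$ for any $Q$, by the identification formula \eqref{main:eq1}. Hence
\[
P_0^\infty(\mathcal A_n^c)=P_0\Bigl(\|\mathbb G_n\|_{\mathcal F}>\tfrac{C_n}{4}\sqrt v\Bigr)\le \frac{4\,\Exp_{P_0}\|\mathbb G_n\|_{\mathcal F}}{C_n\sqrt v}.
\]
It therefore suffices to show $\Exp_{P_0}\|\mathbb G_n\|_{\mathcal F}\le K\sqrt v\,\|F\|_{P_0,2}$ for a universal constant $K$ not depending on $n$. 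The ratio is then $O(1/C_n)\to0$.

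The uniform bound comes from the maximal inequality for VC-subgraph classes with a square-integrable envelope. By \citet[Lemma~A.1]{kitagawa2018supp}, $\mathcal F$ is VC-subgraph with index of order $v$. Its envelope $F(D)=|YT/e(X)|+|Y(1-T)/(1-e(X))|\le 2|Y|/\kappa$ is square-integrable under Assumption~\ref{Assumption:DGP}(c)--(d), since a finite $(2+\delta)$-th moment implies a finite second moment. The steps are as follows.
\begin{enumerate}[label=(\arabic*)]
\item Symmetrize, so that $\Exp\|\mathbb G_n\|_{\mathcal F}\le 2\Exp\|n^{-1/2}\sum_i\epsilon_i f(D_i)\|_{\mathcal F}$.
\item Apply, conditionally on the data, Dudley's entropy integral for the Rademacher process in $L_2(\Prob_n)$.
\item Use the uniform entropy bound $\sup_Q N(\epsilon\|F\|_{Q,2},\mathcal F,L_2(Q))\le (A/\epsilon)^{cv}$ for VC-subgraph classes \citep[Theorem~2.6.7]{vaart2023weak}. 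This gives $J(1,\mathcal F)\lesssim\sqrt v$.
\item Conclude with the standard bound $\Exp\|\mathbb G_n\|_{\mathcal F}\lesssim J(1,\mathcal F)\|F\|_{P_0,2}$ \citep[Theorem~2.14.1]{vaart2023weak}, which yields the $\sqrt v$ scaling.
\end{enumerate}

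The main obstacle is making sure the constant depends on $v$ exactly as $\sqrt v$ and that outcomes need not be bounded. The $\sqrt v$ dependence requires $\int_0^1\sqrt{1+cv\log(A/\epsilon)}\,d\epsilon\lesssim\sqrt v$, which holds because $v\ge1$. Unbounded outcomes are harmless because the envelope enters only through $\|F\|_{P_0,2}<\infty$. The $\sqrt v$ tracking is not even essential here: $v$ is fixed, so any finite bound $\Exp\|\mathbb G_n\|_{\mathcal F}\le K_v$ that is uniform in $n$ already gives $P_0^\infty(\mathcal A_n^c)\le 4K_v/(C_n\sqrt v)\to0$. Measurability of the supremum is handled by outer expectations, as the paper stipulates. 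The class is also pointwise measurable for the examples considered, but we do not pursue this.
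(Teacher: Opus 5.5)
Your proposal is correct and follows essentially the same route as the paper. Both identify $\rho_{\G}(P_0,\Prob_n)$ with $\|\Prob_n-P_0\|_{\mathcal F}$, bound its expectation by $K\sqrt{v/n}$ using the VC covering bound and the uniform-entropy maximal inequality (Theorems~2.6.7 and~2.14.1 of van der Vaart and Wellner) together with the square-integrable envelope, and conclude with Markov's inequality; your symmetrization-plus-conditional-Dudley steps are just the content of Theorem~2.14.1, and your remark that the $\sqrt v$ tracking is inessential for fixed $v$ is right.
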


\begin{lemma}[DP Representation]
\label{Lemma3}

Let $P \mid \{D_i\}_{i=1}^n \sim \mathrm{DP}(\alpha + \sum_{i=1}^n \delta_{D_i})$, and write $|\alpha|\coloneqq\alpha(\mathcal D)$.
Let $h:\mathcal D\to\mathbb R$ be any integrable function. \emph{Conditional on} $\{D_i\}_{i=1}^n$, the posterior law of $Ph$ admits the representation
\tightdisplay{
    Ph \stackrel{d}{=} V_n \cdot Qh + (1-V_n) \frac{\sum_{i=1}^n W_i h(D_i)}{\sum_{i=1}^n W_i},
    \tag{B.2} \label{appendix:eq2}
}
where $V_n \sim \mathrm{Beta}(|\alpha|,n)$, $Q \sim \mathrm{DP}(\alpha)$, and $\{W_i\}_{i=1}^n \stackrel{\mathrm{i.i.d.}}{\sim} \mathrm{Exp}(1)$ are mutually independent. Let $\mathbb Q_n$ denote the joint law of $(V_n,Q,W_1,\ldots,W_n)$.

\end{lemma}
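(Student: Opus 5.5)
The plan is to prove Lemma~\ref{Lemma3} by combining two standard Dirichlet process facts. The first is the \emph{self-similarity} (aggregation) property of the Dirichlet distribution. The second is the gamma representation of a finite Dirichlet vector. Since the lemma only concerns the law of the scalar $Ph$, it suffices to establish the measure-level identity
\[
P \stackrel{d}{=} V_n Q + (1-V_n)\sum_{i=1}^n \widetilde W_i \delta_{D_i}, \qquad \widetilde W_i \coloneqq W_i\Big/\sum_{j=1}^n W_j,
\]
as random probability measures on $(\mathcal D,\mathscr D)$, conditional on $\{D_i\}_{i=1}^n$. The representation~\eqref{appendix:eq2} then follows by integrating $h$: since $\int F\,d\alpha<\infty$ type integrability (here just $h$ integrable under $\alpha$), $Qh$ is a.s.\ finite, and $h(D_i)$ are fixed numbers.

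\textbf{Step 1 (superposition of independent DPs).} I would use the known fact that if $Q_1\sim \mathrm{DP}(\alpha_1)$ and $Q_2\sim\mathrm{DP}(\alpha_2)$ are independent and $V\sim\mathrm{Beta}(|\alpha_1|,|\alpha_2|)$ is independent of both, then $VQ_1+(1-V)Q_2\sim\mathrm{DP}(\alpha_1+\alpha_2)$. To verify this, I fix a finite measurable partition $(A_1,\ldots,A_k)$. I then realize $V Q_1(A_j)$ and $(1-V)Q_2(A_j)$ through independent gamma variables: set $\Gamma_{1j}\sim\mathrm{Ga}(\alpha_1(A_j))$ and $\Gamma_{2j}\sim\mathrm{Ga}(\alpha_2(A_j))$, with $V=\sum_j\Gamma_{1j}/\sum_{j}(\Gamma_{1j}+\Gamma_{2j})$. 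The standard gamma--Dirichlet independence (Lukacs) gives the joint law of $(V,Q_1(A_\cdot),Q_2(A_\cdot))$ correctly. The sum $VQ_1(A_j)+(1-V)Q_2(A_j)=(\Gamma_{1j}+\Gamma_{2j})/\sum_\ell(\Gamma_{1\ell}+\Gamma_{2\ell})$ is then $\mathrm{Dir}(\alpha_1(A_j)+\alpha_2(A_j))_j$. Since finite-dimensional distributions on partitions determine the DP law, this proves the fact.

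\textbf{Step 2 (apply it).} Take $\alpha_1=\alpha$ and $\alpha_2=\sum_{i=1}^n\delta_{D_i}$, so $|\alpha_2|=n$ and $V_n\sim\mathrm{Beta}(|\alpha|,n)$. It remains to identify $\mathrm{DP}(\sum_i\delta_{D_i})$ with $\sum_i\widetilde W_i\delta_{D_i}$. If the $D_i$ are distinct, a draw from $\mathrm{DP}(\sum_i\delta_{D_i})$ is supported on $\{D_1,\ldots,D_n\}$ with weights $\mathrm{Dir}(1,\ldots,1)$. This is exactly the law of normalized i.i.d.\ $\mathrm{Exp}(1)=\mathrm{Ga}(1)$ variables, and one checks the partition property directly by aggregation of Dirichlet coordinates. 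If there are ties, grouping equal atoms still gives matching aggregated Dirichlet laws, again by aggregation. So the identification holds regardless.

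The degenerate case $|\alpha|=0$ is covered by the convention $V_n\equiv 0$. Mutual independence of $(V_n,Q,\{W_i\})$ holds by construction. I expect the only delicate point to be Step~1's rigorous justification: in particular, extending from equality of finite-dimensional partition laws to equality of laws of random measures, and hence of $Ph$ for general integrable $h$. I would handle this by citing that DP laws are characterized by partition marginals, together with approximation of $h$ by simple functions. This approximation uses $L^1$ convergence under $\alpha$ and under the atoms $D_i$, and the fact that $\Exp_{\mathrm{DP}(\alpha)}Q|h|=\int|h|\,d\alpha/|\alpha|<\infty$.
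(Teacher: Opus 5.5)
Your proposal is correct and follows essentially the same route as the paper. The paper's proof is a one-line citation of the normalized gamma-process representation of the Dirichlet process (Ghosal and van der Vaart, Theorem 14.37 and Proposition G.10). You make that citation self-contained: you run the gamma/Lukacs argument on finite partitions, identify $\mathrm{DP}(\sum_i \delta_{D_i})$ with normalized exponential weights by Dirichlet aggregation, and pass from sets to $Ph$ by a measurability and approximation step.

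One small advantage of your construction is that it gives independence of $V_n$ from the unnormalized $W_i$ directly, exactly as the lemma states. A direct gamma-process construction would instead produce $V_n$ as a function of $\sum_i W_i$, and one would then need to note that only the normalized weights enter the representation.
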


\begin{lemma}
\label{Lemma4}

Let $\{D_i\}_{i \geq 1} \stackrel{\mathrm{i.i.d.}}{\sim} P_0$. Conditional on $\{D_i\}_{i=1}^n$, let $\mathbb Q_n$ denote the joint law of $(V_n,Q,W_1,\ldots,W_n)$ specified in Lemma~\ref{Lemma3}.
Let $\F$ be a class of measurable functions with envelope $F$ satisfying $\Exp_{P_0}[F^2] < \infty$ and $\int F d\alpha < \infty$. Then, for $P_0^{\infty}$-almost every realization of $\{D_i\}_{i \geq 1}$,
\tightdisplay{
    \sqrt{n} V_n \,
    \sup_{f \in \F} \left| Qf - \frac{\sum_{i=1}^n W_i f(D_i)}{\sum_{i=1}^n W_i} \right| = o_{\mathbb{Q}_n}(1).
    \tag{B.3} \label{appendix:eq3}
}
    
\end{lemma}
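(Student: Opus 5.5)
The statement to prove is Lemma~\ref{Lemma4}. We fix a realization of $\{D_i\}_{i\ge1}$ in a $P_0^\infty$-probability-one set, to be specified, and work conditionally under $\mathbb Q_n$. Since $\sqrt n V_n$ multiplies a supremum, it suffices to show two things: that $\sqrt n V_n = o_{\mathbb Q_n}(1)$, and that the supremum is $O_{\mathbb Q_n}(1)$.

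For the first factor, $V_n\sim\mathrm{Beta}(|\alpha|,n)$ gives $\mathbb E[V_n]=|\alpha|/(|\alpha|+n)$. Markov's inequality then yields $\mathbb Q_n(\sqrt n V_n>\eta)\le \sqrt n|\alpha|/\{\eta(|\alpha|+n)\}\to0$ for every $\eta>0$. If $|\alpha|=0$, then $V_n\equiv 0$ and there is nothing to prove.

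For the supremum, we bound it crudely by the triangle inequality:
\[
\sup_{f\in\F}\Bigl|Qf-\tfrac{\sum_i W_i f(D_i)}{\sum_i W_i}\Bigr|
\le QF+\frac{\sum_{i=1}^n W_i F(D_i)}{\sum_{i=1}^n W_i}.
\]
Each term is handled separately.

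\emph{The prior term.} Since $Q\sim\mathrm{DP}(\alpha)$, we have $\mathbb E[QF]=\int F\,d\alpha/|\alpha|<\infty$ by assumption. Markov's inequality then gives $QF=O_{\mathbb Q_n}(1)$, and this bound does not depend on $n$ or on the data.

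\emph{The reweighted empirical term.} Write $\sum_i W_iF(D_i)/\sum_iW_i = (n^{-1}\sum_i W_iF(D_i))/(n^{-1}\sum_i W_i)$. The denominator is at least $1/2$ with $\mathbb Q_n$-probability tending to one, by Chebyshev's inequality, since $\mathrm{Var}(n^{-1}\sum_iW_i)=1/n$. The numerator has conditional mean $\Prob_n F$. By the strong law of large numbers, $\Prob_nF\to P_0F<\infty$ $P_0^\infty$-a.s., because $\Exp_{P_0}F\le(\Exp_{P_0}F^2)^{1/2}<\infty$. This is the almost-sure set on which we work. On it, $\Prob_nF$ is bounded uniformly in $n$, so Markov's inequality gives that the numerator is $O_{\mathbb Q_n}(1)$. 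Hence the ratio is $O_{\mathbb Q_n}(1)$.

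Combining the two factors, the product is $o_{\mathbb Q_n}(1)\cdot O_{\mathbb Q_n}(1)=o_{\mathbb Q_n}(1)$, which is \eqref{appendix:eq3}.

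No step is a real obstacle. The only care needed is to make the data-dependent bounds almost sure, which is done by restricting to the SLLN event $\{\Prob_nF\to P_0F\}$, and to keep the bound uniform in $f$, which the envelope domination does without any VC-type argument. Measurability of the supremum is set aside, as in Section~\ref{Appendix:notation}.
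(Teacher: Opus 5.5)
Your proof is correct and shares the paper's overall structure: the envelope/triangle-inequality bound, $\sqrt n V_n = o_{\mathbb Q_n}(1)$ from the Beta mean, and $QF = O_{\mathbb Q_n}(1)$ from Ferguson's identity $\Exp[QF] = |\alpha|^{-1}\int F\,d\alpha$.

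The difference is in the reweighted empirical term. The paper bounds it deterministically by $\sum_i W_iF(D_i)/\sum_i W_i \le \max_{1\le i\le n}F(D_i)$. It then writes $\sqrt n V_n \max_i F(D_i) = (nV_n)\cdot n^{-1/2}\max_i F(D_i)$. From there it uses $nV_n = O_{\mathbb Q_n}(1)$ and the almost-sure fact $n^{-1/2}\max_{i\le n}F(D_i)\to 0$, which is where $P_0F^2<\infty$ enters.

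You instead keep the weighted average. You control the random denominator $\bar W_n$ by Chebyshev, and the numerator by Markov together with the SLLN bound $\sup_n \Prob_n F<\infty$ on a fixed realization, obtaining $O_{\mathbb Q_n}(1)$.

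The paper's route eliminates the weights $W_i$ from that term entirely, but it needs the second moment of $F$ and the sharper $O(n^{-1})$ rate for $V_n$. Your route needs only $P_0F<\infty$ for this lemma, at the small cost of handling $\bar W_n$. The second-moment condition is still used elsewhere in the argument, e.g.\ in Lemma~\ref{Lemma8}.
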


\begin{lemma}[Conditional Multiplier Inequality]
\label{Lemma5}

Let $\mathcal F$ be a class of measurable functions. Fix $n\in\mathbb N$, and let $Z_1,\ldots,Z_n$ be arbitrary stochastic processes indexed by $\mathcal F$. Let $\xi = (\xi_1,\ldots,\xi_n)$ be an exchangeable random vector independent of $(Z_1,\ldots,Z_n)$. Then, conditional on $(Z_1,\ldots,Z_n)$,
\tightdisplay{
    \Exp_{\xi} \left\lVert \frac{1}{\sqrt{n}} \sum_{i=1}^n \xi_i Z_i \right\rVert_{\F} 
    \leq 
    2 \lVert \xi_1 \rVert_{2,1} \cdot \max_{1\leq k \leq n} \Exp_{R} \left\lVert \frac{1}{\sqrt{k}} \sum_{i=1}^k Z_{R_i} \right\rVert_{\F},
    \tag{B.4} \label{appendix:eq4}
}
where $R=(R_1,\ldots,R_n)$ is uniformly distributed over all permutations of $\{1,\ldots,n\}$ and is independent of $(Z_1,\ldots,Z_n)$. Here, $\Exp_\xi$ and $\Exp_R$ denote expectations over $\xi$ and the random permutation $R$, respectively, and $\lVert \xi_1 \rVert_{2,1} := \int_{0}^{\infty} \sqrt{\Prob(|\xi_1|>x)}dx$.
    
\end{lemma}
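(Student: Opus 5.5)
The plan is to follow the classical proof of the multiplier inequality for exchangeable weights (cf.\ \citet{vaart2023weak}). It has three steps: reduce to nonnegative weights, use exchangeability to rewrite $\sum_i \xi_i Z_i$ as a sum with ordered weights along a random permutation, then apply Abel summation. Throughout, we condition on $(Z_1,\ldots,Z_n)$ and treat it as fixed. We ignore measurability of the supremum, as the paper does.

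\emph{Step 1 (positive and negative parts).} Write $\xi_i=\xi_i^+-\xi_i^-$. The vectors $(\xi_i^+)_i$ and $(\xi_i^-)_i$ are each exchangeable and independent of the $Z_i$. By the triangle inequality for $\|\cdot\|_{\mathcal F}$,
\[
\Exp_\xi\Big\|\tfrac{1}{\sqrt n}\textstyle\sum_i \xi_i Z_i\Big\|_{\mathcal F}\le \Exp\Big\|\tfrac{1}{\sqrt n}\textstyle\sum_i \xi_i^+ Z_i\Big\|_{\mathcal F}+\Exp\Big\|\tfrac{1}{\sqrt n}\textstyle\sum_i \xi_i^- Z_i\Big\|_{\mathcal F}.
\]
Since $\Prob(\xi_1^{\pm}>x)\le \Prob(|\xi_1|>x)$, we have $\|\xi_1^\pm\|_{2,1}\le\|\xi_1\|_{2,1}$. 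It therefore suffices to prove the bound with constant $1$ for nonnegative exchangeable $\xi$; the two parts together then give the factor $2$.

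\emph{Step 2 (exchangeability and permutation).} Assume $\xi_i\ge0$ and let $\xi_{(1)}\ge\cdots\ge\xi_{(n)}$ be the order statistics, with $\xi_{(n+1)}\coloneqq0$. Take $R$ uniform over permutations and independent of $(\xi,Z)$. Exchangeability gives $(\xi_1,\ldots,\xi_n)\stackrel{d}{=}(\xi_{R^{-1}(1)},\ldots)$ in the sense that pairing the ordered weights with a uniformly random permutation of indices reproduces the law of $\xi$. Hence
\[
\sum_{i=1}^n \xi_i Z_i\stackrel{d}{=}\sum_{i=1}^n \xi_{(i)}Z_{R_i}.
\]
The one delicate point is to write this cleanly when $\xi$ has ties; we handle it by breaking ties with an independent uniform permutation. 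Now apply Abel summation, $\xi_{(i)}=\sum_{k\ge i}(\xi_{(k)}-\xi_{(k+1)})$. This gives
\[
\sum_i \xi_{(i)}Z_{R_i}=\sum_{k=1}^n(\xi_{(k)}-\xi_{(k+1)})\sum_{i\le k}Z_{R_i}.
\]
The coefficients are nonnegative, so the triangle inequality applies term by term. We then take $\Exp_R$ conditional on $\xi$, using that $R$ is independent of $\xi$. The result is
\[
\Exp\Big\|\textstyle\sum_i\xi_iZ_i\Big\|_{\mathcal F}\le \max_{k\le n}\Exp_R\Big\|\tfrac{1}{\sqrt k}\textstyle\sum_{i\le k}Z_{R_i}\Big\|_{\mathcal F}\cdot \Exp\sum_{k=1}^n\sqrt k\,(\xi_{(k)}-\xi_{(k+1)}).
\]

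\emph{Step 3 (bounding the weight functional).} This is the main step. Summing by parts again,
\[
\sum_k\sqrt k(\xi_{(k)}-\xi_{(k+1)})=\sum_k(\sqrt k-\sqrt{k-1})\,\xi_{(k)}.
\]
Writing $\xi_{(k)}=\int_0^\infty \mathbf 1\{\xi_{(k)}>t\}\,dt$ and telescoping turns this into
\[
\int_0^\infty\sqrt{N(t)}\,dt,\qquad N(t)\coloneqq\#\{i:\xi_i>t\}.
\]
By Fubini and Jensen's inequality, together with $\Exp N(t)=n\Prob(\xi_1>t)$, we get
\[
\Exp\int_0^\infty\sqrt{N(t)}\,dt\le\int_0^\infty\sqrt{n\Prob(\xi_1>t)}\,dt=\sqrt n\,\|\xi_1\|_{2,1}.
\]
Dividing by $\sqrt n$ and combining with Steps 1 and 2 yields \eqref{appendix:eq4}.
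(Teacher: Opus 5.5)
Your proof is correct and is essentially the paper's argument. The paper simply cites \citet[Lemma~3.7.7]{vaart2023weak} and sets $n_0=1$, which kills the $(n_0-1)$ remainder term. Your steps — the sign split, the exchangeability rewrite pairing order statistics with an independent uniform permutation, Abel summation, and the Jensen bound on $\int_0^\infty\sqrt{N(t)}\,dt$ — are exactly the proof of that lemma specialized to $n_0=1$ (the special tie-breaking device is harmless but unnecessary, since any measurable sorting permutation composed with an independent uniform permutation is again uniform and independent of $\xi$).
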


\begin{lemma}
\label{Lemma6}

Let $\F$ be a class of measurable functions.
Let $\{D_i\}_{i\geq1}\overset{\mathrm{i.i.d.}}{\sim}P_0$, define $\Prob_n\coloneqq n^{-1}\sum_{i=1}^n\delta_{D_i}$, and set $Z_i\coloneqq \delta_{D_i}-\Prob_n$.
Conditional on $(D_1,\ldots,D_n)$, let $R=(R_1,\ldots,R_n)$ be uniformly distributed over all permutations of $\{1,\ldots,n\}$, and let $\widehat D_1,\ldots,\widehat D_n \stackrel{\mathrm{i.i.d.}}{\sim} \mathbb{P}_n$.
For $1\leq k\leq n$, define the multinomial bootstrap process $\widehat{\mathbb{G}}_{n,k} \coloneqq k^{-1/2} \sum_{i=1}^k (\delta_{\widehat{D}_i} - \Prob_n)$. 
Then, conditional on $(D_1,\ldots,D_n)$, for every $1\leq k\leq n$,
\tightdisplay{
    \Exp_{R} \left\lVert \frac{1}{\sqrt{k}} \sum_{i=1}^k Z_{R_i} \right\rVert_{\F} 
    \leq
    \Exp_{\widehat{D}} \left\lVert \widehat{\mathbb{G}}_{n,k} \right\rVert_{\F},
    \tag{B.5} \label{appendix:eq5}
}
where $\Exp_R$ and $\Exp_{\widehat D}$ denote expectations over the random permutation $R$ and the bootstrap sample $(\widehat D_1,\ldots,\widehat D_n)$, respectively.

\end{lemma}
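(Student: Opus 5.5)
The plan is to deduce the bound from Hoeffding's classical comparison between sampling without and with replacement. The key observation is that both sides are expectations of one convex functional, evaluated at two different random count vectors.

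\textbf{Step 1 (reduction to counts).} Fix $(D_1,\ldots,D_n)$ and define the linear map $L:\mathbb R^n\to\ell^\infty(\mathcal F)$ by $L(c)f\coloneqq k^{-1/2}\sum_{j=1}^n c_j\,(f(D_j)-\mathbb P_n f)$. Define the convex function $\psi(c)\coloneqq\|L(c)\|_{\mathcal F}$ on $\mathbb R^n$, with values in $[0,\infty]$; it is convex as a supremum of absolute values of linear functionals of $c$.

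\emph{Without replacement.} Let $c^{\mathrm{wo}}_j\coloneqq\mathbf 1\{j\in\{R_1,\ldots,R_k\}\}$. Then $k^{-1/2}\sum_{i=1}^k Z_{R_i}=L(c^{\mathrm{wo}})$, and $c^{\mathrm{wo}}$ is the indicator of a uniformly drawn $k$-subset of $\{1,\ldots,n\}$.

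\emph{With replacement.} Write $\widehat D_i=D_{J_i}$ with $J_1,\ldots,J_k\stackrel{\mathrm{i.i.d.}}{\sim}\mathrm{Unif}\{1,\ldots,n\}$, and let $c^{\mathrm{w}}_j\coloneqq\#\{i\le k:J_i=j\}$. Then $\widehat{\mathbb G}_{n,k}=L(c^{\mathrm{w}})$.

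So \eqref{appendix:eq5} is equivalent to $\Exp\,\psi(c^{\mathrm{wo}})\le \Exp\,\psi(c^{\mathrm{w}})$. This is trivial if the right side is infinite. For finite $\mathcal F$ evaluated at finitely many points, $\psi$ is finite; I will assume it is finite and handle any degeneracy by monotone truncation to finite subclasses $\mathcal F_m\uparrow\mathcal F$ together with monotone convergence.

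\textbf{Step 2 (Hoeffding's comparison).} I will invoke Hoeffding's (1963, Theorem~4) result: for any continuous convex $\phi$, the sum of a without-replacement sample of size $k$ from a finite population is dominated in convex order by the sum of a with-replacement sample of the same size. Applied to the population of vectors $\{e_j\}_{j=1}^n\subset\mathbb R^n$, this gives $\Exp\,\phi(c^{\mathrm{wo}})\le\Exp\,\phi(c^{\mathrm{w}})$ for convex $\phi$ on $\mathbb R^n$. This is the vector version; Hoeffding's proof goes through verbatim.

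To make the argument self-contained, I would reproduce Hoeffding's argument as follows. Condition on the number $m$ of distinct indices among $J_1,\ldots,J_k$ and on their multiplicity pattern. The set of distinct indices is then a uniform $m$-subset, and $c^{\mathrm{w}}$ is a random assignment of the multiplicities to that subset. One then shows that $c^{\mathrm{wo}}$ can be written as a conditional expectation (an average) of suitably permuted copies of $c^{\mathrm{w}}$. Concretely, the goal is a coupling with $\Exp[\,c^{\mathrm{w}}\mid \mathcal H]$ equal in law to $c^{\mathrm{wo}}$, and Jensen's inequality then finishes. Hoeffding does this through the representation of the with-replacement statistic as a mixture of $U$-statistic-type averages of without-replacement samples of sizes $m\le k$.

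\textbf{Step 3.} Applying Step~2 with $\phi=\psi$ (or $\psi$ restricted to $\mathcal F_m$, then letting $m\to\infty$) yields \eqref{appendix:eq5} for every $1\le k\le n$.

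\textbf{Main obstacle.} The main obstacle is Step~2 in vector form: making the coupling, or the mixture representation, explicit so that Jensen's inequality applies to the convex $\psi$. Measurability of the supremum is sidestepped by the paper's convention of ignoring measurability together with the finite-subclass approximation. If a clean coupling proves awkward, I would fall back on citing the known form of this inequality, as used in van der Vaart and Wellner's treatment of the exchangeable bootstrap.
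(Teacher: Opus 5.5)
Your proposal is correct and matches the paper's proof. The paper's proof is a one-line appeal to Hoeffding's (1963, Theorem~4) convex-order comparison of sampling without versus with replacement, applied to the convex functional $\|\cdot\|_{\mathcal F}$; your reduction to $\Exp\,\psi(c^{\mathrm{wo}})\le\Exp\,\psi(c^{\mathrm{w}})$ and your coupling-plus-Jensen sketch spell out that same step. One small point on truncation: for uncountable $\mathcal F$ you cannot literally take finite $\mathcal F_m\uparrow\mathcal F$, but $c^{\mathrm{wo}}$ and $c^{\mathrm{w}}$ take only finitely many values, so choosing, for each value, finitely many $f$ that nearly attain the supremum gives the same conclusion.
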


\begin{lemma}
\label{Lemma7}

Let $\F$ be a class of measurable functions.
Let $\{D_i\}_{i \geq 1} \stackrel{\mathrm{i.i.d.}}{\sim} P_0$ , and define $\Prob_n\coloneqq n^{-1}\sum_{i=1}^n\delta_{D_i}$. Conditional on
$(D_1,\ldots,D_n)$, let $\widehat D_1,\ldots,\widehat D_n\stackrel{\mathrm{i.i.d.}}{\sim}\Prob_n$
and, for $1\leq k\leq n$, define $\widehat{\mathbb{G}}_{n,k} \coloneqq k^{-1/2} \sum_{i=1}^k (\delta_{\widehat{D}_i} - \Prob_n)$. Let $\{\varepsilon_i\}_{i \geq 1}$ be an i.i.d sequence of Rademacher variables independent of $\{D_i\}_{i \geq 1}$, and define $U_j \coloneqq \Exp_{\varepsilon} \lVert j^{-1/2} \sum_{i=1}^j \varepsilon_i \delta_{D_i} \rVert_{\F}$. Then, for $P_0^\infty$-almost every realization of $\{D_i\}_{i\geq1}$,
\tightdisplay{
    \limsup_{n \to \infty} \max_{1 \leq k \leq n} \Exp_{\widehat{D}} \left\lVert \widehat{\mathbb{G}}_{n,k} \right\rVert_{\F}
    \leq 
    16 \sqrt{2} \cdot \Exp\left( \sup_{j \geq 1} U_j \right). 
    \tag{B.6} \label{appendix:eq6}
}
Here, $\Exp_{\widehat D}$ and $\Exp_\varepsilon$ denote expectations over the bootstrap sample and the Rademacher variables, respectively, conditional on the original sample, while the expectation on the right-hand side is taken under $P_0^\infty$.

\end{lemma}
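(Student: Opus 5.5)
The plan is to reduce the bootstrap process to a Rademacher process over the original points in three steps: symmetrization, a multiplier inequality that absorbs the multinomial counts, and an almost-sure comparison with the averages $U_j$. The inequality should follow the Giné–Zinn argument for the bootstrap of the empirical process (as in the proof of Theorem~3.6.2 of \citet{vaart2023weak}). I work conditionally on $(D_1,\ldots,D_n)$ throughout and track constants only to check that they multiply to $16\sqrt2$.

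\textbf{Step 1 (symmetrization).} Given $\mathbb P_n$, the variables $\delta_{\widehat D_i}-\mathbb P_n$ are i.i.d.\ and centered. Standard symmetrization therefore gives
\[
\Exp_{\widehat D}\bigl\|\widehat{\mathbb G}_{n,k}\bigr\|_{\mathcal F}\le 2\,\Exp_{\widehat D,\varepsilon}\Bigl\|k^{-1/2}\textstyle\sum_{i=1}^k\varepsilon_i\delta_{\widehat D_i}\Bigr\|_{\mathcal F}.
\]

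\textbf{Step 2 (absorbing the multinomial counts).} Let $(N_{k,1},\ldots,N_{k,n})\sim\mathrm{Mult}(k;1/n,\ldots,1/n)$ be the number of times each $D_j$ is drawn. Then $\sum_{i\le k}\varepsilon_i\delta_{\widehat D_i}=\sum_{j\le n}S_j\,\delta_{D_j}$, where $S_j$ is a sum of $N_{k,j}$ independent signs. By sign symmetry, $S_j$ has the same law as $\varepsilon_j|S_j|$ with $(\varepsilon_j)$ independent of the $|S_j|$. I then apply the multiplier inequality of Lemma~\ref{Lemma5} with $Z_j=\varepsilon_j\delta_{D_j}$ and the exchangeable vector $\xi_j=\sqrt{n/k}\,|S_j|$, normalizing by $n^{-1/2}$. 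This bounds the symmetrized term by
\[
2\|\xi_1\|_{2,1}\max_{1\le m\le n}\Exp_{R,\varepsilon}\Bigl\|m^{-1/2}\textstyle\sum_{i=1}^m\varepsilon_i\delta_{D_{R_i}}\Bigr\|_{\mathcal F}.
\]
Uniformly in $k\le n$, $\|\xi_1\|_{2,1}$ stays bounded by a universal constant (at most $2\sqrt2$). The argument uses $\Prob(|\xi_1|>x)\le \Exp\xi_1^2/x^2=1$ for large $x$, together with Poissonization/Hoeffding-type tail bounds on $|S_j|$ for small $k/n$. Combining the factors $2\cdot2\cdot2\sqrt2$ gives $8\sqrt2$; the extra factor $2$ comes from Step~3 and yields $16\sqrt2$.

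\textbf{Step 3 (almost-sure comparison, the main obstacle).} I must show that, $P_0^\infty$-a.s.,
\[
\limsup_n\max_{m\le n}\Exp_{R}\Exp_\varepsilon\Bigl\|m^{-1/2}\textstyle\sum_{i\le m}\varepsilon_i\delta_{D_{R_i}}\Bigr\|_{\mathcal F}\le 2\,\Exp\sup_j U_j.
\]
The difficulty is that the permuted subsample is a draw without replacement from $\{D_1,\ldots,D_n\}$, whereas $U_j$ uses the first $j$ points.

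What I would try first is an exchangeability argument. For each fixed $m$, the map $n\mapsto \Exp_R\Exp_\varepsilon\|m^{-1/2}\sum_{i\le m}\varepsilon_i\delta_{D_{R_i}}\|_{\mathcal F}$ is a U-statistic of order $m$ in $D_1,\ldots,D_n$. It is therefore a reverse martingale, converging a.s.\ to $\Exp U_m\le\Exp\sup_jU_j$.

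Uniformity over $m\le n$ is the delicate part. For $m$ beyond a slowly growing level, I would compare sampling without replacement with the full sample via the contraction principle, $\Exp_\varepsilon\|\sum_{i\in A}\varepsilon_i\delta_{D_i}\|\le \Exp_\varepsilon\|\sum_{i\le n}\varepsilon_i\delta_{D_i}\|$. That gives a bound by $\sqrt{n/m}\,U_n$, which is useful only for $m\asymp n$. For intermediate $m$, I would instead bound the U-statistic by its conditional expectation given the tail $\sigma$-field, i.e.\ $\Exp[\sup_jU_j\mid\mathcal S_n]$ with $\mathcal S_n$ the symmetric $\sigma$-field. This is a reverse martingale, so Hewitt–Savage gives a.s.\ convergence to $\Exp\sup_jU_j$. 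The factor $2$ absorbs the slack from splitting these ranges. Finiteness of $\Exp\sup_jU_j$ is not needed here, because the inequality is trivial if it is infinite.
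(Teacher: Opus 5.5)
Your argument is correct, and in Steps~1--2 it takes a genuinely different route from the paper. The paper Poissonizes. It bounds $\Exp_{\widehat D}\|\widehat{\mathbb G}_{n,k}\|_{\F}$ by $4$ times a process with i.i.d.\ symmetrized Poisson$(k/(2n))$ multipliers $\widetilde N_i$, and writes $\widetilde N_i\stackrel{d}{=}\varepsilon_i|\widetilde N_i|$. It then applies Lemma~\ref{Lemma5} with $\xi_i=|\widetilde N_i|$ and quotes $\sqrt{n/k}\,\|\widetilde N_1\|_{2,1}\le 2\sqrt2$, for a total of $4\cdot 2\cdot 2\sqrt2=16\sqrt2$.

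You instead symmetrize (factor $2$) and keep the exact multinomial sign sums. Since $(|S_1|,\ldots,|S_n|)$ is exchangeable, which is all Lemma~\ref{Lemma5} requires, Poissonization is unnecessary and you get the sharper constant $8\sqrt2$. The price is that you must bound $\|\xi_1\|_{2,1}$ yourself, and Chebyshev alone cannot do it: it leaves $\int^\infty x^{-1}\,dx$, which diverges. Use the binomial tail instead. We have $|S_1|\le N_{k,1}$ and $\Prob(N_{k,1}\ge r)\le\binom{k}{r}n^{-r}\le (k/n)^r/r!$. Since $|S_1|$ is integer valued,
\[
\|\xi_1\|_{2,1}=\sqrt{n/k}\,\sum_{m\ge0}\Prob(|S_1|>m)^{1/2}\le\sum_{m\ge0}\frac{(k/n)^{m/2}}{\sqrt{(m+1)!}}\le\sum_{r\ge1}(r!)^{-1/2}<2\sqrt2,
\]
uniformly in $k\le n$.

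Step~3 is exactly the paper's argument, but you apply the symmetric-$\sigma$-field bound only to ``intermediate'' $m$. In fact it covers every $m\le n$ at once. For each $m\le n$, the permutation average equals $\Exp(U_m\mid\mathcal S_n)$, because conditioning on $\mathcal S_n$ averages over permutations of the first $n$ coordinates. This is at most $\Exp(\sup_{j\ge1}U_j\mid\mathcal S_n)$, which does not depend on $m$. So the maximum over $1\le m\le n$ is controlled in one stroke, and reverse-martingale convergence plus Hewitt--Savage give the limit $\Exp\sup_jU_j$ with factor $1$.

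You should therefore drop three pieces: the fixed-$m$ U-statistic limits (not uniform in $m$), the contraction comparison with $\sqrt{n/m}\,U_n$, and the extra factor $2$ ``from splitting.'' Your final constant is then $8\sqrt2\le16\sqrt2$.

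Your observation that integrability of $\sup_jU_j$ need not be checked is correct: the inequality is trivial when $\Exp\sup_jU_j=\infty$. This is cleaner than the paper's appeal to a maximal inequality whose hypotheses are not verified for the arbitrary class $\F$ allowed in the statement.
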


\begin{lemma}
\label{Lemma8}

Let $\{D_i\}_{i\geq1}\overset{\mathrm{i.i.d.}}{\sim}P_0$. Let $\mathcal F$ be a measurable VC-subgraph class with VC-dimension $v$ and envelope $F$ satisfying $\|F\|_{P_0,2}<\infty$. 
Let $\{\varepsilon_i\}_{i\geq1}$ be an i.i.d. sequence of Rademacher variables independent of $\{D_i\}_{i\geq1}$, and define $U_j \coloneqq \Exp_\varepsilon \| j^{-1/2} \sum_{i=1}^j \varepsilon_i\delta_{D_i} \|_{\mathcal F}$.
Then there exists a constant $C<\infty$, depending on $P_0$ only through $\|F\|_{P_0,2}$, such that
\tightdisplay{
    \Exp\!\left(\sup_{j\geq 1} U_j\right)
    \leq
    C\sqrt v.
    \tag{B.7}\label{appendix:eq7}
}

\end{lemma}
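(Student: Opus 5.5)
The plan is to bound each $U_j$ pointwise in $j$, conditionally on the data, by $\sqrt v$ times the empirical $L_2$ norm of the envelope. What then remains is a maximal inequality for the running averages of $F^2$, which I would prove using the reverse-martingale structure of sample means. The constant $C$ will come out as a universal constant times $\|F\|_{P_0,2}$.

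\emph{Step 1 (conditional entropy bound).} Fix $j$ and condition on $D_1,\ldots,D_j$. Let $P_j\coloneqq j^{-1}\sum_{i\le j}\delta_{D_i}$. The process $f\mapsto j^{-1/2}\sum_{i\le j}\varepsilon_i f(D_i)$ is sub-Gaussian with respect to the $L_2(P_j)$ semi-metric. Dudley's entropy integral bound, in the form of \citet[Thm.~2.14.1]{vaart2023weak}, then gives
\[
U_j \le K\, J(1,\mathcal F)\,\|F\|_{P_j,2},
\qquad
J(1,\mathcal F)\coloneqq \sup_{Q}\int_0^1\sqrt{1+\log N(\epsilon\|F\|_{Q,2},\mathcal F,L_2(Q))}\,d\epsilon ,
\]
where $K$ is universal and the supremum runs over finitely discrete $Q$.

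\emph{Step 2 (VC control of $J$).} Since $\mathcal F$ is VC-subgraph with dimension $v$, the uniform covering bound \citep[Thm.~2.6.7]{vaart2023weak} gives
\[
N(\epsilon\|F\|_{Q,2},\mathcal F,L_2(Q))\le K v(16e)^v\epsilon^{-2(v-1)}.
\]
Hence $1+\log N\le c\,v\,(1+\log(1/\epsilon))$ for a universal $c$. Because $\int_0^1\sqrt{1+\log(1/\epsilon)}\,d\epsilon<\infty$, this yields $J(1,\mathcal F)\le c'\sqrt v$. Combining with Step 1,
\[
\sup_{j\ge1}U_j\le K'\sqrt v\,\sup_{j\ge1}\Bigl(\tfrac1j\textstyle\sum_{i\le j}F^2(D_i)\Bigr)^{1/2}.
\]

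\emph{Step 3 (maximal inequality; the main obstacle).} The obstacle is that $\mathbb E\sup_j A_j$, with $A_j\coloneqq j^{-1}\sum_{i\le j}F^2(D_i)$, can be infinite when only $\mathbb E F^2<\infty$; controlling it directly would require an $L\log L$ condition. I would avoid this by exploiting the square root. The sequence $(A_j)$ is a reverse martingale with respect to the exchangeable (symmetric) filtration. Doob's weak-type maximal inequality therefore gives
\[
P_0^\infty\Bigl(\sup_j A_j>t\Bigr)\le \frac{\mathbb E_{P_0}F^2}{t}\qquad\text{for all } t>0.
\]
Integrating the tail of the square root then gives
\[
\mathbb E\sqrt{\sup_j A_j}=\int_0^\infty P\Bigl(\sup_j A_j>s^2\Bigr)\,ds\le\int_0^\infty\min\Bigl(1,\tfrac{\|F\|_{P_0,2}^2}{s^2}\Bigr)ds=2\|F\|_{P_0,2}.
\]
Therefore $\mathbb E(\sup_j U_j)\le 2K'\|F\|_{P_0,2}\sqrt v$, which is the claim with $C=2K'\|F\|_{P_0,2}$.

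Measurability of the suprema is handled by the paper's convention of ignoring it, or equivalently by reading expectations as outer expectations and using the measurability of $\mathcal F$.
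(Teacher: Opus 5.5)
Your proof is correct. It agrees with the paper up to the pathwise bound and then takes a different, more elementary route in the last step.

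Your Steps~1--2 match the paper in substance. Conditionally on the data, the Rademacher process is sub-Gaussian for the $L_2(\mathbb P_j)$ semimetric. Dudley's bound and the uniform VC covering bound then give the pathwise estimate $U_j\lesssim\sqrt v\,\|F\|_{\mathbb P_j,2}$. The paper adjoins $0$ to $\mathcal F$ and applies the sub-Gaussian maximal inequality directly rather than going through $J(1,\mathcal F)$; this only changes constants.

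The routes split at the last step. The paper takes expectations to get $\sup_j\mathbb E U_j\lesssim\sqrt v\,\|F\|_{P_0,2}$. It then invokes an external maximal inequality for Rademacher averages over increasing sample sizes (van der Vaart and Wellner, Corollary~2.9.9), of the form $\mathbb E\sup_jU_j\le C_3\sup_j\mathbb EU_j+C_3\|F\|_{P_0,2}$, and absorbs the additive term using $v\ge1$.

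You instead keep the pathwise bound, which reduces everything to $\mathbb E\sup_j(\mathbb P_jF^2)^{1/2}$. You control this with Doob's weak-type inequality for the nonnegative reverse martingale $\mathbb P_jF^2$, with respect to the same symmetric $\sigma$-fields $\mathcal S_j$ used in the proof of Lemma~\ref{Lemma7}. Integrating the tail of the square root then gives $\mathbb E\sup_j\|F\|_{\mathbb P_j,2}\le 2\|F\|_{P_0,2}$.

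What each route buys:
\begin{itemize}
\item Your argument is self-contained and gives an explicit constant linear in $\|F\|_{P_0,2}$. It also shows why a second moment suffices even though $\mathbb E\sup_j\mathbb P_jF^2$ can be infinite without an $L\log L$ condition.
\item The paper's route is more general, since it needs only a bound on $\sup_j\mathbb EU_j$ and not a pathwise envelope bound. But both proofs already have the pathwise bound, so the heavier maximal inequality is not actually needed here.
\end{itemize}

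Two cosmetic points. Theorem~2.14.1 is stated as an unconditional moment bound; the conditional inequality you use is established in its proof, or follows directly from the sub-Gaussian maximal inequality as in the paper. The exponent $2(v-1)$ in the covering bound depends on whether $v$ denotes the VC index or the VC dimension, which affects only constants.
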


\subsection{Proof of Theorem~\ref{Theorem1}}
\label{Appendix:proof Theorem1}

\begin{proof}[Proof of Theorem~\ref{Theorem1}]

The proof proceeds in four steps.

\medskip \noindent
\textbf{Step 1: Condition on the event of small $\rho_{\G}(P_0,\Prob_n)$.}

Let $\mathcal{E}_n \coloneqq \{ P: R(P,P_0) > C_n \sqrt{v/n} \}$ and $\mathcal{A}_n \coloneqq \left\{ \{D_i\}_{i=1}^n: \rho_{\G}(P_0,\Prob_n) \leq (C_n/4) \sqrt{v/n} \right\}$.
By Lemma~\ref{Lemma2}, $P_0^\infty(\mathcal A_n)\to 1$. Therefore, it suffices to prove \eqref{main:eq6} along data realizations $\{D_i\}_{i \geq 1}$ such that $\D_n \coloneqq \{D_i\}_{i=1}^n \in \mathcal A_n$.
To justify this reduction, decompose
\tightalign{
    \Pi \left( \mathcal{E}_n \mid \{D_i\}_{i=1}^n \right) 
    & = 
    \Pi \left( \mathcal{E}_n \mid \{D_i\}_{i=1}^n \right) \mathbf{1}\{\mathcal{A}_n\} 
    +
    \Pi \left( \mathcal{E}_n \mid \{D_i\}_{i=1}^n \right) \mathbf{1}\{\mathcal{A}_n^c\} \\
    & \leq 
    \Pi \left( \mathcal{E}_n \mid \{D_i\}_{i=1}^n \right) \mathbf{1}\{\mathcal{A}_n\} 
    + 
    \mathbf{1}\{\mathcal{A}_n^c\}.
}
Taking expectations under $P_0^{\infty}$ and using
$P_0^\infty(\mathcal A_n^c)=o(1)$ gives
\tightalign{
    \mathbb E_{P_0}
    \!\left[
    \Pi(\mathcal E_n \mid \{D_i\}_{i=1}^n)
    \right]
    \le
    \mathbb E_{P_0}
    \!\left[
    \Pi(\mathcal E_n \mid \{D_i\}_{i=1}^n)\mathbf{1}\{\mathcal A_n\}
    \right]
    + o(1).
}
Thus, it remains to show that the first term on the right-hand side is $o(1)$.

By Lemma~\ref{Lemma1}, $R(P_0;P) \leq 2 \{ \rho_{\G}(P_0,\Prob_n)+\rho_{\G}(\Prob_n,P) \}$. On $\mathcal A_n$, one has $\rho_{\G}(P_0,\Prob_n)\leq (C_n/4)\sqrt{v/n}$.
Hence, on $\mathcal A_n$,
\tightdisplay{
    \left\{
    P:R(P_0;P)>C_n\sqrt{\frac{v}{n}}
    \right\}
    \subseteq
    \left\{
    P:\rho_{\G}(\Prob_n,P)>\frac{C_n}{4}\sqrt{\frac{v}{n}}
    \right\}.
}
Therefore, on $\mathcal A_n$,
\tightdisplay{
    \Pi(\mathcal E_n\mid \{D_i\}_{i=1}^n)
    \leq
    \Pi\!\left(P:
    \rho_{\G}(\Prob_n,P)>\frac{C_n}{4}\sqrt{\frac{v}{n}}
    \,\middle|\, \{D_i\}_{i=1}^n
    \right).
}
Thus, it remains to show that the right-hand side is $o_{P_0}(1)$.

\medskip \noindent
\textbf{Step 2: Control $\rho_{\G}(\Prob_n,P)$ by a multiplier empirical process.}

I first isolate the multiplier empirical process used to control $\rho_{\G}(\Prob_n,P)$. Conditional on 
$\{D_i\}_{i=1}^n$, Lemma~\ref{Lemma3} gives, for every $f\in\F$,
\tightalign{
    \sqrt{n} (P - \Prob_n)f \,\, = \,\, & \sqrt{n} V_n \left( Qf - \frac{\sum_{i=1}^n W_i f(D_i)}{\sum_{i=1}^n W_i} \right) \\
    & + \frac{1}{\bar{W}_n} \left(\frac{1}{\sqrt{n}} \sum_{i=1}^n (W_i-1)(f(D_i) - \Prob_n f) \right),
    \tag{B.8} \label{appendix:eq8}
}
where $\bar{W}_n \coloneqq n^{-1}\sum_{i=1}^n W_i$. 
Conditional on $\{D_i\}_{i=1}^n$, let $\mathbb Q_n$ denote the joint law of $(V_n,Q,W_1,\ldots,W_n)$ defined in Lemma~\ref{Lemma3}. Under $\mathbb Q_n$, these variables are mutually independent, with $V_n\sim\mathrm{Beta}(|\alpha|,n)$, $Q\sim\mathrm{DP}(\alpha)$, and $W_1,\ldots,W_n\overset{\mathrm{i.i.d.}}{\sim}\mathrm{Exp}(1)$.
Define the \emph{multiplier empirical process} $\nu_n(f)
\coloneqq n^{-1/2} \sum_{i=1}^n (W_i-1)(f(D_i)-\Prob_n f)$.

Let $\xi_i\coloneqq W_i-1$ denote the centered multipliers. Under $\mathbb Q_n$, the $\xi_i$'s are i.i.d., independent of $(V_n,Q)$, and satisfy $\Exp_{\mathbb Q_n}[\xi_i]=0$ and $\Var_{\mathbb Q_n}(\xi_i)=1$. Setting $Z_i\coloneqq\delta_{D_i}-\Prob_n$, the multiplier empirical process can be written as $\nu_n \coloneqq n^{-1/2} \sum_{i=1}^n \xi_i Z_i$.


Taking the supremum over $f\in\mathcal F$ and applying the triangle inequality yields
\tightdisplay{
    \sqrt{n} \, \rho_{\G}(\Prob_n, P) \leq 
    \underbrace{
    \sqrt{n} V_n \left\lVert Q - \frac{\sum_{i=1}^n W_i \delta_{D_i}}{\sum_{i=1}^n W_i} \right\rVert_{\mathcal{F}}
    }_{\eqqcolon T_{1n} } 
    + 
    \underbrace{\frac{1}{\bar{W}_n} \left\lVert \frac{1}{\sqrt{n}} \sum_{i=1}^n \xi_i Z_i \right\rVert_{\mathcal{F}}
    }_{\eqqcolon T_{2n}}.
    \tag{B.9}  \label{appendix:eq9}
}

\medskip \noindent
\textbf{Step 3: Conditional probability bounds.}

By the union bound and~\eqref{appendix:eq9},
\[
    \Pi\!\left(P:
    \sqrt{n} \, \rho_{\G}(\Prob_n, P) > \frac{C_n}{4}\sqrt v
    \,\middle|\,
    \D_n
    \right)
    \le
    \Pi\!\left(P:
    T_{1n}>\frac{C_n}{8}\sqrt v
    \,\middle|\,
    \D_n
    \right)
    +
    \Pi\!\left(P:
    T_{2n}>\frac{C_n}{8}\sqrt v
    \,\middle|\,
    \D_n
    \right).
\]

\paragraph{Control of $T_{1n}$.}
By Lemma~\ref{Lemma4}, for $P_0^\infty$-almost every data realization, $T_{1n}=o_{\mathbb Q_n}(1)$.
Since $C_n\sqrt v/8$ is bounded away from zero
for all sufficiently large $n$,
\[
    \Pi\!\left(P:
    T_{1n}>\frac{C_n}{8}\sqrt v
    \,\middle|\,\D_n
    \right)
    =
    \mathbb Q_n\!\left(
    T_{1n}>\frac{C_n}{8}\sqrt v
    \right)
    =
    o(1), \qquad P_0^{\infty}\text{-almost surely}.
\]

\paragraph{Control of $T_{2n}$.}
Recall that $T_{2n} = \bar{W}_n^{-1} \|\nu_n\|_{\mathcal{F}}$, where $\nu_n=n^{-1/2} \sum_{i=1}^n \xi_i Z_i$.
Conditional on $\D_n$, the signed measures $Z_1,\ldots,Z_n$ are fixed, so the randomness in $T_{2n}$ arises solely from the multipliers $(\xi_1,\ldots,\xi_n)$. 
Let $\Prob_\xi$ denote the common law of the $\xi_i$. I write $\Prob_\xi^{\otimes n}$ for the joint law of $(\xi_1,\ldots,\xi_n)$ and $\Exp_\xi$ for expectation under this joint law.

Since $\bar W_n=n^{-1}\sum_{i=1}^n(\xi_i+1)\to 1$ in
$\Prob_\xi^{\infty}$-probability,
$\Prob_\xi^{\otimes n}(\bar W_n^{-1}>2)=o(1)$. Moreover,
\[
    \left\{ T_{2n} > \frac{C_n}{8}\sqrt v \right\}
    \subseteq
    \{\bar W_n^{-1}>2\}
    \cup
    \left\{
    \|\nu_n\|_{\mathcal F} > \frac{C_n}{16}\sqrt v
    \right\}.
\]
Therefore, for every realization of $\{D_i\}_{i \geq 1}$,
\tightalign{
    \Pi\!\left(P:
    T_{2n}>\frac{C_n}{8}\sqrt v
    \,\middle|\,\D_n
    \right)
    =
    \Prob_\xi^{\otimes n}\!\left(
    T_{2n}>\frac{C_n}{8}\sqrt v
    \right)
    \leq
    o(1)
    +
    \Prob_\xi^{\otimes n}\!\left(
    \|\nu_n\|_{\mathcal F}>\frac{C_n}{16}\sqrt v
    \right).
}
By Markov's inequality,
\tightdisplay{
    \Prob_\xi^{\otimes n}\!\left(
    \|\nu_n\|_{\mathcal F}>\frac{C_n}{16}\sqrt v
    \right)
    \leq
    \frac{16}{C_n\sqrt v}\,
    \Exp_\xi\|\nu_n\|_{\mathcal F}.
}
Combining this bound with the control of $T_{1n}$ gives, for
$P_0^\infty$-almost every data realization,
\tightdisplay{
    \Pi\!\left(P:
    \rho_{\G}(\Prob_n,P)>\frac{C_n}{4}\sqrt{\frac v n}
    \,\middle|\,\D_n
    \right)
    \leq
    o(1)
    +
    \frac{16}{C_n\sqrt v}\,
    \Exp_\xi\|\nu_n\|_{\mathcal F}.
}
Thus, it remains to show that there exists a constant $K<\infty$ such that
\tightdisplay{
    \limsup_{n\to\infty}
    \Exp_\xi\|\nu_n\|_{\mathcal F}
    \leq K\sqrt v,
    \qquad
    P_0^\infty\text{-a.s.}
    \tag{B.10}\label{appendix:eq10}
}

\medskip \noindent
\textbf{Step 4: Uniform bound for the multiplier empirical process.}

I now establish~\eqref{appendix:eq10}. All expectations below are conditional on $\mathcal{D}_n$. Applying Lemma~\ref{Lemma5} with $Z_i=\delta_{D_i}-\Prob_n$ gives
\tightdisplay{
    \Exp_\xi
    \left\|
    \frac{1}{\sqrt n}\sum_{i=1}^n \xi_i Z_i
    \right\|_{\mathcal F}
    \leq
    2\|\xi_1\|_{2,1}
    \max_{1\leq k\leq n}
    \Exp_R
    \left\|
    \frac{1}{\sqrt k}\sum_{i=1}^k Z_{R_i}
    \right\|_{\mathcal F},
}
where $R=(R_1,\ldots,R_n)$ is uniformly distributed over permutations of $\{1,\ldots,n\}$.
By Lemma~\ref{Lemma6}, for every $1\leq k\leq n$,
\tightdisplay{
    \Exp_R
    \left\|
    \frac{1}{\sqrt k}\sum_{i=1}^k Z_{R_i}
    \right\|_{\mathcal F}
    \leq
    \Exp_{\widehat D}
    \left\|
    \widehat{\mathbb G}_{n,k}
    \right\|_{\mathcal F},
    \qquad
    \widehat{\mathbb G}_{n,k}
    \coloneqq
    \frac{1}{\sqrt k}\sum_{i=1}^k(\delta_{\widehat D_i}-\Prob_n).
}
Define $U_j \coloneqq \Exp_\varepsilon \| j^{-1/2} \sum_{i=1}^j\varepsilon_i\delta_{D_i} \|_{\mathcal F}$. Taking the maximum over $1 \leq k \leq n$ and applying Lemma~\ref{Lemma7} gives
\tightdisplay{
    \limsup_{n\to\infty}
    \Exp_\xi
    \left\|
    \frac{1}{\sqrt n}\sum_{i=1}^n \xi_i Z_i
    \right\|_{\mathcal F}
    \leq
    32\sqrt 2\,\|\xi_1\|_{2,1}
    \Exp\!\left(\sup_{j\geq1} U_j\right),
    \qquad
    P_0^\infty\text{-a.s.}
}

Finally, Lemma~\ref{Lemma8} implies that
\tightdisplay{
    \Exp\left(\sup_{j\geq1}U_j\right) \leq C\sqrt v,
}
for some $C<\infty$. Since $\|\xi_1\|_{2,1}<\infty$, there exists a constant $K<\infty$ such that
\tightdisplay{
    \limsup_{n\to\infty}
    \Exp_\xi\|\nu_n\|_{\mathcal F}
    \leq
    K\sqrt v,
    \qquad
    P_0^\infty\text{-a.s.}
}
This proves~\eqref{appendix:eq10}.
Combining Steps~1--4 completes the proof.
\end{proof}


\subsection{Proof of Theorem~\ref{Theorem2}}
\label{Appendix:proof Theorem2}

\begin{proof}[Proof of Theorem~\ref{Theorem2}]

For $i= 1,2$, write
\tightdisplay{
    W_{\G_i}^{\star}(P)\coloneqq \sup_{G\in\G_i} W(P;G),
    \qquad
    \rho_{\G_i}(P_0,P)
    \coloneqq
    \sup_{G\in\G_i}|W(P_0;G)-W(P;G)|.
}
Let $\Delta_0 \coloneqq W_{\G_1}^{\star}(P_0)-W_{\G_2}^{\star}(P_0)$ and  $\Delta(P) \coloneqq W_{\G_1}^{\star}(P)-W_{\G_2}^{\star}(P)$.

For each $i=1,2$,
\tightdisplay{
    \bigl|W_{\G_i}^{\star}(P_0)-W_{\G_i}^{\star}(P)\bigr|
    \leq
    \sup_{G\in\G_i}|W(P_0;G)-W(P;G)|
    =
    \rho_{\G_i}(P_0,P).
}
Therefore,
\[
    |\Delta(P)-\Delta_0|
    \leq
    \rho_{\G_1}(P_0,P)+\rho_{\G_2}(P_0,P).
    \tag{B.11}\label{appendix:eq11}
\]
By the posterior concentration bound established in the proof of
Theorem~\ref{Theorem1}, for each $i=1,2$ and every fixed $\eta>0$,
\[
    \Pi(P: \rho_{\G_i}(P_0,P)>\eta \mid \mathcal{D}_n )
    \stackrel{P_0}{\to} 0.
    \tag{B.12}\label{appendix:eq12}
\]

\paragraph{Part (i).}
First suppose $\Delta_0\neq 0$. By~\eqref{appendix:eq11},
\tightdisplay{
    \{\sign(\Delta(P))\neq \sign(\Delta_0)\}
    \subseteq
    \left\{
    \rho_{\G_1}(P_0,P)+\rho_{\G_2}(P_0,P)\geq |\Delta_0|
    \right\}
    \subseteq
    \bigcup_{i=1}^2
    \left\{
    \rho_{\G_i}(P_0,P)\geq \frac{|\Delta_0|}{2}
    \right\}.
}
Thus, the union bound and~\eqref{appendix:eq12} imply $\Pi( \sign(\Delta(P))\neq \sign(\Delta_0) \mid \D_n) \stackrel{P_0}{\to} 0$.

\paragraph{Part (ii).}
Now suppose $\Delta_0=0$. Then~\eqref{appendix:eq11} gives
$|\Delta(P)|\leq \rho_{\G_1}(P_0,P)+\rho_{\G_2}(P_0,P)$. Hence, for any $\varepsilon>0$,
\tightdisplay{
    \{P: |\Delta(P)|\geq \varepsilon\}
    \subseteq
    \bigcup_{i=1}^2
    \left\{P:
    \rho_{\G_i}(P_0,P)\geq \frac{\varepsilon}{2}
    \right\}.
}
Again, the union bound and~\eqref{appendix:eq12} imply $\Pi (P: |\Delta(P)|\geq \varepsilon \mid \D_n) \stackrel{P_0}{\to} 0$,
or equivalently,
\tightdisplay{
    \Pi\!\left(P:
    \bigl|W_{\G_1}^{\star}(P)-W_{\G_2}^{\star}(P)\bigr|<\varepsilon
    \,\middle|\,\D_n
    \right)
    \stackrel{P_0}{\to}1.
}
This proves part~(ii).
\end{proof}


\subsection{Proof of Theorem~\ref{Theorem3}}
\label{Appendix:proof Theorem3}

\begin{proof}[Proof of Theorem~\ref{Theorem3}]

When $|\alpha|=0$, PEWM reduces to EWM, and the result follows directly from \citet[Theorem~2.1]{kitagawa2018should}. Hence, suppose $|\alpha|>0$.

Write $\widehat G(\alpha)\coloneqq \widehat G_{\mathrm{PEWM}}(\alpha)$ and define
\tightdisplay{
    W_{n,\alpha}(G)
    \coloneqq
    \frac{|\alpha|}{|\alpha|+n}W(\widetilde\alpha;G)
    +
    \frac{n}{|\alpha|+n}W(\Prob_n;G).
}
By Proposition~\ref{Proposition1}, $\widehat G(\alpha)$ maximizes
$W_{n,\alpha}$ over $\G$. Therefore,
\tightdisplay{
    W_{\G}^{\star}(P_0)-W(P_0;\widehat G(\alpha))
    \leq
    2\sup_{G\in\G}|W_{n,\alpha}(G)-W(P_0;G)|.
}
Using the definition of $W_{n,\alpha}$ gives
\tightdisplay{
    W_{\G}^{\star}(P_0)-W(P_0;\widehat G(\alpha))
    \leq
    \frac{2|\alpha|}{|\alpha|+n}
    \sup_{G\in\G}|W(\widetilde\alpha;G)-W(P_0;G)|
    +
    \frac{2n}{|\alpha|+n}
    \sup_{G\in\G}|W(\Prob_n;G)-W(P_0;G)|.
}
Taking expectations under $P_0^{\otimes n}$ yields
\tightdisplay{
    \mathcal R_n(\alpha)
    \leq
    \frac{2|\alpha|}{|\alpha|+n}
    \sup_{G\in\G}|W(\widetilde\alpha;G)-W(P_0;G)|
    +
    \frac{2n}{|\alpha|+n}
    \Exp_{P_0}
    \left[
    \sup_{G\in\G}|W(\Prob_n;G)-W(P_0;G)|
    \right].
}

By \citet[Theorem~2.1]{kitagawa2018should}, there exists a constant $C<\infty$
such that
\[
    \sup_{\mathbf P_0^\star\in\mathcal P(M)}
    \Exp_{P_0}
    \left[
    \sup_{G\in\G}|W(\Prob_n;G)-W(P_0;G)|
    \right]
    \leq
    C\sqrt{\frac v n}.
\]
To control the prior term, observe that $|W(Q;G)|\leq \Exp_Q(|Z_1|+|Z_0|)$ for every probability measure $Q$ and every $G \in \mathcal{G}$. 
Assumption~\ref{Assumption:base measure} and the definition of $\mathcal P(M)$ then imply that, for some constant $C_\alpha<\infty$,
\tightdisplay{
    \sup_{\mathbf P_0^\star\in\mathcal P(M)}
    \sup_{G\in\G}|W(\widetilde\alpha;G)-W(P_0;G)|
    \leq
    C_\alpha .
}
Therefore,
\[
    \sup_{\mathbf P_0^\star\in\mathcal P(M)}
    \mathcal R_n(\alpha)
    \leq
    \frac{2|\alpha|}{|\alpha|+n}C_\alpha
    +
    \frac{2n}{|\alpha|+n}C\sqrt{\frac v n}.
\]
Since $|\alpha|<\infty$, the first term is $O(n^{-1})$ and the second is $O(n^{-1/2})$. Thus, there exists a constant $C_1<\infty$ such that
\tightdisplay{
    \sup_{\mathbf P_0^\star\in\mathcal P(M)}
    \mathcal R_n(\alpha)
    \leq
    C_1\sqrt{\frac v n}.
}
\end{proof}


\subsection{Proof of Proposition~\ref{Proposition1}}
\label{Appendix:proof Proposition1}

\begin{proof}[Proof of Proposition~\ref{Proposition1}]

Recall that $W(P;G)=\int f(\cdot; G) \,dP$. 
By \citet[Theorem~3]{ferguson1973bayesian}, if $P\sim \mathrm{DP}(\nu)$ and $f$ is integrable with respect to $\nu$, then $\Exp\!\left[\int f\,dP\right] = \int f\,d\bar\nu$, where $\bar\nu \coloneqq \nu / \nu(\mathcal D)$.

Suppose first that $|\alpha|>0$, and let
$\widetilde\alpha\coloneqq\alpha/|\alpha|$. Applying the preceding property with $\nu=\alpha+n\Prob_n$ gives
\tightdisplay{
    \int W(P;G)\,d\Pi(P\mid\mathcal D_n)
    =
    \int f(\cdot; G) \,d\left(\frac{\alpha+n\mathbb P_n}{|\alpha|+n}\right)
    =
    \frac{|\alpha|}{|\alpha|+n}W(\widetilde\alpha;G)
    +
    \frac{n}{|\alpha|+n}W(\mathbb P_n;G).
}
It follows that
\tightdisplay{
    \widehat G_{\mathrm{PEWM}}(\alpha)
    \in
    \arg\max_{G\in\mathcal G}
    \left\{
    \frac{|\alpha|}{|\alpha|+n} W(\widetilde\alpha;G)
    +
    \frac{n}{|\alpha|+n} W(\mathbb P_n;G)
    \right\}.
}

When $|\alpha|=0$, the posterior is $\mathrm{DP}(n\Prob_n)$. Applying the same property with $\nu=n\Prob_n$ yields
\tightdisplay{
    \int W(P;G)\,d\Pi(P\mid\mathcal D_n)
    =
    \int f(\cdot; G) \,d\mathbb P_n
    =
    W(\mathbb P_n;G).
}
Thus, posterior expected welfare coincides with empirical welfare for every $G\in\mathcal G$, and $\widehat G_{\mathrm{PEWM}}(0) = \widehat G_{\mathrm{EWM}}$.
\end{proof}


\subsection{Proof of Proposition~\ref{Proposition2}}
\label{Appendix:proof Proposition2}

\begin{proof}[Proof of Proposition~\ref{Proposition2}]

Fix $\tau\in\mathcal X$. Write $\mu_t(\tau)\coloneqq \Exp_{\mathbf P_0^\star}[Y(t)\mid X=\tau]$ ($t=0,1$) and $p_\tau\coloneqq \mathbf P_0^\star(X=\tau)$.
By Assumption~\ref{Assumption:DGP}, $p_\tau>0$ and
$e(\tau)\in[\kappa,1-\kappa]$. Moreover,
\tightdisplay{
    \Exp_{P_0}\!\left[(Z_1-Z_0)\mathbf 1\{X=\tau\}\right]
    =
    p_\tau\{\mu_1(\tau)-\mu_0(\tau)\}
    =
    p_\tau\mathrm{CATE}(\tau).
}

Because $\mathcal G=2^{\mathcal X}$, the PEWM rule treats cell $\tau$ if and only if the following cell-specific welfare contrast is positive:
\tightdisplay{
    a_{1n}(\tau)
    \coloneqq
    \frac{|\alpha|}{|\alpha|+n}
    \Exp_{\widetilde\alpha}\!\left[(Z_1-Z_0)\mathbf 1\{X=\tau\}\right]
    +
    \frac{1}{|\alpha|+n}
    \sum_{i=1}^n
    \left\{
    \frac{Y_iT_i}{e(X_i)}
    -
    \frac{Y_i(1-T_i)}{1-e(X_i)}
    \right\}
    \mathbf 1\{X_i=\tau\}.
}
The first term is $o(1)$ by Assumption~\ref{Assumption:base measure}, and the second term converges almost surely to $p_\tau\mathrm{CATE}(\tau)$ by the strong law of large numbers. Hence, $a_{1n}(\tau) \stackrel{\mathrm{a.s.}}{\longrightarrow} p_\tau\mathrm{CATE}(\tau)$.

The EWM rule is obtained analogously, with cell-specific welfare contrast
\tightdisplay{
    a_{2n}(\tau)
    \coloneqq
    \frac1n
    \sum_{i=1}^n
    \left\{
    \frac{Y_iT_i}{e(X_i)}
    -
    \frac{Y_i(1-T_i)}{1-e(X_i)}
    \right\}
    \mathbf 1\{X_i=\tau\},
}
and the same argument gives $a_{2n}(\tau) \stackrel{\mathrm{a.s.}}{\longrightarrow} 
p_\tau\mathrm{CATE}(\tau)$.

For the LIB rule, the cell-specific welfare contrast is
\tightdisplay{
    a_{3n}(\tau)
    \coloneqq
    \frac{\int y\,d\beta_{1,\tau}(y)
    +\sum_{i=1}^nY_iT_i\mathbf 1\{X_i=\tau\}}
    {|\beta_{1,\tau}|+n(1,\tau)}
    -
    \frac{\int y\,d\beta_{0,\tau}(y)
    +\sum_{i=1}^nY_i(1-T_i)\mathbf 1\{X_i=\tau\}}
    {|\beta_{0,\tau}|+n(0,\tau)}.
}
For the first term of $a_{3n}(\tau)$, the strong law of large numbers gives $n^{-1} \sum_{i=1}^nY_i T_i\mathbf 1\{X_i=\tau\} \stackrel{\mathrm{a.s.}}{\longrightarrow} p_\tau e(\tau)\mu_1(\tau)$ and $n(1,\tau)/n \stackrel{\mathrm{a.s.}}{\longrightarrow} p_\tau e(\tau)$.
Since $p_\tau e(\tau)>0$ and the prior mass and first moment of
$\beta_{1,\tau}$ are finite, this term converges almost surely to
$\mu_1(\tau)$. Analogously, the second term of $a_{3n}(\tau)$ converges
almost surely to $\mu_0(\tau)$. Therefore, $a_{3n}(\tau) \stackrel{\mathrm{a.s.}}{\longrightarrow} \mu_1(\tau)-\mu_0(\tau) \eqqcolon \mathrm{CATE}(\tau)$.

For the CES rule, the cell-specific welfare contrast is
\tightdisplay{
    a_{4n}(\tau)
    \coloneqq
    \frac{\sum_{i=1}^n Y_iT_i\mathbf 1\{X_i=\tau\}}{n(1,\tau)}
    -
    \frac{\sum_{i=1}^nY_i(1-T_i)\mathbf 1\{X_i=\tau\}}{n(0,\tau)}.
}
The same argument gives $a_{4n}(\tau) \stackrel{\mathrm{a.s.}}{\longrightarrow} \mathrm{CATE}(\tau)$.

Let $\widehat G_{j,n}$ denote the rule selected by procedure $j$, where $j=1,2,3,4$ correspond to PEWM, EWM, LIB, and CES, respectively. Each procedure $j$
treats cell $\tau$ when $a_{jn}(\tau)$ is positive. Since $p_\tau>0$ and $\mathrm{CATE}(\tau)\neq 0$, the continuous mapping theorem implies $\mathbf 1\{\tau\in\widehat G_{j,n}\} \stackrel{\mathrm{a.s.}}{\longrightarrow} \mathbf 1\{\mathrm{CATE}(\tau)>0\}$, $j=1,\ldots,4$.
\end{proof}

\appendix
\setcounter{section}{2} 

\numberwithin{figure}{section}
\numberwithin{table}{section}

\renewcommand{\thefigure}{\thesection.\arabic{figure}}
\renewcommand{\thetable}{\thesection.\arabic{table}}

\numberwithin{assumption}{section}
\numberwithin{theorem}{section}
\numberwithin{lemma}{section}
\numberwithin{definition}{section}
\numberwithin{proposition}{section}
\numberwithin{remark}{section}
\numberwithin{corollary}{section}
\numberwithin{example}{section}

\renewcommand{\theassumption}{\thesection.\arabic{assumption}}
\renewcommand{\thetheorem}{\thesection.\arabic{theorem}}
\renewcommand{\thelemma}{\thesection.\arabic{lemma}}
\renewcommand{\thedefinition}{\thesection.\arabic{definition}}
\renewcommand{\theproposition}{\thesection.\arabic{proposition}}
\renewcommand{\theremark}{\thesection.\arabic{remark}}
\renewcommand{\thecorollary}{\thesection.\arabic{corollary}}
\renewcommand{\theexample}{\thesection.\arabic{example}}

\section{Extensions}
\label{supplement:extensions}

This section provides four extensions of \textit{Nonparametric Bayesian Policy Learning} (NBPL): alternative welfare criteria, multi-valued discrete treatments, demeaned outcomes, and capacity constraints. The first three extensions are accommodated by augmenting the reduced-form vector used in the main text. 
I derive posterior regret contraction rates for these extensions except for alternative welfare criteria, which involve criterion-specific empirical processes and possibly require different proof techniques.

\subsection{Alternative Welfare Criteria}
\label{supplement:alternative criteria}

\paragraph{Equality-minded welfare.}

For a treatment rule $G$, let
\tightdisplay{
    F_{P_0^{\star},G}(y)
    \coloneqq
    \Exp_{P_0^{\star}}\!\left[
    \mathbf{1}\{Y(1)\le y,\, X\in G\}
    +
    \mathbf{1}\{Y(0)\le y,\, X\notin G\}
    \right]
}
denote the realized-outcome distribution induced by assigning treatment according to $G$.

\citet{kitagawa2021equality} propose a rank-dependent welfare criterion,
\tightdisplay{
    W_{\Lambda}(P_0^{\star};G)
    \coloneqq 
    \int_{0}^{\infty} \Lambda\!\left(F_{P_0^{\star},G}(y)\right)\, d y,
}
where $\Lambda: [0,1] \to [0,1]$ is a nonincreasing, convex function with $\Lambda(0)=1$, $\Lambda(1)=0$. 
Relative to utilitarian welfare, this criterion places greater weight on lower-ranked outcomes.

\paragraph{Quantile-optimal welfare.}

\citet{wang2018quantile} consider maximizing a specified quantile of the realized outcome distribution under rule $G$:
\tightdisplay{
    \max_{G\in\mathcal G}\; \mathcal Q_{P_0^\star,G}(\tau),
}
where $\mathcal Q_{P_0^\star,G}(\tau) \coloneqq \inf \{y:F_{P_0^\star,G}(y)\ge \tau\}$, $\tau \in (0,1)$,
denotes the $\tau$-quantile of the realized-outcome distribution under $G$. This criterion targets the lower tail of the realized-outcome distribution directly, rather than aggregating welfare over the full distribution.

\paragraph{Fairness-oriented welfare.}

\citet{fang2023fairness} study welfare maximization subject to a lower bound on a specified quantile of the realized-outcome distribution:
\tightdisplay{
    \max_{G\in\mathcal G}\, W(P_0^\star;G)
    \quad \text{subject to} \quad
    \mathcal Q_{P_0^\star,G}(\tau)\ge q,
}
where $W(P_0^\star;G)=\int y\,dF_{P_0^\star,G}(y)$, $\tau\in(0,1)$, and $q\in\mathbb R$.
The constraint requires the $\tau$-quantile of the outcome distribution induced by the selected rule to exceed the prescribed threshold $q$; for small $\tau$, this restricts the lower tail of the induced distribution.
Since both $W(P_0^\star;G)$ and $\mathcal Q_{P_0^\star,G}(\tau)$ are functions of $F_{P_0^\star,G}$ alone, this criterion depends on $P_0^\star$ only through the realized-outcome distribution $F_{P_0^\star,G}$.

\paragraph{Incorporating these criteria into NBPL.}

The preceding three welfare criteria are accommodated by NBPL because they depend on $P_0^\star$ only through $F_{P_0^\star,G}$, the realized-outcome distribution induced by assigning treatment according to $G$. Under Assumption~\ref{Assumption:DGP}, this distribution is point identified. 
In particular,
\tightdisplay{
    F_{P_0^{\star},G}(y)
    =
    \Exp_{P_0^{\star}} \left[
    \left(
    \frac{T}{e(X)} \mathbf{1}\{X \in G\}
    +
    \frac{1-T}{1-e(X)} \mathbf{1}\{X \notin G\}
    \right)
    \mathbf{1}\{Y \le y\}
    \right].
}
Thus, $F_{P_0^\star,G}$ is a deterministic function of the distribution of the \emph{augmented reduced-form vector}
\[
    \check D \coloneqq \left(Y,\frac{T}{e(X)},\frac{1-T}{1-e(X)},X\right).
\]
Accordingly, the NBPL framework and Algorithm~\ref{Algorithm:NBPL} extend directly to these criteria by replacing the reduced-form vector in the main text with $\check D$. Regret convergence rates for these alternative criteria require case-specific analysis, which I leave for future work.


\subsection{Multiple Treatments}
\label{supplement:multivalued treatment}

The binary-treatment framework extends naturally to multi-valued discrete treatments. Suppose there are $K>2$ treatments, denoted by $T \in\{1,\ldots,K\} $. Let $Y(1), \ldots, Y(K)$ denote the potential outcomes and let $e_k(x) \coloneqq \mathbf P_0^\star(T=k\mid X=x)$ ($k=1,\ldots,K$)
denote the propensity scores, where $\mathbf{P}_0^{\star}$ denotes the true distribution of $(Y(1),\ldots,Y(K),T,X)$ in the experimental population.
A treatment assignment rule is a $K$-partition of the covariate space, $G=(G_1,\ldots,G_K)$, where $G_1,\ldots,G_K\subseteq\mathcal X$ are disjoint and
$\bigcup_{k=1}^K G_k=\mathcal X$. Treatment $k$ is assigned to units with $X\in G_k$.

Let $\mathcal G$ be a VC class of subsets of $\mathcal X$ with VC-dimension $v$, that is, Assumption~\ref{Assumption:VC} holds.
Define the class of multiple-treatment assignment rules\footnote{For example, when $\mathcal X=\mathbb R$, the class of intervals $\{(x,x']:-\infty\leq x < x'\leq\infty\}\cup\{\emptyset\}$ is a VC class and can be used to form $K$-partitions of $\mathbb R$.}
\tightdisplay{
    \mathscr{G}
    \coloneqq
    \left\{
    G=(G_1,\ldots,G_K):
    G_k\in\mathcal G,\ k=1,\ldots,K,
    \text{ and } (G_1,\ldots,G_K) \text{ partitions } \mathcal X
    \right\}.
    \tag{C.1} \label{supplement:eq1}
}
The following Assumption~\ref{Assumption:multivalued treatment} replaces Assumption~\ref{Assumption:DGP}.

\begin{assumption}
\label{Assumption:multivalued treatment}
\begin{enumerate}[label=(\alph*)]
    \item \textup{(External Validity)}
    The target and experimental populations share the same joint distribution
    of potential outcomes and covariates:
    $Q_0^\star = P_0^\star$.

    \item \textup{(Unconfoundedness)}
    $(Y(1),\ldots,Y(K))\indep T \mid X$.

    \item \textup{(Strict Overlap)}
    There exists $\kappa\in(0,1/K)$ such that $e_k(x)\geq \kappa$ $(k=1,\ldots,K)$ for $\mathbf P_0^\star$-almost every $x\in\mathcal X$. The propensity scores are known.

    \item \textup{(Outcome Moments)}
    $\Exp_{\mathbf P_0^\star}|Y|^{2+\delta}<\infty$ for some $\delta>0$.
\end{enumerate}
\end{assumption}

Define the \textit{reduced-form data} $\breve D \coloneqq (Z_1,\ldots,Z_K,X)$, where
\tightdisplay{
    Z_k
    \coloneqq
    \frac{Y \cdot \mathbf{1}\{T = k\}}{e_k(X)}, \qquad k=1,\ldots,K,
}
and $\breve{D} \in \breve{\mathcal{D}} \coloneqq \mathcal{Z}_1 \times \cdots \times \mathcal{Z}_K \times \mathcal{X}$.
Let $P_0$ denote the true reduced-form distribution of $\breve D$. Under Assumption~\ref{Assumption:multivalued treatment}, welfare is point-identified under any treatment rule $G=(G_1,\ldots,G_K) \in \mathscr G$,
\tightdisplay{
    W(P_0;G)
    \coloneqq
    \Exp_{P_0^\star}\!\left[
    \sum_{k=1}^K Y(k) \mathbf 1\{X\in G_k\}
    \right]
    =
    \sum_{k=1}^K
    \Exp_{P_0}\!\left[
    Z_k\mathbf 1\{X\in G_k\}
    \right].
}
NBPL places a Dirichlet process prior on the reduced-form distribution of $\breve D$: $P \sim \mathrm{DP}(\alpha)$.
For a generic reduced-form distribution $P$ and any $G=(G_1,\ldots,G_K)\in\mathscr G$, define
\tightdisplay{
    W(P;G)
    \coloneqq
    \sum_{k=1}^K
    W^k(P;G_k),
    \qquad
    W^k(P;G_k) \coloneqq 
    \Exp_P\!\left[
    Z_k\mathbf 1\{X\in G_k\}
    \right].
    \tag{C.2} \label{supplement:eq2}
}
Let
\tightdisplay{
    G^\star(P) \in \arg\max_{G\in\mathscr G}W(P;G), 
    \qquad
    W_{\mathscr G}^\star(P) \coloneqq \sup_{G\in\mathscr G}W(P;G).
}
The welfare regret of the optimal treatment rule selected under $P$ is
\tightdisplay{
    R_{\mathscr G}(P_0;P) \coloneqq W_{\mathscr G}^\star(P_0)-W(P_0;G^\star(P)).
}

The following Assumption~\ref{Assumption:multivalued base measure} is the multiple-treatment analogue of
Assumption~\ref{Assumption:base measure}.

\begin{assumption}
\label{Assumption:multivalued base measure}

The base measure $\alpha$ on $\breve{\mathcal D} \coloneqq \mathcal Z_1 \times \cdots \times\mathcal Z_K\times\mathcal X$
has finite total mass $|\alpha|\coloneqq \alpha(\breve{\mathcal D})<\infty$ and satisfies $\int_{\breve{\mathcal D}} |z_k|\, d\alpha(z_1,\ldots,z_K,x)<\infty$ for $k=1,\ldots,K$.

\end{assumption}

Theorem~\ref{Theorem:C1} shows that multiple treatments  inflate the posterior contraction rate in Theorem~\ref{Theorem1} by the number of treatment values $K$. Hence, the rate remains unchanged for fixed $K$.

\begin{theorem}
\label{Theorem:C1}
Suppose Assumptions \ref{Assumption:multivalued treatment}, \ref{Assumption:multivalued base measure}, and 3 hold. Let $\mathscr G$ be the policy class of $K$-partitions defined in~\eqref{supplement:eq1}. 
Let $\mathcal{\breve D}_n \coloneqq \{\breve D_i\}_{i=1}^n \stackrel{\mathrm{i.i.d.}}{\sim} P_0$ denote the observed sample, and let $\Pi(P \in \cdot\mid \mathcal{\breve D}_n)$ denote the Dirichlet process posterior $P \mid \mathcal{\breve D}_n \sim \mathrm{DP}(\alpha + \sum_{i=1}^n \delta_{\breve D_i})$. Then, for every sequence $C_n\to\infty$ (arbitrarily slowly), as $n \to \infty$,
\tightdisplay{
    \Pi\!\left(P:
    R_{\mathscr G}(P_0;P)
    \leq 
    C_n K\sqrt{\frac vn}
    \,\middle|\, \mathcal{\breve D}_n
    \right)
    \stackrel{P_0}{\to} 1.
    \tag{C.3} \label{supplement:eq3}
}
\end{theorem}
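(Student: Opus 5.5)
The proof will mirror that of Theorem~\ref{Theorem1}, after reducing the $K$-partition problem to $K$ separate VC classes of functions. For $k=1,\ldots,K$ define $\mathcal F_k\coloneqq\{z_k\mathbf 1\{x\in G\}:G\in\mathcal G\}$, viewed as functions of $\breve D$. Each $\mathcal F_k$ is the product of the fixed function $Z_k$ with indicators of a VC class of dimension $v$. Arguing as in \citet[Lemma~A.1]{kitagawa2018supp}, $\mathcal F_k$ is VC-subgraph with dimension at most $v$ (up to a constant) and has envelope $F_k=|Z_k|$. Assumption~\ref{Assumption:multivalued treatment}(c),(d) give $\Exp_{P_0}F_k^2\le\kappa^{-2}\Exp|Y|^2<\infty$, and Assumption~\ref{Assumption:multivalued base measure} gives $\int F_k\,d\alpha<\infty$.

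The first step is the analogue of Lemma~\ref{Lemma1}. Set $\rho_{\mathscr G}(Q_1,Q_2)\coloneqq\sup_{G\in\mathscr G}|W(Q_1;G)-W(Q_2;G)|$. The argument of Lemma~\ref{Lemma1} uses only the optimality of $G^\star(P)$ under $P$, so it gives $R_{\mathscr G}(P_0;P)\le 2\{\rho_{\mathscr G}(P_0,\mathbb P_n)+\rho_{\mathscr G}(\mathbb P_n,P)\}$. By \eqref{supplement:eq2} and the triangle inequality, and since each $G_k$ ranges within $\mathcal G$ (dropping the partition constraint only enlarges the supremum),
\[
\rho_{\mathscr G}(Q_1,Q_2)\le\sum_{k=1}^K\|Q_1-Q_2\|_{\mathcal F_k}.
\]
This is where the factor $K$ enters. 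It therefore suffices to show, for each fixed $k$, that $\|P_0-\mathbb P_n\|_{\mathcal F_k}\le (C_n/8)\sqrt{v/n}$ with $P_0$-probability tending to one, and that $\Pi(\|\mathbb P_n-P\|_{\mathcal F_k}>(C_n/8)\sqrt{v/n}\mid\breve{\mathcal D}_n)\to0$. Summing over $k$ then puts the total below $C_nK\sqrt{v/n}$ outside an event of vanishing probability. Because $K$ is fixed, a union bound over the $K$ classes costs nothing.

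For each $k$, the sampling term is handled exactly as in Lemma~\ref{Lemma2}, using a maximal inequality for VC classes with square-integrable envelope. The posterior term follows Steps~2--4 of the proof of Theorem~\ref{Theorem1} with $\mathcal F$ replaced by $\mathcal F_k$. Concretely, Lemma~\ref{Lemma3} supplies the representation of the posterior, which is valid on the product space $\breve{\mathcal D}$. Lemma~\ref{Lemma4} shows the prior part $T_{1n}$ is negligible. The multiplier process $T_{2n}$ is then bounded by chaining Lemmas~\ref{Lemma5}--\ref{Lemma8}, which gives $\limsup_n\Exp_\xi\|\nu_n\|_{\mathcal F_k}\le K'\sqrt v$ almost surely, followed by Markov's inequality.

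The main obstacle is confirming that Lemmas~\ref{Lemma2}, \ref{Lemma4}, and \ref{Lemma8} apply to $\mathcal F_k$ unchanged. Their hypotheses are only a VC-subgraph property and moment conditions on the envelope, both verified above, so I expect no new argument is needed. The one point to check carefully is the VC-subgraph dimension bound for $\{z_k\mathbf 1_G\}$. I would rederive it from the fact that its subgraphs are $\{(d,t): t<z_k,\ x\in G\}\cup\{(d,t): t<0,\ x\notin G\}$, which are Boolean combinations of sets from $\mathcal G$ with two fixed sets, so the dimension is $O(v)$.
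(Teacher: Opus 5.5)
Your proposal is correct and follows essentially the same route as the paper. The bound $\rho_{\mathscr G}\le\sum_{k}\|\cdot\|_{\mathcal F_k}$ reduces the problem to $K$ binary-type VC classes with envelopes $|Z_k|$; each is handled by the argument of Lemma~\ref{Lemma2} and by Steps~2--4 of the proof of Theorem~\ref{Theorem1}, with a union bound over the fixed $K$. The only differences from the paper are cosmetic: your per-class threshold is $C_n/8$ rather than $C_n/4$, and you verify the envelope moments and the VC-subgraph bound directly, where the paper cites \citet[Lemma~A.1]{kitagawa2018supp}.
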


\begin{proof}
For any class of measurable functions $\mathcal F$, let $Pf \coloneqq \Exp_{P}[f]$, $\|P\|_{\mathcal{F}} \coloneqq \sup_{f \in \mathcal{F}}|Pf|$, and $\|f\|_{P,r} \coloneqq (\Exp_{P}|f|^r)^{1/r}$ for $r>0$.

\textbf{Step 1.}
For $k=1,\ldots,K$, define the treatment-specific function class
\tightdisplay{
    \mathcal F_k
    \coloneqq
    \left\{
    f_{k,G}(\breve D)
    =
    Z_k\mathbf 1\{X\in G\}:
    G\in\mathcal G
    \right\}.
}
By Assumption~\ref{Assumption:VC} and \citet[Lemma A.1]{kitagawa2018supp}, each $\mathcal F_k$ is a VC-subgraph class with VC-dimension at most $v$. 
Moreover, $F_k(\breve D) = |Z_k|$ is an envelope for $\F_k$, and Assumptions~\ref{Assumption:multivalued treatment} and \ref{Assumption:multivalued base measure} ensure that $\| F_k \|_{P_0,2} < \infty$ and $\int_{\breve{\mathcal{D}}} F_k \, d\alpha < \infty$.

Let $P$ and $Q$ be any two reduced-form distributions of $\breve D$. Define
\tightdisplay{
    \rho_{\mathscr G}(P,Q) \coloneqq \sup_{G\in\mathscr G} |W(P;G)-W(Q;G)|.
}
By the triangle inequality,
\tightdisplay{
    \rho_{\mathscr G}(P,Q)
    =
    \sup_{G\in\mathscr G}
    \left|
    \sum_{k=1}^K
    (P-Q)\bigl[
    Z_k\mathbf 1\{X\in G_k\}
    \bigr]
    \right| 
    \leq
    \sum_{k=1}^K
    \sup_{G_k\in\mathcal G}
    \left|
    (P-Q)\bigl[
    Z_k \mathbf 1\{X\in G_k\}
    \bigr]
    \right|.
}
For each $k=1,\ldots,K$, let $\rho_{\mathcal{G}}^k(P,Q) \coloneqq \sup_{G_k \in \mathcal{G}} |W^k(P;G_k)-W^k(Q;G_k)|$.
Then,
\tightdisplay{
    \rho_{\mathscr G}(P,Q)
    \leq
    \sum_{k=1}^K
    \rho_{\mathcal{G}}^k(P,Q).
    \tag{C.4}\label{supplement:eq4}
}
Inequality~\eqref{supplement:eq4} produces $K$ treatment-specific terms, each indexed by the original class $\mathcal G$. In this sense, the multiple-treatment problem reduces to $K$ binary-treatment problems. Each term can therefore be controlled by the same argument used in the proof of Theorem~\ref{Theorem1}.

As in the proof of Lemma~\ref{Lemma1},
\tightdisplay{
    R_{\mathscr G}(P_0;P)
    \leq
    2\rho_{\mathscr G}(P_0,P)
    \leq
    2 \{ \rho_{\mathscr G}(P_0,\Prob_n)
    +
    \rho_{\mathscr G}(\Prob_n,P) \}.
    \tag{C.5} \label{supplement:eq5}
}
\eqref{supplement:eq4} implies $\rho_{\mathscr G}(P_0,\Prob_n) \leq \sum_{k=1}^K \rho_{\mathcal{G}}^k(P_0,\Prob_n)$ and $\rho_{\mathscr G}(\Prob_n, P) \leq \sum_{k=1}^K \rho_{\mathcal{G}}^k(\Prob_n, P)$. Thus,
\tightdisplay{
    R_{\mathscr G}(P_0;P) \leq 2 \sum_{k=1}^K \rho_{\mathcal{G}}^k(P_0,\Prob_n) + 
    2 \sum_{k=1}^K \rho_{\mathcal{G}}^k(\Prob_n, P)
    \tag{C.6} \label{supplement:eq6}.
}

\textbf{Step 2.}
For each $k$, $\rho_{\mathcal G}^k(P_0,\Prob_n)$ is controlled by the same VC maximal inequality used in the proof of Lemma~\ref{Lemma2}, with $\mathcal F$ replaced by $\mathcal F_k$:
\tightdisplay{
    \rho_{\mathcal{G}}^k(P_0,\Prob_n) = \|P_0 - \Prob_n\|_{\mathcal F_k} = O_{P_0} 
    \left(\sqrt{\frac{v}{n}} \right).
}

\textbf{Step 3.}
Define
\tightdisplay{
    \widetilde{\mathcal{E}}_n \coloneqq 
    \left\{ P: R_{\mathscr G}(P_0;P) > C_n K \sqrt{\frac{v}{n}} \right\}, 
    \qquad
    \widetilde{\mathcal A}_n
    \coloneqq
    \left\{ \{\breve D_i\}_{i=1}^n:
    \rho_{\mathscr G}(P_0,\Prob_n)
    \leq
    \frac{C_nK}{4}\sqrt{\frac vn}
    \right\}.
}
Also define, for each $k = 1, \ldots, K$,
\tightdisplay{
    \widetilde{\mathcal{A}}_{n,k} \coloneqq 
    \left\{ \{\breve D_i\}_{i=1}^n: \rho_{\mathcal{G}}^k(P_0,\Prob_n) \leq \frac{C_n}{4} \sqrt{\frac{v}{n}} \right\}.
}
Since $\bigcap_{k=1}^K \widetilde{\mathcal{A}}_{n,k} \subseteq \widetilde{\mathcal{A}}_n$ and $P_0^\infty(\widetilde{\mathcal A}_{n,k})\to1$ for each fixed $k$, we have $P_0^\infty(\widetilde{\mathcal A}_n)\to 1$ as $n \to \infty$. 
On $\widetilde{\mathcal A}_n$, the preceding regret bound~\eqref{supplement:eq5} implies that it suffices to show
\tightdisplay{
    \Pi
    \left(P:
    \rho_{\mathscr G}(\Prob_n,P)
    >
    \frac{C_nK}{4}\sqrt{\frac vn}
    \,\middle|\,
    \breve{\mathcal D}_n
    \right)
    =
    o_{P_0}(1).
}

\textbf{Step 4.}
By~\eqref{supplement:eq6} and the union bound, this follows if, for each $k=1,\ldots,K$,
\tightdisplay{
    \Pi
    \left(P:
    \rho_{\mathcal G}^k(\Prob_n,P)
    >
    \frac{C_n}{4}\sqrt{\frac vn}
    \,\middle|\,
    \breve{\mathcal D}_n
    \right)
    =
    o_{P_0}(1).
    \tag{C.7} \label{supplement:eq7}
}
The posterior bound~\eqref{supplement:eq7} for each $k$ is controlled by Steps~2--4 in the proof of Theorem~\ref{Theorem1}, with $\mathcal F$ replaced by $\mathcal F_k$. 
This completes the proof.
\end{proof}


\subsection{NBPL with Demeaned Outcomes}
\label{supplement:demeaned outcomes}

Following the demeaned EWM rule in \citet[Remark~2.3]{kitagawa2018should}, I introduce a demeaned version of NBPL. The motivation is that, although the NBPL rule $G^\star(P)\in\arg\max_{G\in\mathcal G} W(P;G)$ is invariant to positive rescaling of the outcome, it is generally not invariant to additive shifts. For instance, recoding a binary outcome from $Y \in \{0,1\}$ to $Y \in \{1,2\}$ may change the selected rule, which is undesirable in applied work. Demeaned NBPL eliminates this dependence on outcome coding: the selected rule is invariant to positive affine transformations of the outcome, $Y\mapsto aY+b$ with $a>0$.

To define the demeaned criterion, I use the augmented reduced-form data introduced in Section~\ref{supplement:alternative criteria}.\footnote{This augmentation is only notational: it makes the demeaning correction a functional of the posterior draw $P$, and the contraction-rate proof relies on the same empirical-process arguments as in Theorem~\ref{Theorem1}.} Let
\tightdisplay{
    \check D\coloneqq (Y,S_1,S_0,X),
    \qquad
    S_1\coloneqq \frac{T}{e(X)},
    \qquad
    S_0\coloneqq \frac{1-T}{1-e(X)},
}
where $\check{D} \in \check{\mathcal{D}} \coloneqq \mathcal Y \times \mathcal S_1\times\mathcal S_0\times\mathcal X$.
NBPL then places a Dirichlet process prior on the reduced-form distribution of $\check D$: $P\sim \mathrm{DP}(\alpha)$.

For a generic distribution $P$ of $\check D$, define
\tightalign{
    W(P;G)
    & \coloneqq
    \Exp_P
    \left[
    Y(S_1-S_0)\mathbf 1\{X\in G\}
    \right], \\
    A(P;G)
    & \coloneqq
    \Exp_P
    \left[
    (S_1-S_0)\mathbf 1\{X\in G\}
    \right].
}
The demeaned welfare criterion replaces $Y$ in $W(P;G)$ by its centered version $Y-\Exp_P[Y]$:
\tightdisplay{
    W^{\mathrm{dm}}(P;G)
    \coloneqq
    W(P;G)-\Exp_P[Y]\,A(P;G).
    \tag{C.8} \label{supplement:eq8}
}

Let
\tightdisplay{
    G_{\mathrm{dm}}^\star(P)
    \in
    \arg\max_{G\in\mathcal G} W^{\mathrm{dm}}(P;G),
    \qquad
    W_{\mathcal G, \mathrm{dm}}^\star(P)
    \coloneqq
    \sup_{G\in\mathcal G} W^{\mathrm{dm}}(P;G).
}
The corresponding posterior welfare regret under $P_0$ is
\tightdisplay{
    R_{\mathrm{dm}}(P_0;P)
    \coloneqq
    W_{\mathcal G}^\star(P_0)
    -
    W(P_0;G_{\mathrm{dm}}^\star(P)).
}
where $W_{\mathcal G}^\star(P_0) \coloneqq \sup_{G\in\mathcal G}W(P_0;G)$.

The following Assumption~\ref{Assumption:demeaned base measure} is the analogue of Assumption~\ref{Assumption:base measure} for demeaned outcomes.

\begin{assumption}
\label{Assumption:demeaned base measure}

The base measure $\alpha$ on $\check{\mathcal D} \coloneqq \mathcal Y \times \mathcal S_1\times\mathcal S_0\times\mathcal X$ has finite total mass $|\alpha| \coloneqq \alpha(\check{\mathcal D})<\infty$ and satisfies $\int_{\check{\mathcal D}} \{ |y| + |s_1| + |s_0| + |ys_1| + |ys_0| \} \,d\alpha(y,s_1,s_0,x)<\infty$. 

\end{assumption}

Theorem~\ref{Theorem:C2} shows that demeaning does not change the contraction rate in Theorem~\ref{Theorem1}.

\begin{theorem}
\label{Theorem:C2}
Suppose Assumptions~1, \ref{Assumption:demeaned base measure}, and 3 hold.
Let $\mathcal{\check{D}}_n \coloneqq \{\check{D}_i\}_{i=1}^n \stackrel{\mathrm{i.i.d.}}{\sim} P_0$
denote the observed sample, and let $\Pi(P \in \cdot\mid \mathcal{\check{D}}_n)$ denote the Dirichlet process posterior $P \mid \mathcal{\check{D}}_n \sim \mathrm{DP}(\alpha + \sum_{i=1}^n \delta_{\check{D}_i})$. Then, for every sequence $C_n\to\infty$ (arbitrarily slowly), as $n \to \infty$,
\tightdisplay{
    \Pi\!\left(P:
    R_{\mathrm{dm}}(P_0;P)
    \leq
    C_n\sqrt{\frac vn}
    \,\middle|\, \mathcal{\check{D}}_n
    \right)
    \stackrel{P_0}{\to}1.
    \tag{C.9} \label{supplement:eq9}
}
\end{theorem}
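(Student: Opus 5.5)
The plan is to reduce Theorem~\ref{Theorem:C2} to the same two-term structure used for Theorem~\ref{Theorem1}, now with a few extra function classes to absorb the demeaning correction. First, the Lemma~\ref{Lemma1}-type argument applies to any criterion. Since $G_{\mathrm{dm}}^\star(P)$ maximizes $W^{\mathrm{dm}}(P;\cdot)$, and since $W^{\mathrm{dm}}(P_0;G)=W(P_0;G)-\Exp_{P_0}[Y]A(P_0;G)$, we have $A(P_0;G)=\Exp_{P_0}[(S_1-S_0)\mathbf 1\{X\in G\}]=\mathbf P_0^\star(X\in G)-\mathbf P_0^\star(X\in G)=0$ for every $G$, because $\Exp[T/e(X)\mid X]=1$. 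Hence $W^{\mathrm{dm}}(P_0;\cdot)=W(P_0;\cdot)$ on $\mathcal G$. Therefore
\[
R_{\mathrm{dm}}(P_0;P)\le 2\sup_{G\in\mathcal G}\bigl|W^{\mathrm{dm}}(P;G)-W^{\mathrm{dm}}(P_0;G)\bigr|\eqqcolon 2\rho^{\mathrm{dm}}_{\mathcal G}(P_0,P),
\]
and by the triangle inequality this splits as $\rho^{\mathrm{dm}}_{\mathcal G}(P_0,\Prob_n)+\rho^{\mathrm{dm}}_{\mathcal G}(\Prob_n,P)$.

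Next, I linearize the demeaning term. For any $Q_1,Q_2$,
\[
W^{\mathrm{dm}}(Q_1;G)-W^{\mathrm{dm}}(Q_2;G)=(Q_1-Q_2)f^{W}_G-\Exp_{Q_1}[Y]\,(Q_1-Q_2)f^{A}_G-(Q_1-Q_2)[y]\,A(Q_2;G),
\]
where $f^W_G=y(s_1-s_0)\mathbf 1\{x\in G\}$ and $f^A_G=(s_1-s_0)\mathbf 1\{x\in G\}$. Thus $\rho^{\mathrm{dm}}_{\mathcal G}(Q_1,Q_2)\le \|Q_1-Q_2\|_{\mathcal F_W}+|\Exp_{Q_1}Y|\,\|Q_1-Q_2\|_{\mathcal F_A}+|(Q_1-Q_2)y|\sup_G|A(Q_2;G)|$. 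Each of $\mathcal F_W,\mathcal F_A$ is a VC-subgraph class with VC-dimension at most $v$ (by \citealp{kitagawa2018supp}, Lemma~A.1). Their envelopes are $|y|(s_1+s_0)$ and $s_1+s_0$, which are square-integrable under $P_0$ by strict overlap and Assumption~\ref{Assumption:DGP}(d), and $\alpha$-integrable by Assumption~\ref{Assumption:demeaned base measure}. The singleton class $\{y\}$ is trivially handled.

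For the sampling term, take $Q_1=\Prob_n$ and $Q_2=P_0$ (ordering the terms so that $A$ is evaluated at $P_0$, where it vanishes). Then $\rho^{\mathrm{dm}}_{\mathcal G}\le\|\Prob_n-P_0\|_{\mathcal F_W}+|\Prob_nY|\,\|\Prob_n-P_0\|_{\mathcal F_A}$. By the Lemma~\ref{Lemma2} argument this is $O_{P_0}(\sqrt{v/n})$, since $\Prob_nY=O_{P_0}(1)$, so the event $\mathcal A_n$ with $C_n/4$ has probability tending to one.

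For the posterior term, take $Q_1=P$ and $Q_2=\Prob_n$. Then I need posterior control of $\|P-\Prob_n\|_{\mathcal F_W}$ and $\|P-\Prob_n\|_{\mathcal F_A}$ at rate $C_n\sqrt{v/n}$, which is exactly Steps~2--4 of the proof of Theorem~\ref{Theorem1} applied to each class (Lemmas~\ref{Lemma3}--\ref{Lemma8}). I also need $|(P-\Prob_n)y|=O(n^{-1/2})$ in posterior probability, which follows from the same argument with a one-point class. Finally, the multiplying factors must be $O(1)$: $|\Exp_PY|\le|\Prob_nY|+|(P-\Prob_n)y|$ and $\sup_G|A(\Prob_n;G)|\le \Prob_n(S_1+S_0)$, both of which converge almost surely. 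I expect the main obstacle to be this random factor $\Exp_P[Y]$ multiplying an empirical-process term, since it is random under the posterior. I handle it by intersecting with the posterior event $\{|\Exp_PY|\le 2|\Exp_{P_0}Y|+1\}$, whose posterior probability tends to one by the prior-mass bound in Lemma~\ref{Lemma4} together with $\bar W_n\to1$. I then apply a union bound with thresholds $C_n\sqrt{v/n}/c$ for a suitable constant $c$. Since $\sqrt{v}\ge 1$ and $C_n\to\infty$, each piece is $o_{P_0}(1)$, which gives \eqref{supplement:eq9}.
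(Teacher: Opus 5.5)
Your proposal is correct and follows essentially the paper's route: the Lemma~\ref{Lemma1}-type bound for $R_{\mathrm{dm}}$, a bilinear expansion of the demeaning correction into the classes $\{f_G\}$, $\{m_G\}$ and the singleton $\{Y\}$, the Lemma~\ref{Lemma2} maximal inequality for the sampling term, and Steps~2--4 of the proof of Theorem~\ref{Theorem1} applied class by class for the posterior term. The differences are minor. You order the product so that the posterior-random factor is $\Exp_P[Y]$, which you control on a high-posterior-probability event; the paper instead bounds $|Pm_G|\le\kappa^{-1}$ and $|\Prob_nY|\le L_Y$ on $\mathcal B_n$. You also explicitly verify $A(P_0;\cdot)\equiv 0$, which the paper uses tacitly when it invokes the Lemma~\ref{Lemma1} argument for $R_{\mathrm{dm}}$.
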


\begin{proof}
For any class of measurable functions $\mathcal F$, let $Pf \coloneqq \Exp_{P}[f]$, $\|P\|_{\mathcal{F}} \coloneqq \sup_{f \in \mathcal{F}}|Pf|$, and $\|f\|_{P,r} \coloneqq (\Exp_{P}|f|^r)^{1/r}$ for $r>0$.

\textbf{Step 1.}
Let $P$ and $Q$ be any two reduced-form distributions of $\check{D}$. Define
\tightdisplay{
    \rho_{\mathcal G,\mathrm{dm}}(P,Q)
    \coloneqq
    \sup_{G\in\mathcal G}
    |W^{\mathrm{dm}}(P;G)-W^{\mathrm{dm}}(Q;G)|.
}
Define two function classes $\mathcal F \coloneqq \{f_{G}(\check D): G\in\mathcal G\}$ and $\mathcal M \coloneqq \{m_G(\check D): G\in\mathcal G\}$, where
\tightdisplay{
    f_G(\check D) \coloneqq Y(S_1 - S_0)\mathbf 1\{X\in G\},
    \qquad
    m_G(\check D) \coloneqq (S_1 - S_0) \mathbf{1}\{X \in G\}.
}
The demeaned criterion can then be written as $W^{\mathrm{dm}}(P;G) = P f_G - (PY)P m_G$. 
By Assumption~\ref{Assumption:VC} and \citet[Lemma~A.1]{kitagawa2018supp}, both $\mathcal{F}$ and $\mathcal{M}$ are VC-subgraph classes with VC-dimension at most $v$. An envelope for $\mathcal{F}$ is $F(\check D) = |Y|(|S_1| + |S_0|)$, which satisfies $\|F\|_{P_0,2}<\infty$ and $\int_{\check{D}} F \, d\alpha<\infty$ under Assumptions~1 and~\ref{Assumption:demeaned base measure}.
Moreover, $\mathcal M$ has the constant envelope $\kappa^{-1}$ under Assumption~\ref{Assumption:DGP}.

For any two reduced-form distributions $P$ and $Q$,
\tightalign{
    & |W^{\mathrm{dm}}(P;G)-W^{\mathrm{dm}}(Q;G)| \\
    = \,\, &
    \left|
    (P-Q)f_G
    -
    \{(PY)P m_G - (QY)Q m_G \}
    \right| \\
    \leq \,\, &
    |(P-Q) f_G | +
    |(P-Q) Y|\,|P m_G |
    +
    |QY|\,|(P-Q)m_G|.
}
Since $|m_G(\check D)|\leq \kappa^{-1}$ for every $G\in\mathcal G$, taking supremum over $G$ on both sides gives
\tightdisplay{
    \rho_{\mathcal G,\mathrm{dm}}(P,Q)
    \leq
    \|P-Q\|_{\mathcal F}
    +
    \kappa^{-1}|(P-Q)Y|
    +
    |QY|\,\|P-Q\|_{\mathcal M}.
    \tag{C.10}\label{supplement:eq10}
}

As in the proof of Lemma~\ref{Lemma1},
\tightdisplay{
    R_{\mathrm{dm}}(P_0;P)
    \leq
    2\rho_{\mathcal G,\mathrm{dm}}(P_0,P)
    \leq
    2\{\rho_{\mathcal G,\mathrm{dm}}(P_0,\Prob_n)
    +
    \rho_{\mathcal G,\mathrm{dm}}(\Prob_n,P)\}.
    \tag{C.11} \label{supplement:eq11}
}

\textbf{Step 2.}
To control the term $\rho_{\mathcal G,\mathrm{dm}}(P_0,\Prob_n)$, apply \eqref{supplement:eq10} with $P=P_0$ and $Q=\Prob_n$:
\tightdisplay{
    \rho_{\mathcal G,\mathrm{dm}}(P_0,\mathbb{P}_n)
    \leq
    \|P_0 - \mathbb{P}_n\|_{\mathcal F}
    +
    \kappa^{-1}|(P_0 - \mathbb{P}_n)Y|
    +
    |\mathbb{P}_n Y| \cdot \|P_0 - \mathbb{P}_n\|_{\mathcal M}.
}
The first term on the right-hand side is controlled by the maximal inequality used in the proof of Lemma~\ref{Lemma2}, applied to the class $\mathcal F$. The second term is controlled by the same argument applied to the singleton class $\{Y\}$. The third term is controlled by applying the same maximal inequality to the class $\mathcal M$, together with the law of large numbers: $|\Prob_nY| \leq \Prob_n|Y| = O_{P_0}(1)$. Therefore,
\tightdisplay{
    \rho_{\mathcal G,\mathrm{dm}}(P_0,\Prob_n)
    =
    O_{P_0}\!\left(\sqrt{\frac vn}\right).
    \tag{C.12}\label{supplement:eq12}
}

\textbf{Step 3.}
Define $\mathcal B_n \coloneqq \left\{ \{\check{D}_i\}_{i=1}^n:\Prob_n|Y|\leq L_Y \right\}$, where $L_Y<\infty$ is a constant such that $L_Y>P_0|Y|$.
By the law of large numbers, $P_0^\infty(\mathcal B_n)\to 1$.

Define $\widetilde{\mathcal E}_n \coloneqq \left\{ P: R_{\mathrm{dm}}(P_0;P)>C_n\sqrt{v/n} \right\}$ and
\tightdisplay{
    \widetilde{\mathcal A}_n
    \coloneqq
    \left\{
    \{\check D_i\}_{i=1}^n:
    \rho_{\mathcal G,\mathrm{dm}}(P_0,\Prob_n)
    \leq
    \frac{C_n}{4}\sqrt{\frac vn}
    \right\}
    \cap
    \mathcal B_n.
}
The first event in the intersection has probability tending to one by \eqref{supplement:eq12}, and $P_0^\infty(\mathcal B_n)\to 1$. Therefore, $P_0^\infty(\widetilde{\mathcal A}_n)\to 1$.
By the regret decomposition~\eqref{supplement:eq11}, it remains to show
\tightdisplay{
    \Pi\left(
    P:
    \rho_{\mathcal G,\mathrm{dm}}(\Prob_n,P)
    >
    \frac{C_n}{4}\sqrt{\frac vn}
    \,\middle|\, \check\D_n
    \right)
    =
    o_{P_0}(1).
    \tag{C.13} \label{supplement:eq13}
}

\textbf{Step 4.}
On $\widetilde{\mathcal A}_n$, the event $\mathcal B_n$ holds, so applying \eqref{supplement:eq10} with $P=P$ and $Q=\Prob_n$ gives
\tightdisplay{
    \rho_{\mathcal G,\mathrm{dm}}(\Prob_n,P)
    \leq
    \|P-\Prob_n\|_{\mathcal F}
    +
    \kappa^{-1}|(P-\Prob_n)Y|
    +
    L_Y\|P-\Prob_n\|_{\mathcal M}.
}
Therefore, on $\widetilde{\mathcal A}_n$,
\tightalign{
    \Pi\left(
    P:
    \rho_{\mathcal G,\mathrm{dm}}(\Prob_n,P)
    >
    \frac{C_n}{4}\sqrt{\frac vn}
    \,\middle|\, \check\D_n
    \right) 
    \leq
    & \,\, 
    \Pi\left(
    P:
    \|P-\Prob_n\|_{\mathcal F}
    >
    \frac{C_n}{12}\sqrt{\frac vn}
    \,\middle|\, \check\D_n
    \right) \\
    \qquad & +
    \Pi\left(
    P:
    |(P-\Prob_n)Y|
    >
    \frac{\kappa C_n}{12}\sqrt{\frac vn}
    \,\middle|\, \check\D_n
    \right) \\
    \qquad & +
    \Pi\left(
    P:
    \|P-\Prob_n\|_{\mathcal M}
    >
    \frac{C_n}{12L_Y}\sqrt{\frac vn}
    \,\middle|\, \check\D_n
    \right).
}
The first posterior probability on the right-hand side is $o_{P_0}(1)$ by Steps~2--4 in the proof of Theorem~\ref{Theorem1}, applied to the class $\mathcal F$. 
The second is $o_{P_0}(1)$ by the same argument applied to the singleton class $\{Y\}$, and the third is $o_{P_0}(1)$ by the same argument applied to the class $\mathcal M$. 
Since $\kappa>0$, $L_Y<\infty$, and $C_n\to\infty$, the threshold sequences in all three displays diverge. This
proves~\eqref{supplement:eq13}, thus completing the proof.
\end{proof}


\subsection{NBPL with a Capacity Constraint}
\label{supplement:capacity constraint}

Policy learning is often subject to budgetary or capacity constraints \citep[e.g.,][]{bhattacharya2012inferring}. This section extends NBPL to settings in which at most a fraction $c\in(0,1)$ of the target population can be treated. 
I maintain the binary-treatment setup of Section~\ref{NBPL:setup} in the main text and impose the constraint through a capacity-adjusted welfare criterion.
For implementation details, see \citet[Appendix~C.3]{kitagawa2018supp}.

Let
\tightdisplay{
    \Gamma(D)
    \coloneqq
    \frac{YT}{e(X)}
    -
    \frac{Y(1-T)}{1-e(X)}.
}
For a generic reduced-form distribution $P$, let $P_X(G)\coloneqq P(X\in G)$. Following \citet{kitagawa2018should}, define the capacity-constrained welfare under rule $G$ as
\tightdisplay{
    W_c(P;G)
    =
    \frac{c}{\max\{c,P_X(G)\}}
    W(P;G), \qquad 
    W(P;G) \coloneqq \Exp_P\!\left[
    \Gamma(D)\mathbf 1\{X\in G\}
    \right].
    \tag{C.14} \label{supplement:eq14}
}
Thus, if $P_X(G)\leq c$, all units selected by $G$ can be treated. If $P_X(G)>c$, only a fraction $c/P_X(G)$ of the units selected by $G$ is randomly chosen for treatment.

NBPL places a Dirichlet process prior on the reduced-form distribution $P$: $\Pi \sim \mathrm{DP}(\alpha)$. For a
posterior draw $P$, let
\tightdisplay{
    G_c^\star(P)
    \in
    \arg\max_{G\in\mathcal G} W_c(P;G).
}
The capacity-constrained posterior welfare regret under $P_0$ is
\tightdisplay{
    R_c(P_0;P)
    \coloneqq
    W_{c,\mathcal G}^\star(P_0)
    -
    W_c(P_0;G_c^\star(P)),
}
where $W_{c,\mathcal G}^\star(P_0) \coloneqq \sup_{G\in\mathcal G} W_c(P_0;G)$.

Theorem~\ref{Theorem:C3} shows that incorporating capacity constraints does not change the posterior contraction rate in Theorem~\ref{Theorem1}.

\begin{theorem}
\label{Theorem:C3}
Suppose Assumptions 1--3 hold. Let $c\in(0,1)$ be fixed.
Let $\mathcal{D}_n \coloneqq \{D_i\}_{i=1}^n \stackrel{\mathrm{i.i.d.}}{\sim} P_0$ denote the observed sample, and let $\Pi(P \in \cdot \mid \D_n)$ denote the Dirichlet process posterior $P \mid \mathcal{D}_n \sim \mathrm{DP}(\alpha + \sum_{i=1}^n \delta_{D_i})$.
Then, for every sequence $C_n\to\infty$ (arbitrarily slowly), as $n \to \infty$,
\tightdisplay{
    \Pi\!\left(P:
    R_c(P_0;P)
    \leq
    C_n\sqrt{\frac vn}
    \,\middle|\,\D_n
    \right)
    \stackrel{P_0}{\to} 1.
    \tag{C.15} \label{supplement:eq15}
}
\end{theorem}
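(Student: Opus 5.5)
The proof follows the template of Theorem~\ref{Theorem1} and Theorem~\ref{Theorem:C2}. It reduces the capacity-constrained regret to uniform deviations of two VC-indexed empirical functionals: the welfare map $G\mapsto W(P;G)$ and the treated-share map $G\mapsto P_X(G)$.

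\textbf{Lipschitz control of $W_c$.} First I would define $\rho_{c,\mathcal G}(P,Q)\coloneqq\sup_{G\in\mathcal G}|W_c(P;G)-W_c(Q;G)|$. As in Lemma~\ref{Lemma1}, since $G_c^\star(P)$ maximizes $W_c(P;\cdot)$,
\[
R_c(P_0;P)\le 2\rho_{c,\mathcal G}(P_0,P)\le 2\{\rho_{c,\mathcal G}(P_0,\Prob_n)+\rho_{c,\mathcal G}(\Prob_n,P)\}.
\]
Next, write $\phi_Q(G)\coloneqq c/\max\{c,Q_X(G)\}\in(0,1]$. The map $t\mapsto c/\max\{c,t\}$ is Lipschitz with constant $1/c$. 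Hence
\[
|W_c(P;G)-W_c(Q;G)|\le |(P-Q)f_G|+|W(Q;G)|\,c^{-1}|P_X(G)-Q_X(G)|,
\]
where the first term uses $\phi_P\le 1$. Also $|W(Q;G)|\le QF$ with $F=|Z_1|+|Z_0|$. Taking suprema gives
\[
\rho_{c,\mathcal G}(P,Q)\le \|P-Q\|_{\mathcal F}+c^{-1}(QF)\,\|P-Q\|_{\mathcal H},
\qquad \mathcal H\coloneqq\{\mathbf 1\{x\in G\}:G\in\mathcal G\}.
\]
The class $\mathcal H$ is a VC class of dimension $v$ with envelope $1$. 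The factor $QF$ plays the role that $|QY|$ plays in Theorem~\ref{Theorem:C2}.

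\textbf{Sampling term.} For $(P,Q)=(P_0,\Prob_n)$, I would use the bound with $Q=P_0$, so the prefactor is the deterministic $P_0F<\infty$. The maximal inequality from Lemma~\ref{Lemma2}, applied to $\mathcal F$ and to $\mathcal H$, then gives $\rho_{c,\mathcal G}(P_0,\Prob_n)=O_{P_0}(\sqrt{v/n})$. Define $\mathcal B_n\coloneqq\{\Prob_nF\le L\}$ for a constant $L>P_0F$. By the strong law of large numbers, $P_0^\infty(\mathcal B_n)\to1$. Intersecting the event $\{\rho_{c,\mathcal G}(P_0,\Prob_n)\le (C_n/4)\sqrt{v/n}\}$ with $\mathcal B_n$ produces a high-probability event $\widetilde{\mathcal A}_n$. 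Exactly as in Step~1 of Theorem~\ref{Theorem1}, it then suffices to show that
\[
\Pi\bigl(\rho_{c,\mathcal G}(\Prob_n,P)>(C_n/4)\sqrt{v/n}\mid\D_n\bigr)=o_{P_0}(1)
\]
on $\widetilde{\mathcal A}_n$.

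\textbf{Posterior term.} For the posterior term I would apply the inequality with $Q=\Prob_n$, which gives
\[
\rho_{c,\mathcal G}(\Prob_n,P)\le\|P-\Prob_n\|_{\mathcal F}+c^{-1}L\|P-\Prob_n\|_{\mathcal H}
\quad\text{on }\widetilde{\mathcal A}_n.
\]
A union bound splits the posterior probability into two pieces, with thresholds $(C_n/8)\sqrt{v/n}$ and $(cC_n/(8L))\sqrt{v/n}$. Each piece is $o_{P_0}(1)$ by Steps~2--4 of the proof of Theorem~\ref{Theorem1}. That argument uses the DP representation (Lemma~\ref{Lemma3}), negligibility of the prior component (Lemma~\ref{Lemma4}), and the multiplier bounds of Lemmas~\ref{Lemma5}--\ref{Lemma8}, applied once to $\mathcal F$ and once to $\mathcal H$. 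The envelope conditions hold for $\mathcal H$ trivially: $\int 1\,d\alpha=|\alpha|<\infty$ and the second moment is finite.

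\textbf{Main obstacle.} The main obstacle is the non-smooth, ratio-type form of $W_c$, which has to be handled by the Lipschitz linearization above. The delicate point is ensuring the multiplicative prefactor multiplying $\|P-Q\|_{\mathcal H}$ is controlled. I would place the potentially random prefactor on the fixed measure ($P_0$ or $\Prob_n$), not on the posterior draw $P$. That is why $Q$ is chosen as the non-posterior argument in each application, which avoids having to bound $PF$ under the posterior.
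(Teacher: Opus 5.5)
Your proposal is correct and follows the paper's proof essentially step for step. The shared steps are the $1/c$-Lipschitz linearization of $u\mapsto c/\max\{c,u\}$, which gives $\rho_{\mathcal G,c}(P,Q)\le\|P-Q\|_{\mathcal F}+c^{-1}(Q|\Gamma|)\,\|P-Q\|_{\mathcal H}$; taking $Q$ to be the non-posterior measure in each application; the event $\mathcal B_n$ bounding $\mathbb P_n|\Gamma|$; and the union bound that reduces everything to Steps~2--4 of Theorem~1 applied to $\mathcal F$ and $\mathcal H$. The only differences are cosmetic and do not affect the argument: you use the envelope $|Z_1|+|Z_0|$ instead of $|\Gamma|$ as the prefactor, and the uncentered indicator class instead of $\mathbf 1\{X\in G\}-1/2$.
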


\begin{proof}
For any class of measurable functions $\mathcal F$, let $Pf \coloneqq \Exp_{P}[f]$, $\|P\|_{\mathcal{F}} \coloneqq \sup_{f \in \mathcal{F}}|Pf|$, and $\|f\|_{P,r} \coloneqq (\Exp_{P}|f|^r)^{1/r}$ for $r>0$.

\textbf{Step 1.}
Let
\[
    a_c(u)\coloneqq \frac{c}{\max\{c,u\}},
    \qquad u\in[0,1],
\]
so that $W_c(P;G)=a_c(P_X(G))W(P;G)$. The function $a_c$ is bounded above by one and is $1/c$-Lipschitz continuous \citep[Theorem D.1]{kitagawa2018online}. Hence, for any two reduced-form distributions $P$ and
$Q$,
\tightalign{
    & |W_c(P;G)-W_c(Q;G)| \\
    =  \,\, &
    |a_c(P_X(G))W(P;G)-a_c(Q_X(G))W(Q;G)| \\
    \leq \,\, & a_c(P_X(G)) \cdot |W(P;G)-W(Q;G)| + W(Q;G) \cdot |a_c(P_X(G)) - a_c(Q_X(G))|
    \\
    \leq \,\, &
    |W(P;G)-W(Q;G)|
    +
    \frac{|W(Q;G)|}{c}|P_X(G)-Q_X(G)|.
}
Since $|W(Q;G)|\leq Q|\Gamma|$, it follows that
\tightdisplay{
    |W_c(P;G)-W_c(Q;G)|
    \leq
    |(P-Q)[\Gamma\mathbf 1\{X\in G\}]|
    +
    \frac{Q|\Gamma|}{c}
    |(P-Q)[\mathbf 1\{X\in G\}]|.
    \tag{C.16}\label{supplement:eq16}
}

\textbf{Step 2.}
Define two functions classes $\mathcal{F} \coloneqq \{f_G(D): G \in \mathcal{G}\}$ and $\mathcal{H} \coloneqq \{h_G(D): G \in \mathcal{G}\}$, where
\tightdisplay{
    f_G(D) \coloneqq \Gamma(D) \mathbf{1}\{X \in G\}, \qquad 
    h_G(D) \coloneqq \mathbf{1}\{X \in G\} - \frac{1}{2}.
}
By Assumption~\ref{Assumption:VC} and \citet[Lemma~A.1]{kitagawa2018supp}, both $\mathcal F$ and $\mathcal H$ are VC-subgraph classes with VC-dimension at most $v$. Under Assumptions~1--2, $\mathcal F$ has an envelope $F$ satisfying $\|F\|_{P_0,2} < \infty$ and $\int_{\mathcal{D}} F \, d\alpha < \infty$. $\mathcal H$ has the constant envelope $1/2$.

For two reduced-form distributions $P$ and $Q$, define
\tightalign{
    \rho_{\mathcal G,c}(P,Q)
    & \coloneqq
    \sup_{G\in\mathcal G}|W_c(P;G)-W_c(Q;G)|, \\
    \rho_{\mathcal G}^{W}(P,Q)
    & \coloneqq
    \sup_{G\in\mathcal G}
    |(P-Q)[\Gamma\mathbf 1\{X\in G\}]| = \|P-Q\|_{\mathcal{F}}, \\
    \rho_{\mathcal G}^{X}(P,Q)
    & \coloneqq
    \sup_{G\in\mathcal G}
    |(P-Q)[\mathbf 1\{X\in G\}]| = \|P-Q\|_{\mathcal{H}}.
}
Then~\eqref{supplement:eq16} implies
\tightdisplay{
    \rho_{\mathcal G,c}(P,Q)
    \leq
    \rho_{\mathcal G}^{W}(P,Q)
    +
    \frac{Q|\Gamma|}{c}\rho_{\mathcal G}^{X}(P,Q).
    \tag{C.17}\label{supplement:eq17}
}

As in the proof of Lemma~\ref{Lemma1},
\tightdisplay{
    R_c(P_0;P)
    \leq
    2\rho_{\mathcal G,c}(P_0,P)
    \leq
    2\{\rho_{\mathcal G,c}(P_0,\Prob_n)
    +
    \rho_{\mathcal G,c}(\Prob_n,P)\}.
    \tag{C.18}\label{supplement:eq18}
}
Therefore, it remains to control the two terms on the right-hand side of \eqref{supplement:eq18}.

\textbf{Step 3.}
To control the term $\rho_{\mathcal G,c}(P_0,\mathbb{P}_n)$, apply~\eqref{supplement:eq17} with $P=\Prob_n$ and $Q=P_0$:
\tightdisplay{
    \rho_{\mathcal G,c}(P_0,\Prob_n)
    \leq
    \|\Prob_n - P_0\|_{\mathcal{F}}
    +
    \frac{P_0|\Gamma|}{c}
    \|\Prob_n - P_0 \|_{\mathcal{H}}.
}
Since $P_0|\Gamma| < \infty$ by Assumption~\ref{Assumption:DGP}, applying the VC maximal inequality used in the proof of Lemma~\ref{Lemma2} separately to $\mathcal F$ and $\mathcal H$ yields
\tightdisplay{
    \rho_{\mathcal G,c}(P_0,\Prob_n)
    =
    O_{P_0}\!\left(\sqrt{\frac{v}{n}}\right).
    \tag{C.19}\label{supplement:eq19}
}

\textbf{Step 4.}
Define $\mathcal B_n \coloneqq \left\{ \{D_i\}_{i=1}^n: \Prob_n|\Gamma|\leq L_\Gamma \right\}$, where $L_\Gamma<\infty$ is a constant such that $L_\Gamma>P_0|\Gamma|$. 
By the law of large numbers, $P_0^\infty(\mathcal B_n)\to 1$.

Define $\widetilde{\mathcal{E}}_n \coloneqq \{P: R_c(P_0;P) > C_n \sqrt{v/n} \}$ and
\tightdisplay{
    \widetilde{\mathcal{A}}_n
    \coloneqq
    \left\{ \{D_i\}_{i=1}^n:
    \rho_{\mathcal G,c}(P_0,\Prob_n)
    \leq
    \frac{C_n}{4}\sqrt{\frac vn}
    \right\}
    \cap
    \mathcal B_n.
}
The first event of the intersection has probability tending to one by \eqref{supplement:eq19}. Hence, $P_0^\infty(\widetilde{\mathcal{A}}_n)\to 1$.
By the regret bound~\eqref{supplement:eq18}, it remains to show
\tightdisplay{
    \Pi\left(P:
    \rho_{\mathcal G,c}(\Prob_n,P)
    >
    \frac{C_n}{4} \sqrt{\frac vn}
    \,\middle|\,\D_n
    \right)
    =
    o_{P_0}(1).
    \tag{C.20}\label{supplement:eq20}
}

\textbf{Step 5.}
To control the term $\rho_{\mathcal G,c}(\mathbb{P}_n, P)$, apply~\eqref{supplement:eq17} with $P=P$ and $Q=\Prob_n$:
\tightdisplay{
    \rho_{\mathcal G,c}(\Prob_n,P)
    \leq
    \|\Prob_n - P \|_{\mathcal{F}}
    +
    \frac{\Prob_n|\Gamma|}{c}
    \|\Prob_n - P \|_{\mathcal{H}}.
}
On $\widetilde{\mathcal{A}}_n$, one has $\rho_{\mathcal G,c}(\Prob_n,P) \leq \|\Prob_n - P \|_{\mathcal{F}} + (L_{\Gamma} / c) \|\Prob_n - P \|_{\mathcal{H}}$.
Therefore, on $\widetilde{\mathcal{A}}_n$,
\tightalign{
    \Pi\left(P:
    \rho_{\mathcal G,c}(\Prob_n,P)
    >
    \frac{C_n}{4} \sqrt{\frac vn}
    \,\middle|\,\D_n
    \right) 
    \leq
    \,\, & \Pi\left(P:
    \|\Prob_n - P \|_{\mathcal{F}}
    >
    \frac{C_n}{8}\sqrt{\frac vn}
    \,\middle|\,\D_n
    \right) \\
    & +
    \Pi\left(P:
    \|\Prob_n - P \|_{\mathcal{H}}
    >
    \frac{cC_n}{8L_\Gamma}\sqrt{\frac vn}
    \,\middle|\,\D_n
    \right).
}
The first posterior probability on the right-hand side is $o_{P_0}(1)$ by Steps~2--4 in the proof of Theorem~\ref{Theorem1}, applied to the class $\mathcal F$. The second posterior probability is $o_{P_0}(1)$ by the same argument applied to the class $\mathcal H$. Since $L_\Gamma<\infty$ and $c>0$ are fixed, the sequence $cC_n/(8L_\Gamma) \to\infty$. Therefore, \eqref{supplement:eq20} follows, completing the proof.
\end{proof}

\section{Proofs of Lemmas}
\label{supplement:proofs of lemmas}

\begin{proof}[Proof of Lemma~1]

Let $G^\star(P_0)\in\arg\max_{G\in\G}W(P_0;G)$ and
$G^\star(P)\in\arg\max_{G\in\G}W(P;G)$. By optimality of $G^\star(P)$ under $P$,
\tightalign{
    R(P_0;P)
    &=
    W(P_0;G^\star(P_0))-W(P_0;G^\star(P)) \\
    &=
    \{W(P_0;G^\star(P_0))-W(P;G^\star(P_0))\}
    +
    \{W(P;G^\star(P_0))-W(P;G^\star(P))\} \\
    &\qquad
    +
    \{W(P;G^\star(P))-W(P_0;G^\star(P))\} \\
    &\leq
    2\sup_{G\in\G}|W(P_0;G)-W(P;G)| \eqqcolon 2 \rho_{\mathcal{G}}(P_0,P).
}
By the triangle inequality, for any probability measures $Q_1,Q_2,Q_3$,
\tightdisplay{
    \rho_{\G}(Q_1,Q_3)
    \leq
    \rho_{\G}(Q_1,Q_2)+\rho_{\G}(Q_2,Q_3).
}
Applying this with $(Q_1,Q_2,Q_3)=(P_0,\Prob_n,P)$ gives
\tightdisplay{
    R(P_0;P)
    \leq
    2\{\rho_{\G}(P_0,\Prob_n)+\rho_{\G}(\Prob_n,P)\}.
}
\end{proof}

\bigskip
\begin{proof}[Proof of Lemma~2]

Let
\tightdisplay{
    \mathcal F \coloneqq \{f_G(D):G\in\G\},
    \qquad
    f_G(D) =
    \left\{
    \frac{YT}{e(X)} -
    \frac{Y(1-T)}{1-e(X)}
    \right\}
    \mathbf 1\{X\in G\}.
}
Then $\rho_{\G}(P_0,\Prob_n) = \|\Prob_n-P_0\|_{\mathcal F}$.
By \citet[Lemma~A.1]{kitagawa2018should}, $\mathcal F$ is a VC-subgraph class with VC-dimension at most $v \coloneqq \mathrm{VC}(\G)$. Under Assumption~\ref{Assumption:DGP}, $\mathcal F$ has an envelope $F$ satisfying $\|F\|_{P_0,2} < \infty$. 
By the VC covering bound and the uniform-entropy
maximal inequality \citep[Theorems~2.6.7 and~2.14.1]{vaart2023weak}, there exists a constant $K<\infty$ depending on $P_0$, such that 
\tightdisplay{
    \Exp_{P_0}
    \big[
    \rho_{\G}(P_0,\Prob_n)
    \big]
    =
    \Exp_{P_0}
    \|\Prob_n-P_0\|_{\mathcal F}
    \leq
    K\sqrt{\frac v n}.
}
By Markov's inequality,
\tightdisplay{
    P_0^\infty(\mathcal A_n^c)
    =
    P_0^\infty\!\left(
    \rho_{\G}(P_0,\Prob_n)>\frac{C_n}{4}\sqrt{\frac v n}
    \right)
    \leq
    \frac{
    4\Exp_{P_0}[\rho_{\G}(P_0,\Prob_n)]
    }{
    C_n\sqrt{v/n}
    }
    \leq
    \frac{4K}{C_n}.
}
Since $C_n\to\infty$, $P_0^\infty(\mathcal A_n^c)\to 0$. Hence
$P_0^\infty(\mathcal A_n)\to 1$.
\end{proof}

\bigskip
\begin{proof}[Proof of Lemma~3]

The result follows from the normalized Gamma process representation of the
Dirichlet process; see, for example,
\citet[Theorem~14.37 and Proposition~G.10]{ghosal2017fundamentals}.
\end{proof}

\bigskip
\begin{proof}[Proof of Lemma~4]

Let $F$ be an envelope of $\mathcal F$. By the triangle inequality, for every $f\in\mathcal F$,
\[
    \sqrt n V_n
    \left|
    Qf-\frac{\sum_{i=1}^n W_i f(D_i)}{\sum_{i=1}^n W_i}
    \right|
    \leq
    \sqrt n V_n \cdot QF
    +
    \sqrt n V_n \max_{1\leq i\leq n}F(D_i).
\]
Taking the supremum over $f\in\mathcal F$ gives
\tightdisplay{
    \sqrt n V_n
    \sup_{f\in\mathcal F}
    \left|
    Qf-\frac{\sum_{i=1}^n W_i f(D_i)}{\sum_{i=1}^n W_i}
    \right|
    \leq
    \sqrt n V_n \cdot QF
    +
    \sqrt n V_n \max_{1\leq i\leq n}F(D_i).
}

It remains to show that, for $P_0^\infty$-almost every realization of $\{D_i\}_{i \geq 1}$, both terms on the right-hand side are $o_{\mathbb Q_n}(1)$. 
Since $V_n\sim\mathrm{Beta}(|\alpha|,n)$,
\tightdisplay{
    \Exp_{\mathbb Q_n}[V_n]
    =
    \frac{|\alpha|}{|\alpha|+n},
    \qquad
    \Var_{\mathbb Q_n}(V_n)
    =
    \frac{|\alpha|n}{(|\alpha|+n)^2(|\alpha|+n+1)}.
}
Hence $V_n=O_{\mathbb Q_n}(n^{-1})$ and
$\sqrt n V_n=o_{\mathbb Q_n}(1)$. Moreover, $\Exp_{\mathbb Q_n}[QF] = |\alpha|^{-1} \int F\,d \alpha < \infty$ \citep[Theorem 3;][]{ferguson1973bayesian},
so $QF=O_{\mathbb Q_n}(1)$. Therefore, $\sqrt n V_n \cdot QF=o_{\mathbb Q_n}(1)$.

For the second term, since $P_0 F^2 <\infty$, \citet[Lemma 5]{ray2021bernstein} implies
\tightdisplay{
    \frac{1}{\sqrt n}\max_{1\leq i\leq n}F(D_i)\to 0,
    \qquad
    P_0^\infty\text{-almost surely.}
}
Thus, for $P_0^\infty$-almost every realization of $\{D_i\}_{i \geq 1}$,
\tightdisplay{
    \sqrt n V_n \max_{1\leq i\leq n}F(D_i)
    =
    (nV_n)
    \left\{
    \frac{1}{\sqrt n}\max_{1\leq i\leq n}F(D_i)
    \right\}
    =
    O_{\mathbb Q_n}(1) \cdot o(1)
    =
    o_{\mathbb Q_n}(1).
}
Combining the two bounds proves the claim.
\end{proof}

\bigskip
\begin{proof}[Proof of Lemma~5]

By \citet[Lemma 3.7.7]{vaart2023weak}, for any $1 \leq n_0 \leq n$
\tightalign{
    \Exp_{\xi} \left\| \frac{1}{\sqrt{n}} \sum_{i=1}^n \xi_i Z_i \right\|_{\mathcal{F}} 
    \leq & \,\, 2(n_0-1) \frac{1}{n} \sum_{i=1}^n \left\| Z_i\right\|_{\mathcal{F}} \Exp_{\xi} \left( \max_{1 \leq i \leq n} \frac{|\xi_i|}{\sqrt{n}} \right) \\
    & + 
    2 \lVert \xi_1 \rVert_{2,1} \cdot \max_{n_0 \leq k \leq n} \Exp_{R} \left\lVert \frac{1}{\sqrt{k}} \sum_{i=n_0}^k Z_{R_i} \right\rVert_{\F}.
}
Setting $n_0=1$ completes the proof.
\end{proof}

\bigskip
\begin{proof}[Proof of Lemma~6]

By Hoeffding's inequality \citep[Theorem 4]{hoeffding1963probability}, the left-hand side of~\eqref{appendix:eq5} increases if the sample without replacement $(D_{R_1}, \ldots, D_{R_n})$ is replaced by a sample with replacement $(\widehat{D}_1, \ldots, \widehat{D}_n)$.


\end{proof}

\bigskip
\begin{proof}[Proof of Lemma~7]

The argument follows the Poissonization and multiplier step in the proof of \citet[Theorem~3.7.3]{vaart2023weak}. 
Fix $n \in \mathbb{N}$ and $1\leq k\leq n$, and condition on $(D_1,\ldots,D_n)$. Let $\{\widetilde N_i\}_{i \geq 1}$ be an i.i.d. sequence of symmetrized Poisson variables with parameter $k/(2n)$. Let $\{\varepsilon_i\}_{i \geq 1}$ be an i.i.d. sequence of Rademacher variables independent of $\{\widetilde N_i\}_{i\geq 1}$ and $\{D_i\}_{i\geq 1}$.

By the Poissonization inequality \citep[Lemma~3.7.6]{vaart2023weak},
\tightdisplay{
    \Exp_{\widehat D}
    \left\|
    \widehat{\mathbb G}_{n,k}
    \right\|_{\mathcal F}
    \leq
    4\Exp_{\widetilde N}
    \left\|
    \frac{1}{\sqrt k}
    \sum_{i=1}^n \widetilde N_i\delta_{D_i}
    \right\|_{\mathcal F}.
}
Since each $\widetilde N_i$ is symmetric and
$\varepsilon_i$ is a Rademacher variable independent of $\widetilde N_i$, 
$\widetilde N_i\stackrel{d}{=}\varepsilon_i|\widetilde N_i|$. Hence, on the right-hand side, $\widetilde N_i$ may be replaced by $\varepsilon_i|\widetilde N_i|$.
Applying the multiplier inequality 
(Lemma 5) with $\xi_i=|\widetilde N_i|$, $Z_i=\varepsilon_i\delta_{D_i}$, and $(D_1, \ldots, D_n)$ fixed gives
\tightdisplay{
    \Exp_{\widehat D}
    \left\|
    \widehat{\mathbb G}_{n,k}
    \right\|_{\mathcal F}
    \leq
    8 \sqrt{\frac n k}\,
    \|\widetilde N_1\|_{2,1}
    \max_{1\leq j\leq n}
    \Exp_{\varepsilon,R}
    \left\|
    \frac{1}{\sqrt j}
    \sum_{i=1}^j \varepsilon_i\delta_{D_{R_i}}
    \right\|_{\mathcal F},
}
where $R=(R_1,\ldots,R_n)$ is uniformly distributed over permutations of
$\{1,\ldots,n\}$ and is independent of the data and the Rademacher variables.
By \citet[Problem~3.7.3]{vaart2023weak},
$\sqrt{n/k}\,\|\widetilde N_1\|_{2,1}\leq 2\sqrt2$ for all $n,k$. Hence
\tightdisplay{
    \Exp_{\widehat D}
    \left\|
    \widehat{\mathbb G}_{n,k}
    \right\|_{\mathcal F}
    \leq
    16 \sqrt2
    \max_{1\leq j\leq n}
    \Exp_{\varepsilon,R}
    \left\|
    \frac{1}{\sqrt j}
    \sum_{i=1}^j \varepsilon_i\delta_{D_{R_i}}
    \right\|_{\mathcal F}.
}

Define $U_j \coloneqq \Exp_\varepsilon \left\| j^{-1/2} \sum_{i=1}^j \varepsilon_i\delta_{D_i} \right\|_{\mathcal F}$. Let $\mathcal S_n$ denote the $\sigma$-field generated by all measurable functions of $\{D_i\}_{i \geq 1}$ that are symmetric in their first $n$ coordinates.
By exchangeability,
\tightdisplay{
    \Exp_{\varepsilon,R}
    \left\|
    \frac{1}{\sqrt j}
    \sum_{i=1}^j \varepsilon_i\delta_{D_{R_i}}
    \right\|_{\mathcal F}
    =
    \Exp(U_j\mid \mathcal S_n).
}
Therefore,
\tightdisplay{
    \max_{1\leq k\leq n}
    \Exp_{\widehat D}
    \left\|
    \widehat{\mathbb G}_{n,k}
    \right\|_{\mathcal F}
    \leq
    16 \sqrt2\,
    \Exp\!\left(
    \sup_{j\geq 1}U_j
    \,\middle|\,\mathcal S_n
    \right).
}

By \citet[Corollary~2.9.9]{vaart2023weak}, the variable $\sup_{j\geq1}U_j$ is integrable.
The symmetric $\sigma$-fields $\mathcal S_n$ decrease to
$\mathcal S\coloneqq\bigcap_{n\geq1}\mathcal S_n$. Hence, by the reverse
martingale convergence theorem,
\tightdisplay{
    \Exp\!\left(
    \sup_{j\geq 1}U_j
    \,\middle|\,\mathcal S_n
    \right)
    \to
    \Exp\!\left(
    \sup_{j\geq 1}U_j
    \,\middle|\,\mathcal S
    \right),
    \qquad P_0^\infty\text{-a.s.}
}
By the Hewitt--Savage zero-one law, $\mathcal S$ is trivial, so $\Exp\!\left( \sup_{j\geq 1}U_j \,\middle|\,\mathcal S \right)  = \Exp\!\left(\sup_{j\geq 1}U_j\right)$.
Consequently,
\tightdisplay{
    \limsup_{n\to\infty}
    \max_{1\leq k\leq n}
    \Exp_{\widehat D}
    \left\|
    \widehat{\mathbb G}_{n,k}
    \right\|_{\mathcal F}
    \leq
    16 \sqrt2\,
    \Exp\!\left(\sup_{j\geq 1}U_j\right),
    \qquad P_0^\infty\text{-a.s.}
}
This proves the claim.
\end{proof}

\bigskip
\begin{proof}[Proof of Lemma~8]

Let $\F_0\coloneqq\F \cup \{0\}$. Then $\|\cdot\|_{\mathcal F_0}=\|\cdot\|_{\mathcal F}$ and $\mathrm{VC}(\mathcal F_0) \leq v + 1$. Since $\sqrt{v+1}\leq\sqrt{2v}$ for $v\geq1$, replacing $\F$ by $\F_0$ affects only the universal constants in the bounds below. Therefore, we may assume without loss of generality that $0\in\F$.

Fix $j\in\mathbb N$ and condition on $(D_1,\ldots,D_j)$. For $f\in\mathcal F$, define
\tightdisplay{
    X_j(f) \coloneqq j^{-1/2} \sum_{i=1}^j \varepsilon_i f(D_i).
}
By Hoeffding's inequality \citep[Lemma 2.2.8]{vaart2023weak}, $X_j$ has sub-Gaussian increments with respect to the
empirical $L_2(\Prob_j)$-semimetric
\tightdisplay{
    d_j(f,g)
    \coloneqq
    \|f-g\|_{\Prob_j,2}
    =
    \left\{
    \frac1j\sum_{i=1}^j(f(D_i)-g(D_i))^2
    \right\}^{1/2}.
}
That is, for every $x>0$,
\tightdisplay{
    \Prob_\varepsilon
    \left(
    |X_j(f)-X_j(g)| > x
    \right)
    \leq
    2\exp\!\left\{
    -\frac{x^2}{2d_j^2(f,g)}
    \right\}.
}

Since $0\in\mathcal F$ and $X_j(0)=0$, and since
$d_j(f,0)\leq \|F\|_{\Prob_j,2}$ for every $f\in\mathcal F$, the maximal inequality for sub-Gaussian processes
\citep[Corollary~2.2.9]{vaart2023weak} gives
\tightdisplay{
    U_j
    =
    \Exp_\varepsilon \left(\sup_{f\in\mathcal F}|X_j(f)-X_j(0)| \right)
    \leq
    K
    \int_0^{\|F\|_{\Prob_j,2}}
    \sqrt{\log N(\eta,\mathcal F,L_2(\Prob_j))}\,d\eta ,
}
where $K<\infty$ is a universal constant. 
If $\|F\|_{\Prob_j,2}=0$, then $U_j=0$ and the desired bound is immediate. Otherwise, changing variables $\eta=\zeta\|F\|_{\Prob_j,2}$ gives
\tightdisplay{
    U_j
    \leq
    K\|F\|_{\Prob_j,2}
    \int_0^1
    \sqrt{
    \log N(\zeta\|F\|_{\Prob_j,2},\mathcal F,L_2(\Prob_j))
    }\,d\zeta.
    \tag{D.1} \label{supplement:eq21}
}

By the VC covering-number bound for VC-subgraph classes
\citep[Theorem~2.6.7]{vaart2023weak}, since $\mathcal F$ has VC-dimension $v$,
\tightdisplay{
    N(\zeta\|F\|_{\Prob_j,2},\mathcal F,L_2(\Prob_j))
    \leq
    K_0 v(16e)^v
    \left(\frac1\zeta\right)^{2v},
    \qquad 0<\zeta<1,
}
for a universal constant $K_0<\infty$. Taking logarithms gives
\tightalign{
    \log N(\zeta\|F\|_{\Prob_j,2},\mathcal F,L_2(\Prob_j))
    &\leq
    \log K_0+\log v+v\log(16e)+2v\log(1/\zeta) \\
    &\leq
    C_0v\{1+\log(1/\zeta)\},
}
where $C_0<\infty$ is a universal constant; the last inequality uses $v\geq1$ and the uniform boundedness of $(\log K_0+\log v)/v$ over $v\geq1$. 
Moreover, $\log(1/\zeta)\geq0$ implies
\tightdisplay{
    \sqrt{C_0v\{1+\log(1/\zeta)\}}
    \leq
    C_1\sqrt v \left\{1+\sqrt{\log(1/\zeta)} \right\},
}
for the universal constant $C_1=\sqrt{C_0}$.
Substituting into the entropy integral \eqref{supplement:eq21} gives
\tightdisplay{
    U_j \leq
    KC_1\|F\|_{\Prob_j,2}\sqrt v
    \int_0^1
    \left\{1+\sqrt{\log(1/\zeta)} \right\} \,d\zeta.
}
Since $\int_0^1\sqrt{\log(1/\zeta)}\,d\zeta<\infty$, there exists a universal
constant $C_2<\infty$ such that
\tightdisplay{
    U_j \leq
    C_2\|F\|_{\Prob_j,2}\sqrt v.
}

Taking expectations over the data $(D_1, \ldots, D_j) \sim P_0^{\otimes j}$ and using Jensen's inequality,
\tightdisplay{
    \Exp U_j
    \leq
    C_2\sqrt v\,\Exp\|F\|_{\Prob_j,2}
    \leq
    C_2\sqrt v
    \left\{
    \Exp \left( \frac1j\sum_{i=1}^j F(D_i)^2 \right)
    \right\}^{1/2}
    =
    C_2\sqrt v\,\|F\|_{P_0,2}.
}
Therefore,
\tightdisplay{
    \sup_{j\geq1}\Exp U_j \leq C_2\|F\|_{P_0,2}\sqrt v.
    \tag{D.2} \label{supplement:eq22}
}

It remains to control $\Exp \big( \sup_{j\geq 1} U_j \big)$. 
Apply the maximal inequality for Rademacher averages over increasing sample sizes \citep[Corollary~2.9.9]{vaart2023weak} with $Z_i=\delta_{D_i}$ and $\xi_i = \varepsilon_i$. 
Since $\|Z_i\|_{\mathcal F} \leq F(D_i)$ and $P_0 F^2<\infty$, this gives
\tightdisplay{
    \Exp\!\left(\sup_{j\geq1}U_j\right)
    \leq
    C_3\sup_{j\geq1}\Exp U_j
    +
    C_3\|F\|_{P_0,2},
}
for a universal constant $C_3<\infty$. Combining this bound with \eqref{supplement:eq22} yields
\tightdisplay{
    \Exp\!\left(\sup_{j\geq1}U_j\right)
    \leq
    C\sqrt v,
}
where $C<\infty$ depends on $P_0$ only through $\|F\|_{P_0,2}$. This proves the claim.
\end{proof}

\end{document}